\def\fullversion{1}   
\newcommand{\fixnewfont}[6]{%
  \renewcommand#1{\fontsize{#5}{#6}\usefont{\encodingdefault}{#2}{#3}{#4}}%
}
\fixnewfont{\secfnt}{ptm}{b}{n}{12}{14}
\fixnewfont{\secit}{ptm}{b}{it}{12}{14}
\fixnewfont{\subsecfnt}{ptm}{m}{it}{11}{14}
\fixnewfont{\subsecit}{ptm}{b}{it}{11}{14}
\fixnewfont{\ttlfnt}{phv}{b}{n}{18}{20}
\fixnewfont{\ttlit}{phv}{b}{sl}{18}{20}
\fixnewfont{\subttlfnt}{phv}{m}{n}{14}{20}
\fixnewfont{\subttlit}{phv}{m}{sl}{14}{20}
\fixnewfont{\subttlbf}{phv}{b}{n}{14}{20}
\fixnewfont{\aufnt}{phv}{m}{n}{12}{16}
\fixnewfont{\auit}{phv}{m}{sl}{12}{16}
\fixnewfont{\affaddr}{phv}{m}{n}{10}{12}
\fixnewfont{\affaddrit}{phv}{m}{sl}{10}{12}
\fixnewfont{\eaddfnt}{phv}{m}{n}{12}{14}
\fixnewfont{\ixpt}{ptm}{m}{n}{9}{11}
\fixnewfont{\confname}{ptm}{m}{it}{8}{10}
\fixnewfont{\crnotice}{ptm}{m}{n}{8}{10}
\fixnewfont{\ninept}{ptm}{m}{n}{9}{10.5}
\newcommand{\Omit}[1]{}
\declaretheorem[name=Theorem,numberwithin=section]{theorem}
\newtheorem{itheorem}{Summary Theorem}
\declaretheorem[name=Lemma,numberwithin=section]{lemma}
\newtheorem{corollary}{Corollary}
\declaretheorem[name=Observation,numberwithin=section]{observation}
\newcommand{\bigO}[1]{O\left(#1\right)}
\newcommand{\bigOmega}[1]{\Omega\left(#1\right)}
\newcommand{\bigTheta}[1]{\Theta\left(#1\right)}
\DeclareMathOperator{\E}{\mathbb{E}}   
\DeclareMathAlphabet{\mathbbold}{U}{bbold}{m}{n}
\newcommand{\1}{\mathbbold{1}}           
\newfont{\mycrnotice}{ptmr8t at 7pt}
\newfont{\myconfname}{ptmri8t at 7pt}
\let\crnotice\mycrnotice%
\let\confname\myconfname%
\begin{document}
\title{$\ell_p$ Testing and Learning of Discrete Distributions}
\numberofauthors{1}
\author{
\alignauthor Bo Waggoner \\
       \affaddr{Harvard} \\
       \email{bwaggoner@fas.harvard.edu}
} 
\maketitle

\begin{abstract}
The classic problems of testing uniformity of and learning a discrete distribution, given access to independent samples from it, are examined under general $\ell_p$ metrics. The intuitions and results often contrast with the classic $\ell_1$ case. For $p > 1$, we can learn and test with a number of samples that is independent of the support size of the distribution: With an $\ell_p$ tolerance $\epsilon$, $O(\max\{ \sqrt{1/\epsilon^q}, 1/\epsilon^2 \})$ samples suffice for testing uniformity and $O(\max\{ 1/\epsilon^q, 1/\epsilon^2\})$ samples suffice for learning, where $q=p/(p-1)$ is the conjugate of $p$. As this parallels the intuition that $O(\sqrt{n})$ and $O(n)$ samples suffice for the $\ell_1$ case, it seems that $1/\epsilon^q$ acts as an upper bound on the ``apparent'' support size.

For some $\ell_p$ metrics, uniformity testing becomes easier over larger supports: a 6-sided die requires fewer trials to test for fairness than a 2-sided coin, and a card-shuffler requires fewer trials than the die. In fact, this inverse dependence on support size holds if and only if $p > \frac{4}{3}$. The uniformity testing algorithm simply thresholds the number of ``collisions'' or ``coincidences'' and has an optimal sample complexity up to constant factors for all $1 \leq p \leq 2$. Another algorithm gives order-optimal sample complexity for $\ell_{\infty}$ uniformity testing. Meanwhile, the most natural learning algorithm is shown to have order-optimal sample complexity for all $\ell_p$ metrics.

The author thanks Cl\'{e}ment Canonne for discussions and contributions to this work.

\ifnum\fullversion=1
This is the full version of the paper appearing at ITCS 2015.
\fi
\end{abstract}

\category{F.2.0}{Analysis of Algorithms and Problem Complexity}{general}
\category{G.3}{Probability and Statistics}{probabilistic algorithms}

\terms{Algorithms, Theory}

\keywords{uniformity testing; property testing; learning; discrete distributions; lp norms}

\section{Introduction}
Given independent samples from a distribution, what we can say about it? This question underlies a broad line of work in statistics and computer science. Specifically, we would like algorithms that, given a small number of samples, can test whether some property of the distribution holds or can learn some attribute of the distribution.

This paper considers two natural and classic examples. \emph{Uniformity testing} asks us to decide, based on the samples we have drawn, whether the distribution is uniform over a domain of size $n$, or whether it is ``$\epsilon$-far'' from uniform according to some metric. \emph{Distribution learning} asks that, given our samples, we output a sketch or estimate that is within a distance $\epsilon$ of the true distribution. For both problems, we would like to be correct except with some constant probability of failure (\emph{e.g.} $\frac{1}{3}$). The question studied is the number of independent samples required to solve these problems.

In practical applications we might imagine, such as a web company wishing to quickly test or estimate the distribution of search keywords in a given day, the motivating goal is to formally guarantee good results while requiring as few samples as possible. Under the standard $\ell_1$ distance metric (which is essentially equivalent to total variation distance -- we will use the term $\ell_1$ only in this paper), the question of uniformity testing over large domains was considered by Paninski~\cite{paninski2008coincidence}, showing that $\bigTheta{\frac{\sqrt{n}}{\epsilon^2}}$ samples are necessary and sufficient for testing uniformity on support size $n$, and it is known by ``folklore'' that $\bigTheta{\frac{n}{\epsilon^2}}$ samples are necessary and sufficient for learning. Thus, these questions are settled\footnote{\cite{paninski2008coincidence} focused on the regime where support size is very large, so order-optimal $\ell_1$ uniformity testing for the case of smaller $n$ may have been technically open prior to this work.} (up to constant factors) if we are only interested in $\ell_1$ distance.

However, in testing and learning applications, we may be interested in other choices of metric than $\ell_1$. And more theoretically, we might wonder whether the known $\ell_1$ bounds capture all of the important intuitions about the uniformity testing and distribution learning problems. Finally, we might like to understand our approaches for the $\ell_1$ metric in a broader context or seek new techniques. This paper addresses these goals via $\ell_p$ metrics.

\subsection{Motivations for $\ell_p$ Metrics}
In the survey ``Taming Big Probability Distributions''~\cite{rubinfeld2012taming}, Rubinfeld notes that even sublinear bounds such as the above $\bigTheta{\frac{\sqrt{n}}{\epsilon^2}}$ may still depend unacceptably on $n$, the support size. If we do not have enough samples, Rubinfeld suggests possible avenues such as assuming that the distribution in question has some very nice property, \emph{e.g.} monotonicity, or assuming that the algorithm has the power to make other types of queries.

However, it is still possible to ask what can be done without such assumptions. One answer is to consider what we can say about our data under other measures of distance than the $\ell_1$ distance. Do fewer samples suffice to draw conclusions? A primary implication of this paper's results is that this approach does succeed under general $\ell_p$ metrics. The $\ell_p$ distance between two probability distributions $A,B \in \mathbb{R}^n$ for any $p \geq 1$, where $A_i$ is the probability of drawing coordinate $i$ from distribution $A$, is the $\ell_p$ norm of the vector of differences in probabilities:
 \[ \|A-B\|_p = \left(\sum_{i=1}^n |A_i - B_i|^p \right)^{1/p} . \]
The $\ell_{\infty}$ distance is the largest difference of any coordinate, \emph{i.e.} $\|A-B\|_{\infty} = \max_i |A_i - B_i|$.

Unlike the $\ell_1$ case, it will turn out that for $p > 1$, we can draw conclusions about our data with a number of samples that is \emph{independent of $n$} and depends only on the desired error tolerance $\epsilon$. We also find smaller dependences on the support size $n$; in fact, for uniformity testing we find sometimes (perhaps counterintuitively) that there is an \emph{inverse} dependence on $n$. The upshot is that, if we have few samples, we may not be able to confidently solve an $\ell_1$ testing or learning problem, but we may have enough data to draw conclusions about, say, $\ell_{1.5}$ distance. This may also be useful in saying something about the $\ell_1$ case: If the true distribution $A$ has small $\ell_{1.5}$ distance from our estimate $\hat{A}$, yet actually does have large $\ell_1$ distance from $\hat{A}$, then it must have a certain shape (\emph{e.g.} large support with many ``light hitters'').\footnote{I thank the anonymous reviewers for suggestions and comments regarding this motivation, including the $\ell_{1.5}$ example.}

Thus, this is the first and primary motivation for the study of $\ell_p$ metrics: to be able to draw conclusions with few samples but without making assumptions.

A second motivation is to understand learning and testing under other $\ell_p$ metrics for their own sake. In particular, the $\ell_2$ and $\ell_{\infty}$ cases might be considered important or fundamental. However, even these are not always well understood. For instance, ``common knowledge'' says that $\bigTheta{\frac{1}{\epsilon^2}}$ samples are required to determine if one side of a coin is $\epsilon$-more likely to come up than it should be; one might naively think that the same number of trials are required to test if any card is $\epsilon$-more likely to be top in a shuffle of a sorted deck. But the latter can be far less, as small as $\bigTheta{\frac{1}{\epsilon}}$ (depending on the relationship of $\epsilon$ to the support size), so a large improvement is possible.

Other $\ell_p$ norms can also be of interest when different features of the distribution are of interest. These norms trade off between measuring the \emph{tail} of the distribution ($p=1$ measures the total deviation even if it consists of many tiny pieces) and measuring the \emph{heavy} portion of the distribution ($p=\infty$ measures only the single largest difference and ignores the others). Thus, an application that needs to strike a balance may find that it is best to test or estimate the distribution under the particular $p$ that optimizes some tradeoff.

General $\ell_p$ norms, and especially $\ell_2$ and $\ell_{\infty}$, also can have immediate applications toward testing and learning other properties. For instance, \cite{batu2013testing} developed and used an $\ell_2$ tester as a black box in order to test the $\ell_1$ distance between two distributions. Utilizing a better $\ell_2$ tester (for instance, one immediately derived from the learner in this paper) leads to an immediate improvement in the samples required by their algorithm for the $\ell_1$ problem.\footnote{Further improvement for this problem is achieved in \cite{chan2014optimal}.}

A third motivation for $\ell_p$ testing and learning, beyond drawing conclusions from less data and independent interest/use, is to develop a deeper understanding of $\ell_p$ spaces and norms in relation to testing and learning problems. Perhaps techniques or ideas developed for addressing these problems can lead to more simple, general, and/or sharp approaches in the special $\ell_1$ case. More broadly, learning or sketching general $\ell_p$ vectors have many important applications in settings such as machine learning (\emph{e.g.} \cite{kloft2011lp}), are of independent interest in settings such as streaming and sketching (\emph{e.g.} \cite{indyk2006stable}), and are a useful tool for estimating other quantities (\emph{e.g.} \cite{cormode2003comparing}). Improved understandings of $\ell_p$ questions have been used in the past to shed new light on well-studied $\ell_1$ problems \cite{lee2004embedding}. Thus, studying $\ell_p$ norms in the context of learning and testing distributions may provide the opportunity to apply, refine, or develop techniques relevant to these areas.

\subsection{Organization}
The next section summarizes the results and describes some of the key intuitions/conceptual takeaways from this work. Then, we will describe the results and techniques for the uniformity testing problem, and then the learning problem. We then conclude by discussing the broader context, prior work, and future work.

Most proofs are omitted in the body of the paper (though sketches are usually provided).
\ifnum\fullversion=1
There is attached an appendix containing all proofs.
\else
The full version of the paper contains an appendix with all proofs.
\fi

\section{Summary and Key Themes} \label{section:summary}
At a technical level, this paper proves upper and lower bounds for number of samples required for testing uniformity and learning for $\ell_p$ metrics. These problems are formally defined as follows. For each problem, we are given i.i.d. samples from a distribution $A$ on support size $n$. The algorithm must specify the number of samples $m$ to draw and satisfy the stated guarantees.

\textbf{Uniformity testing:} If $A = U_n$, the uniform distribution on support size $n$, then output ``uniform''. If $\|A-U_n\|_p \geq \epsilon$, then output ``not uniform''. In each case, the output must be correct except with some constant failure probability $\delta$ (\emph{e.g.} $\delta = \frac{1}{3}$).

\textbf{Learning:} Output a distribution $\hat{A}$ satisfying that $\|A-\hat{A}\|_p \leq \epsilon$. This condition must be satisfied except with some constant failure probability $\delta$ (\emph{e.g.} $\delta = \frac{1}{3}$).

In both cases, the algorithm is given $p,n,\epsilon,\delta$.

\begin{itheorem} For the problems of testing uniformity of and learning a distribution, the number of samples necessary and sufficient satisfy, up to constant factors depending on $p$ and $\delta$, the bounds in Table \ref{tab:results-summary}.

In particular, for each fixed $\ell_p$ metric and failure probability $\delta$, the upper and lower bounds match up to a constant factor for distribution learning for all parameters and for uniformity testing when $1 \leq p \leq 2$, when $p = \infty$, and when $p > 2$ and $n$ is ``large'' ($n \geq \frac{1}{\epsilon^2}$).
\end{itheorem}
Table \ref{tab:results-summary} is intended as a reference and summary; the reader can safely skip it and read on for a description and explanation of the key themes and results, after which (it is hoped) Table \ref{tab:results-summary} will be more comprehensible.

Later in the paper, we give more specific theorems containing (small) explicit constant factors for our algorithms.

Some of these bounds are new and employ new techniques, while others are either already known or can be deduced quickly from known bounds; discussion focuses on the novel aspects of these results and Section \ref{section:related-work} describes the relationship to prior work.

The remainder of this section is devoted to highlighting the most important themes and conceptually important or surprising results (in the author's opinion). The following sections detail the techniques and results for the uniformity testing and learning problems respectively.

\begin{table}
{ 
\renewcommand{\arraystretch}{2.2}
\renewcommand{\tabcolsep}{10pt}
\hrule
\vspace{1pt}
\hrule
\vspace{10pt}
\textbf{Learning for $1 \leq p \leq 2$:}
\vspace{1pt}

\hspace{10pt}   \begin{tabular}{p{52pt}|cc}
   regime & $n \leq \frac{1}{\epsilon^q}$             & $n \geq \frac{1}{\epsilon^q}$  \\
   \hline
   necessary and sufficient  & {\Large $\frac{n}{(n^{1/q}\epsilon)^2}$}  & {\Large $\frac{1}{\epsilon^q}$}  \\
   \end{tabular}

\vspace{20pt}
\textbf{Uniformity testing for $1 \leq p \leq 2$:}
\vspace{1pt}

\hspace{10pt}   \begin{tabular}{p{52pt}|cc}
   regime & $n \leq \frac{1}{\epsilon^q}$                    & $n \geq \frac{1}{\epsilon^q}$  \\
   \hline
   necessary and sufficient  & {\Large $\frac{\sqrt{n}}{(n^{1/q}\epsilon)^2}$}  & {\Large $\sqrt{\frac{1}{\epsilon^q}}$}
   \end{tabular}

\vspace{20pt}
\textbf{Learning for $2 \leq p \leq \infty$:}\\
$~~~~~~~$ $\frac{1}{\epsilon^2}$ (necessary and sufficient, all regimes).

\vspace{20pt}
\textbf{Uniformity testing for $p = \infty$:}
\vspace{1pt}

\hspace{10pt}   \begin{tabular}{p{52pt}|ccc}
   regime & $\bigTheta{\frac{n}{\ln(n)}} \leq \frac{1}{\epsilon}$  & $\frac{1}{\epsilon} \leq \bigTheta{\frac{n}{\ln(n)}}$  \\
   \hline
   necessary and sufficient  & {\Large $\frac{\ln(n)}{n \epsilon^2}$} & {\Large $\frac{1}{\epsilon}$}  \\
   \end{tabular}

\vspace{20pt}
\textbf{Uniformity testing for $2 < p < \infty$:}
\vspace{1pt}

\renewcommand{\tabcolsep}{7pt}
\hspace{10pt}   \begin{tabular}{p{32pt}|cp{60pt}c}
   regime & $\bigTheta{\frac{n}{\ln(n)}} \leq \frac{1}{\epsilon}$  & $\frac{1}{\epsilon} \leq \bigTheta{\frac{n}{\ln(n)}}$, $n \leq \frac{1}{\epsilon^2}$  & $n \geq \frac{1}{\epsilon^2}$  \\
   \hline
   necessary   & {\Large $\frac{\ln(n)}{n \epsilon^2}$}  & {\Large $\frac{1}{\epsilon}$}  & {\Large $\frac{1}{\epsilon}$}  \\
   sufficient  & {\Large $\frac{1}{\sqrt{n}\epsilon^2}$} & {\Large $\frac{1}{\sqrt{n}\epsilon^2}$} & {\Large $\frac{1}{\epsilon}$}  
   \end{tabular}
\vspace{5pt}
\hrule
\vspace{1pt}
\hrule
\vspace{8pt}
} 
\caption[]{\textbf{Results summary.} In each problem, we are given independent samples from a distribution on support size $n$. Each entry in the tables is the number of samples drawn necessary and/or sufficient, up to constant factors depending only on $p$ and the fixed probability of failure. Throughout the paper, $q$ is the H\"{o}lder conjugagte of $p$, with $q = \frac{p}{p-1}$ (and $q = \infty$ for $p=1$).

In \emph{uniformity testing}, we must decide whether the distribution is $U_n$, the uniform distribution on support size $n$, or is $\ell_p$ distance at least $\epsilon$ from $U_n$. \cite{paninski2008coincidence} gave the optimal upper and lower bound in the case $p=1$ (with unknown constants) for large $n$; other results are new to my knowledge.

In \emph{learning}, we must output a distribution at $\ell_p$ distance at most $\epsilon$ from the given distribution, which has support size at most $n$. Optimal upper and lower bounds for learning in the cases $p=1$, $2$, and $\infty$ seem to the author to be all previously known as folklore (certainly for $\ell_1$ and $\ell_{\infty}$); others are new to my knowledge.}
\label{tab:results-summary}
\end{table}

\subsection{Fixed bounds for large $n$ regimes}
A primary theme of the results is the intuition behind $\ell_p$ testing and learning in the case where the support size $n$ is large. In $\ell_p$ spaces for $p > 1$, we can achieve upper bounds for testing and learning that are independent of $n$.
\begin{itheorem} For a fixed $p > 1$, let $q$ be the H\"{o}lder conjugate\footnote{Note that $1$ and $\infty$ are considered conjugates. This paper will also use math with infinity, so for instance, when $q=\infty$, $n^{1/q} = 1$ and it is never the case that $n \leq \frac{1}{\epsilon^q}$.} of $p$ with $\frac{1}{p}+\frac{1}{q}=1$. Let $n^* = 1/\epsilon^q$. Then $\bigO{\max\left\{\sqrt{n^*} ~,~ \frac{1}{\epsilon^2}\right\}}$ samples are sufficient for testing uniformity and $\bigO{\max\left\{n^* ~,~ \frac{1}{\epsilon^2}\right\}}$ are sufficient for learning. Furthermore, for $1 < p \leq 2$, when the support size $n$ exceeds $n^*$, then $\bigTheta{\sqrt{n^*}}$ and $\bigTheta{n^*}$ respectively are necessary and sufficient.
\end{itheorem}
Intuitively, particularly for $1 < p \leq 2$, we can separate into ``large $n$'' and ``small $n$'' regimes\footnote{For $p \geq 2$, this separation still makes sense in certain ways (see Observations \ref{obs:thin} and \ref{obs:approx-support} below) but does not appear in the sample complexity bounds in this paper.}, where the divider is $n^* = \frac{1}{\epsilon^q}$. In the small $n$ regime, tight bounds depend on $n$, but in the large $n$ regime where $n \geq n^*$, the number of samples is $\bigTheta{n^*}$ for learning and $\bigTheta{\sqrt{n^*}}$ for uniformity testing. This suggests the intuition that, in $\ell_p$ space with tolerance $\epsilon$, distributions' ``apparent'' support sizes are bounded by $n^* = \frac{1}{\epsilon^q}$. We next make two observations that align with this perspective, for purposes of intuition.

\begin{restatable}{observation}{obsthin}
\label{obs:thin}
Let $1 < p$ and $q=\frac{p}{p-1}$. If the distribution $A$ is ``thin'' in that $\max_i A_i \leq \epsilon^{q}$, then $\|A\|_p \leq \epsilon$. In particular, if both distributions $A$ and $B$ are thin, then even if they are completely disjoint,
 \[ \|A-B\|_p \leq \|A\|_p + \|B\|_p \leq 2\epsilon. \]
\end{restatable}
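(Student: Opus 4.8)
The plan is to reduce everything to the elementary inequality $\|A\|_p \le \epsilon$ under the hypothesis $\max_i A_i \le \epsilon^q$, since the rest is just the triangle inequality. The key observation is the H\"older-conjugate identity $q(p-1) = p$, which comes directly from $\frac{1}{p} + \frac{1}{q} = 1$.

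First I would bound $\|A\|_p^p$ by exploiting that the coordinates sum to one. Write
\[
  \|A\|_p^p \;=\; \sum_{i=1}^n A_i^p \;=\; \sum_{i=1}^n A_i \cdot A_i^{p-1} \;\le\; \sum_{i=1}^n A_i \cdot \bigl(\epsilon^q\bigr)^{p-1} \;=\; \epsilon^{q(p-1)} \sum_{i=1}^n A_i \;=\; \epsilon^{p},
\]
using $\max_i A_i \le \epsilon^q$ in the inequality, $\sum_i A_i = 1$ at the end, and $q(p-1) = p$ to simplify the exponent. Taking $p$-th roots gives $\|A\|_p \le \epsilon$. (For the boundary case $p = \infty$, $q = 1$, the statement is even more immediate: ``thin'' means $\max_i A_i \le \epsilon$, which is exactly $\|A\|_\infty \le \epsilon$.) For the second part, I would simply apply this to both $A$ and $B$ and invoke Minkowski's inequality (the triangle inequality for $\ell_p$, valid for $p \ge 1$), giving $\|A - B\|_p \le \|A\|_p + \|B\|_p \le 2\epsilon$, with no use of any structural relationship between the supports of $A$ and $B$.

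There is no real obstacle here: the only point requiring a moment's care is getting the exponent bookkeeping right (that $q(p-1)$ collapses to exactly $p$) and noting that the triangle inequality holds for all $p \ge 1$ so the second inequality is unconditional. The conceptual content is simply that a distribution whose mass is spread so that no coordinate exceeds $\epsilon^q$ automatically has small $\ell_p$ norm, which is the formal version of the ``apparent support size is at most $1/\epsilon^q$'' intuition.
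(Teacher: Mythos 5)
Your proof is correct, and it takes a slightly different route from the paper's. The paper argues by an extremal/convexity consideration: subject to $\sum_i A_i = 1$ and $\max_i A_i \le \epsilon^q$, the quantity $\|A\|_p^p$ is maximized by concentrating mass into as few coordinates as possible, each at the cap $\epsilon^q$, i.e.\ by the uniform distribution on $1/\epsilon^q$ points, whose $\ell_p^p$ mass is $\epsilon^p$; the triangle-inequality step is then the same as yours. Your version replaces this extremal argument with the direct pointwise bound $A_i^p = A_i\cdot A_i^{p-1} \le A_i\,\epsilon^{q(p-1)} = A_i\,\epsilon^p$, summed using $\sum_i A_i = 1$. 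What your route buys is a fully rigorous one-line computation with no appeal to the shape of a maximizer (and no implicit assumption that $1/\epsilon^q$ is an integer, a point the paper glosses over); what the paper's route buys is the explicit identification of the worst case as the uniform distribution on support $1/\epsilon^q$, which is exactly the ``apparent support size'' intuition the observation is meant to convey, and which generalizes directly to the non-distribution vector $A-B$ in Observation 2.2. Your handling of $p=\infty$ and the Minkowski step are both fine and match the paper.
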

\begin{proof} The claim holds immediately for $p=\infty$. For $1 < p < \infty$, by convexity, since $\sum_i A_i = 1$ and $\max_i A_i \leq \epsilon^q$, we have that $\|A\|_p^p = \sum_i A_i^p$ is maximized with as few nonzero entries as possible, each at its maximum value $\epsilon^q$. This extreme example is simply the uniform distribution on $n=\frac{1}{\epsilon^q}$, when $\|A\|_p^p = n\left(\frac{1}{n}\right)^p = \frac{1}{n^{p-1}} = \epsilon$. The rest is the triangle inequality.
\end{proof}
One takeaway from Observation \ref{obs:thin} is that if we are interested in an $\ell_p$ error tolerance of $\bigTheta{\epsilon}$, then any sufficiently ``thin'' distribution may almost as well be the uniform distribution on support size $\frac{1}{\epsilon^q}$. This perspective is reinforced by Observation \ref{obs:approx-support}, which says that under the same circumstances, any distribution may almost as well be ``discretized'' into $\frac{1}{\epsilon^q}$ chunks of weight $\epsilon^q$ each.

\begin{observation}
\label{obs:approx-support}
Fixing $1 < p$, for any distribution $A$, there is a distribution $B$ whose probabilities are integer multiples of $\frac{1}{\epsilon^q}$ such that $\|A-B\|_p \leq 2\epsilon$. In particular, $B$'s support size is at most $\frac{1}{\epsilon^q}$.
\end{observation}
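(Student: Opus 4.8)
The plan is to obtain $B$ by rounding each coordinate of $A$ down to a multiple of $\epsilon^q$ and then ``topping it up'' to restore total mass $1$, the only subtle point being how the leftover mass is redistributed. For simplicity assume $1/\epsilon^q$ is an integer (the only case in which the statement is literally meaningful) and write $N := 1/\epsilon^q$; the arithmetic identity I will use twice is $N\,(\epsilon^q)^p = N^{1-p} = \epsilon^p$, which is just the defining relation $q(p-1)=p$ rewritten.

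Concretely, start from $B_i := \epsilon^q\lfloor A_i/\epsilon^q\rfloor$. This places $N_0 := \sum_i \lfloor A_i/\epsilon^q\rfloor \le N$ ``chunks'' of weight $\epsilon^q$, leaving $N-N_0 = \sum_i \{A_i/\epsilon^q\}$ chunks still to be added so that the total becomes $1$. I add these leftover chunks one per coordinate, to the $N-N_0$ coordinates with the largest fractional parts $\{A_i/\epsilon^q\}$; this is feasible because $\sum_i \{A_i/\epsilon^q\}$ is a sum of numbers each less than $1$, hence strictly smaller than the number of coordinates with positive fractional part. The resulting $B$ has $\sum_i B_i = N\epsilon^q = 1$, has every coordinate a nonnegative integer multiple of $\epsilon^q$, and is built from exactly $N$ chunks with at least one on each coordinate of its support, so $|\mathrm{supp}(B)| \le N = 1/\epsilon^q$.

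For the error bound, split on whether $B_i=0$. If $i \in \mathrm{supp}(B)$ then $B_i$ equals either $\epsilon^q\lfloor A_i/\epsilon^q\rfloor$ or $\epsilon^q\lceil A_i/\epsilon^q\rceil$, so $|A_i-B_i|\le \epsilon^q$, whence $\sum_{i\in\mathrm{supp}(B)}|A_i-B_i|^p \le |\mathrm{supp}(B)|\,(\epsilon^q)^p \le N(\epsilon^q)^p = \epsilon^p$. If $i \notin \mathrm{supp}(B)$ then $B_i=0$ and $\lfloor A_i/\epsilon^q\rfloor = 0$, i.e.\ $A_i < \epsilon^q$, so $A$ restricted to these coordinates is a ``thin'' sub-distribution, and Observation~\ref{obs:thin} (whose bound holds verbatim for sub-distributions, by the same convexity argument) gives $\sum_{i\notin\mathrm{supp}(B)}|A_i-B_i|^p = \sum_{i\notin\mathrm{supp}(B)}A_i^p \le \epsilon^p$. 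Adding the two pieces, $\|A-B\|_p^p \le 2\epsilon^p$, so $\|A-B\|_p \le 2^{1/p}\epsilon \le 2\epsilon$.

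The only genuinely delicate step is the redistribution of the leftover mass, and it is exactly the demand that $B$ be an honest probability distribution that forces it: the argument would collapse if the (at most $N$) leftover chunks were lumped onto a few coordinates, since then a single coordinate could be off from its target by as much as a constant. Spreading them one per coordinate keeps every coordinate within $\epsilon^q$ of its rounded target, after which the whole estimate is just the identity $N(\epsilon^q)^p = \epsilon^p$ applied once on $\mathrm{supp}(B)$ and once, via Observation~\ref{obs:thin}, on its complement.
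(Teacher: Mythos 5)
Your proof is correct and takes essentially the same approach as the paper: round each coordinate of $A$ to a nearby integer multiple of $\epsilon^q$, rounding some coordinates up and some down so the total mass stays $1$, so that every coordinate is within $\epsilon^q$ of $A_i$, and then invoke the thin-vector convexity bound together with the identity $\frac{1}{\epsilon^q}(\epsilon^q)^p = \epsilon^p$. The only (cosmetic) differences are that the paper applies the convexity bound once to the full difference vector $A-B$ (which has $\ell_1$ mass at most $2$), whereas you split the sum into $\mathrm{supp}(B)$ and its complement and bound each piece by $\epsilon^p$, and your explicit ``largest fractional parts'' rule is a concrete instance of the paper's nonconstructive step of mixing the rounded-up and rounded-down vectors.
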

\begin{proof}
We can always choose $B$ such that, on each coordinate $i$, $|A_i-B_i| \leq \frac{1}{\epsilon^q}$. (To see this, obtain the vector $B'$ by rounding each coordinate of $A$ up to the nearest integer multiple of $\epsilon^q$, and obtain $B''$ by rounding each coordinate down. $\|B'\|_1 \geq 1 \geq \|B''\|_1$, so we can obtain a true probability distribution by taking some coordinates from $B'$ and some from $B''$.) But this just says that the vector $A-B$ is ``thin'' in the sense of Observation \ref{obs:thin}. The same argument goes through here (even though $A-B$ is not a probability distribution): Since $\max_i |A_i-B_i| \leq \epsilon^q$ and $\sum_i |A_i - B_i| \leq 2$, by convexity $\|A-B\|_p$ is maximized when it has dimension $\frac{2}{\epsilon^q}$ and each entry $|A_i-B_i| = \epsilon^q$, so we get $\|A-B\|_p \leq 2\epsilon$.
\end{proof}

\subsection{Testing uniformity: biased coins and die} Given a coin, is it fair or $\epsilon$-far from fair? It is well-known that $\bigOmega{\frac{1}{\epsilon^2}}$ independent flips of the coin are necessary to make a determination with confidence. One might naturally assume that deciding if a $6$-sided die is fair or $\epsilon$-far from fair would only be more difficult, requiring more rolls, and one would be correct --- if the measure of ``$\epsilon$-far'' is $\ell_1$ distance. Indeed, it is known~\cite{paninski2008coincidence} that $\bigTheta{\frac{\sqrt{n}}{\epsilon^2}}$ rolls of an $n$-sided die are necessary if the auditor's distance measure is $\ell_1$.

But what about other measures, say, if the auditor wishes to test whether any one side of the die is $\epsilon$ more likely to come up than it should be? For this $\ell_{\infty}$ question, it turns out that \emph{fewer} rolls of the die are required than flips of the coin; specifically, we show that $\bigTheta{\frac{\ln n}{n \epsilon^2}}$ are necessary and sufficient, in a small $n$ regime (specifically, $\bigTheta{\frac{n}{\ln(n)}} \leq \frac{1}{\epsilon}$). Once $n$ becomes large enough, only $\bigTheta{\frac{1}{\epsilon}}$ samples are necessary and sufficient.

Briefly, the intuition behind this result in the $\ell_{\infty}$ case is as follows. When flipping a $2$-sided coin, both a fair coin and one that is $\epsilon$-biased will have many samples that are heads and many that are tails, making $\epsilon$ difficult to detect ($\frac{1}{\epsilon^2}$ flips are needed to overcome the variance of the process). On the other hand, imagine that we roll a die with $n = $one million faces, for which one particular face is $\epsilon = 0.01$ more likely to come up than it should be. Then after only $\bigTheta{\frac{1}{\epsilon}} = $ a few hundred rolls of the die, we expect to see this face come up multiple times. These multiple-occurrences or ``collisions'' are vastly less likely if the die is fair, so we can distinguish the biased and uniform cases.

So when the support is small, the variance of the uniform distribution can mask bias; but this fails to happen when the support size is large, making it easier to test uniformity over larger supports. These intuitions extend smoothly to the $\ell_p$ metrics below $p=\infty$: First, to be $\epsilon$-far from uniform on a large set, it must be the case that the distribution has ``heavy'' elements; and second, these heavy elements cause many more collisions than the uniform distribution, making them easier to detect than when the support is small.
However, this intuition only extends ``down'' to certain values of $p$.
\begin{itheorem} For $1 \leq p \leq 2$, for $n \leq n^* = \frac{1}{\epsilon^q}$, the sample complexity of testing uniformity is $\bigTheta{\frac{\sqrt{n}}{(n^{1/q}\epsilon)^2}}$. For $1 \leq p < \frac{4}{3}$, this is increasing in the support size $n$, and for $\frac{4}{3} < p \leq 2$, this is decreasing in $n$. For $p = \frac{4}{3}$, the sample complexity is $\bigTheta{\frac{1}{\epsilon^2}}$ for every value of $n$.
\end{itheorem}
Figure \ref{figure:test-unif} illustrates these bounds for different values of $p$, including the phase transition at $p=\frac{4}{3}$.

\begin{figure}
\includegraphics[width=1.08\columnwidth]{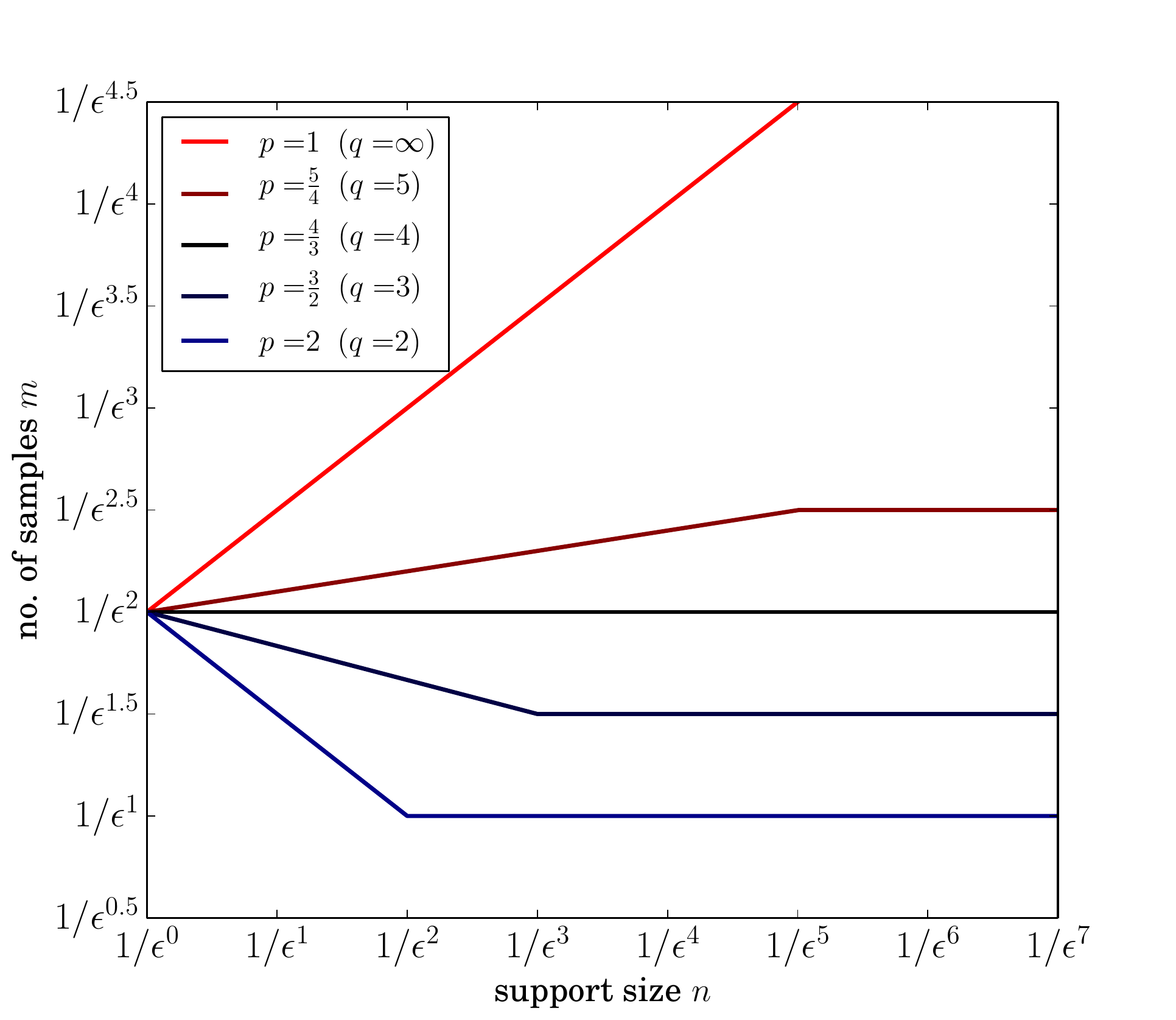}
\caption{Samples (necessary and sufficient, up to constant factors) for testing uniformity with a fixed $\ell_p$ tolerance $\epsilon$. On the horizontal axis is the support size $n$ of the uniform distribution, and on the vertical axis is the corresponding number of samples required to test uniformity. The function plotted is $\frac{\sqrt{n}}{(n^{1/q}\epsilon)^2}$ for $n \leq \frac{1}{\epsilon^q}$ and $\sqrt{\frac{1}{\epsilon^q}}$ for $n \geq \frac{1}{\epsilon^q}$, for various choices of $p$ and corresponding $q = \frac{p}{p-1}$. There is a phase transition at $p=\frac{4}{3}$: For $p < \frac{4}{3}$, the bound is initially increasing in $n$; for $p > \frac{4}{3}$, the bound is initially decreasing in $n$. For all $p$ except $p=1$, the number of necessary samples is constant for $n \geq \frac{1}{\epsilon^q}$. Note the log-log scale.}
\label{figure:test-unif}
\end{figure}

\vfill\eject
\newcommand{\uniftesthdr}{$1 \leq p \leq 2$}
\section{Uniformity Testing for \protect\uniftesthdr}
Recall the definition of uniformity testing: given i.i.d. samples from a distribution $A$, we must satisfy the following. If $A = U_n$, the uniform distribution on support size $n$, then with probability at least $1-\delta$, output ``uniform''. If $\|A-U_n\|_p \geq \epsilon$, then with probability at least $1-\delta$, output ``not uniform''.

\begin{algorithm}
\caption{Uniformity Tester}
\label{alg:test-unif}
 \begin{algorithmic}
  \State \emph{On input $p, n, \epsilon$, and failure probability $\delta$:}
  \State Choose $m$ to be ``sufficient'' for $p,n,\epsilon,\delta$ according to proven bounds.
  \State Draw $m$ samples.
  \State Let $C$ be the number of collisions: \\ \hspace{12pt} $C = \sum_{1\leq j < k \leq m} \1[\text{$j$th sample $=$ $k$th sample}]$.
  \State Let $T$ be the threshold: $T = {m\choose 2}\frac{1}{n} + \sqrt{\frac{1}{\delta}{m\choose 2}\frac{1}{n}}$.
  \State If $C \leq T$, output ``uniform''.
  \State If $C > T$, output ``not uniform''.
 \end{algorithmic}
\end{algorithm}
The upper bounds for $1 \leq p \leq 2$ rely on a very simple algorithm, Algorithm \ref{alg:test-unif}, and straightforward (if slightly delicate) argument. We count the number of \emph{collisions}: Pairs of samples drawn that are of the same coordinate. (Thus, if $m$ samples are drawn, there are up to ${m\choose 2}$ possible collisions.) The number of collisions $C$ has the following properties.\footnote{A possibly interesting generalization: The expected number of $k$-way collisions, for any $k=2,3,\dots$, is equal to ${m\choose k}\|A\|_k^k$. To prove it, consider the probability that each $k$-sized subset is such a collision (\emph{i.e.} all $k$ are of the same coordinate), and use linearity of expectation over the ${m\choose k}$ subsets.}
\begin{restatable}{lemma}{lemmacollisions}
\label{lemma:collisions}
On distribution $A$, the number of collisions $C$ satisfies:
\begin{enumerate}
 \item The expectation is \\
       $\mu_A = {m \choose 2}\|A\|_2^2 = {m\choose 2}\left(\frac{1}{n} + \|A-U\|_2^2\right)$.
 \item The variance is \\
       $Var(C) = {m \choose 2}\left(\|A\|_2^2-\|A\|_2^4\right) + 6{m \choose 3}\left(\|A\|_3^3 - \|A\|_2^4\right)$.
\end{enumerate}
\end{restatable}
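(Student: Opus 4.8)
The plan is to compute the mean and variance of $C = \sum_{j<k} X_{jk}$, where $X_{jk} = \1[\text{sample } j = \text{sample } k]$ is the indicator that the $j$th and $k$th samples collide. The single most useful fact is that for a fixed pair $j < k$, since the two samples are i.i.d. from $A$, $\Pr[X_{jk}=1] = \sum_i A_i^2 = \|A\|_2^2$, and this is why $\|A\|_2^2$ governs everything. For the expectation I would simply apply linearity over the $\binom{m}{2}$ pairs to get $\mu_A = \binom{m}{2}\|A\|_2^2$; the alternative form follows by writing $A = U + (A-U)$ and expanding $\|A\|_2^2 = \|U\|_2^2 + 2\langle U, A-U\rangle + \|A-U\|_2^2$, and observing $\langle U, A-U\rangle = \frac{1}{n}\sum_i (A_i - \frac1n) = \frac1n(1-1) = 0$, while $\|U\|_2^2 = \frac1n$.

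For the variance, I would write $\mathrm{Var}(C) = \sum_{j<k}\sum_{j'<k'} \mathrm{Cov}(X_{jk}, X_{j'k'})$ and classify the pairs of pairs by how many sample-indices they share. There are three cases: (i) $\{j,k\} = \{j',k'\}$ (the same pair), contributing $\mathrm{Var}(X_{jk}) = \|A\|_2^2 - \|A\|_2^4$, with $\binom{m}{2}$ such terms; (ii) $|\{j,k\}\cap\{j',k'\}| = 1$ (the pairs share exactly one index), where a short computation gives $\E[X_{jk}X_{j'k'}] = \sum_i A_i^3 = \|A\|_3^3$ (all three distinct samples must land on a common coordinate), so the covariance is $\|A\|_3^3 - \|A\|_2^4$; (iii) $|\{j,k\}\cap\{j',k'\}| = 0$ (disjoint pairs), where independence of the four samples gives $\E[X_{jk}X_{j'k'}] = \|A\|_2^4$, so the covariance is zero. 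It then remains to count the number of ordered-or-unordered pairs-of-pairs in case (ii): choosing three sample indices and designating which one is the shared one yields $6\binom{m}{3}$ (equivalently $\binom{m}{3}\cdot 3$ ways to pick the shared index times $2$ orderings, or directly $3 \cdot \binom{m}{3}\cdot 2$; I would state it as $6\binom{m}{3}$ to match the claim). Summing cases (i) and (ii) gives exactly $\binom{m}{2}(\|A\|_2^2 - \|A\|_2^4) + 6\binom{m}{3}(\|A\|_3^3 - \|A\|_2^4)$.

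The only mildly delicate point — and the step I would be most careful about — is the combinatorial bookkeeping in case (ii): getting the count $6\binom{m}{3}$ right and making sure no case-(ii) or case-(iii) contributions are double-counted or dropped, since $\mathrm{Var}(C)$ is a double sum over all ordered pairs of (unordered) index-pairs. Everything else is routine: the key probabilistic inputs are just $\E[X_{jk}] = \|A\|_2^2$, $\E[X_{jk}X_{j'k'}] = \|A\|_3^3$ when one index is shared, and $\E[X_{jk}X_{j'k'}] = \|A\|_2^4$ when the pairs are disjoint, each of which is a one-line calculation using that distinct samples are independent.
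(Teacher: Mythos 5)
Your proposal is correct and follows essentially the same route as the paper: linearity of expectation over the $\binom{m}{2}$ pair-indicators (with the same cancellation giving $\|A\|_2^2 = \tfrac1n + \|A-U\|_2^2$), and a covariance decomposition of $\mathrm{Var}(C)$ into the three overlap cases with the counts $\binom{m}{2}$, $6\binom{m}{3}$, and zero contribution from disjoint pairs. The key probabilistic inputs ($\E X_{jk} = \|A\|_2^2$, shared-index product expectation $\|A\|_3^3$, disjoint-pair independence) and the combinatorial count $6\binom{m}{3}$ all match the paper's argument.
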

Thus, the $\ell_2$ distance to uniform, $\|A-U\|_2$, intuitively controls the number of collisions we expect to see, with a minimum when $A=U$. This is why Algorithm \ref{alg:test-unif} simply declares the distribution nonuniform if the number of collisions exceeds a threshold.

\begin{restatable}{theorem}{theoremtestunifupper}
\label{theorem:test-unif-upper}
For uniformity testing with $1 \leq p \leq 2$, it suffices to run Algorithm \ref{alg:test-unif} while drawing the following number of samples:
 \[ m = \frac{9}{\delta} \begin{cases} \frac{\sqrt{n}}{\left(\epsilon n^{1/q}\right)^2}  &  n \leq \frac{1}{\epsilon^q}  \\[1.5em]
                                       \frac{1}{2} \sqrt{\left(\frac{2}{\epsilon}\right)^q}  &  n \geq \frac{1}{\epsilon^q} . \end{cases} \]
\end{restatable}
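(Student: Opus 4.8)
The plan is to prove both the completeness and the soundness guarantees by feeding the exact mean and variance from Lemma~\ref{lemma:collisions} into Chebyshev's inequality. Write $U=U_n$, let $\gamma=\|A-U\|_2^2$, and note $\|A\|_2^2=\tfrac1n+\gamma$, $\mu_A:=\E[C]={m\choose 2}(\tfrac1n+\gamma)$, and that the threshold is exactly $T=\mu_U+\sqrt{\mu_U/\delta}$ with $\mu_U:={m\choose 2}\tfrac1n$. Completeness ($A=U$, so $\gamma=0$) is immediate: since $\|A\|_3^3=\|A\|_2^4=\tfrac1{n^2}$, the variance formula collapses to $Var(C)={m\choose 2}(\tfrac1n-\tfrac1{n^2})\le\mu_U$, and Chebyshev gives $\Pr[C>T]=\Pr[C-\mu_U>\sqrt{\mu_U/\delta}]\le\Pr[|C-\mu_U|\ge\sqrt{Var(C)/\delta}]\le\delta$. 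All the work is in soundness.

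For soundness, assume $\|A-U\|_p\ge\epsilon$. Applying Chebyshev to the lower tail, it suffices to show $\mu_A-T={m\choose 2}\gamma-\sqrt{\mu_U/\delta}\ge\sqrt{Var(C)/\delta}$. Two ingredients feed this. First, a lower bound on the ``signal'' $\gamma$ in terms of $\epsilon$: for $n\le\epsilon^{-q}$ the power-mean (Hölder) inequality $\|x\|_p\le n^{1/p-1/2}\|x\|_2$ in $\mathbb{R}^n$ immediately yields $\gamma\ge n^{2/q-1}\epsilon^2=\tfrac{(n^{1/q}\epsilon)^2}{n}$; for $n\ge\epsilon^{-q}$ one needs $\gamma\ge c_p\,\epsilon^q$ for a constant $c_p$ depending only on $p$. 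Second, an upper bound on the variance: from Lemma~\ref{lemma:collisions}, using $\|A\|_2^2-\|A\|_2^4\le\|A\|_2^2$, $\|A\|_3^3-\|A\|_2^4\le\|A\|_3^3\le\|A\|_\infty\|A\|_2^2\le\|A\|_2^3$, and $6{m\choose 3}\le 2m{m\choose 2}$, one gets $Var(C)\le{m\choose 2}\|A\|_2^2\,(1+2m\|A\|_2)$ with $\|A\|_2\le n^{-1/2}+\sqrt\gamma$.

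Given these, the remainder is bookkeeping: substitute the stated $m$ and check, in each regime, that ${m\choose 2}\gamma$ exceeds a small constant multiple of each of $\sqrt{\mu_U/\delta}$ (the threshold correction, which keeps $\mu_A-T$ a constant fraction of ${m\choose 2}\gamma$), $\sqrt{{m\choose 2}\|A\|_2^2/\delta}$, and $\sqrt{2m{m\choose 2}\|A\|_2^3/\delta}$. The prefactor $9/\delta$ in $m$ is sized precisely so that all three hold with room to spare; e.g.\ for $n\le\epsilon^{-q}$ the bound $\gamma\ge(n^{1/q}\epsilon)^2/n$ makes $m\gamma\sqrt n,\ m\sqrt\gamma,\ m\gamma n$ all at least a constant times $1/\delta$, which is what is needed. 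This regime is essentially routine once the power-mean inequality is invoked.

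I expect the main obstacle to be the first ingredient in the regime $n\ge\epsilon^{-q}$, namely establishing $\gamma\ge c_p\epsilon^q$. The naive route --- Hölder interpolation $\|A-U\|_p\le\|A-U\|_1^{1-2/q}\|A-U\|_2^{2/q}$ with $\|A-U\|_1\le 2$ --- gives only $\gamma\ge 2^{2-q}\epsilon^q$, which is too weak: when $n$ is near $\epsilon^{-q}$ and $q$ is large this is dwarfed by the threshold correction $\sqrt{\mu_U/\delta}$, making $\mu_A-T$ come out negative. The fix is a sharper argument: on the coordinates where $A_i<\tfrac1n$ the contribution to $\|A-U\|_p^p$ is at most $n^{1-p}\le\epsilon^p$, so a constant fraction of the $\ell_p^p$-mass of $A-U$ sits on coordinates where $A_i>\tfrac1n$, and on that set (whose size is $O_p(\epsilon^{-q})$) $\ell_2$ controls $\ell_p$ up to a $p$-dependent factor, giving $\gamma\ge c_p\epsilon^q$; equivalently, one identifies the extremal distribution minimizing $\|A-U\|_2$ subject to $\|A-U\|_p\ge\epsilon$ --- uniform of weight $\approx\epsilon^q$ on a subset --- for which $\|A-U\|_2^2=\Theta_p(\epsilon^q)$. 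Since the theorem only claims the bound up to $p$-dependent factors (and the $\tfrac12\sqrt{(2/\epsilon)^q}$ in $m$ already carries a $2^{q/2}$), the $p$-dependence of these constants is harmless, but it must be tracked carefully.
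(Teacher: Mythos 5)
Your overall architecture (Chebyshev on both tails, the exact mean/variance of $C$ from Lemma~\ref{lemma:collisions}, norm inequalities to lower-bound $\gamma=\|A-U\|_2^2$) is the same as the paper's, and your completeness argument is identical. But there is a genuine gap in the soundness step, and it sits exactly in the part you call ``routine bookkeeping.'' Your variance bound throws away the cancellation in $\|A\|_3^3-\|A\|_2^4$: this difference vanishes at $A=U$ and is small near uniform, but your chain $\|A\|_3^3-\|A\|_2^4\le\|A\|_3^3\le\|A\|_2^3$ replaces it by something of order $n^{-3/2}$ even when $A$ is nearly uniform. Consequently your bound $Var(C)\le{m\choose 2}\|A\|_2^2(1+2m\|A\|_2)$ contains an $m^3 n^{-3/2}$-type term, and dominating its square root forces $m\gtrsim \frac{1}{\delta\,\gamma^2 n^{3/2}}$ (and also $m\gtrsim\frac{1}{\delta\, n\,\gamma^{3/2}}$), conditions missing from your checklist ``$m\gamma\sqrt n,\ m\sqrt\gamma,\ m\gamma n\gtrsim 1/\delta$.'' With the worst-case $\gamma=\epsilon^2 n^{2/q-1}$ these missing conditions exceed the stated $m=\frac{9}{\delta}\frac{\sqrt n}{(n^{1/q}\epsilon)^2}$ by factors of $1/(n^{1/q}\epsilon)^2$ and $1/(9 n^{1/q}\epsilon)$ respectively, which blow up throughout the interior of the small-$n$ regime (e.g.\ $p=2$, $n=100$, $\epsilon=10^{-3}$). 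So with your variance bound the claimed sample size cannot be verified. The paper avoids this by bounding the \emph{centered} quantity: writing $\delta_i=A_i-\tfrac1n$ it shows $\|A\|_3^3-\|A\|_2^4\le\|A-U\|_2^2\bigl(\tfrac1n+\|A-U\|_2\bigr)$, so the $m^3$-scale part of $Var(C)$ is proportional to $\gamma$, and the whole requirement collapses to $m\ge\frac{9}{\delta}\max\bigl\{\frac{1}{\sqrt n\,\gamma},\frac{1}{\sqrt\gamma}\bigr\}$. You need that refinement; the uncentered bound is not repairable by constants.

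Conversely, the obstacle you anticipate in the large-$n$ regime is not one in the paper's route, and your proposed fix is both unnecessary and too weak for the literal statement. The paper keeps \emph{both} lower bounds on $\gamma$: the interpolation/Jensen bound $\|A-U\|_p^q\le\|A-U\|_1^{q-2}\|A-U\|_2^2$ with $\|A-U\|_1\le 2$ (Lemma~\ref{lemma:pnorm-leq-2}), giving $\gamma\ge 2^{2-q}\epsilon^q$, and the power-mean bound $\gamma\ge\epsilon^2 n^{2/q-1}$ (Lemma~\ref{lemma:pnorm-ineq}), which is valid for all $n$. The first controls the term $\frac{1}{\sqrt\gamma}$ exactly and produces the stated constant $\frac12\sqrt{(2/\epsilon)^q}$; the threshold-correction term $\frac{1}{\sqrt n\,\gamma}$ is controlled by the power-mean bound together with the extra $\sqrt n\ge\epsilon^{-q/2}$ available when $n\ge\epsilon^{-q}$ (with the interpolation bound taking over once $n\gtrsim 2^{q-2}\epsilon^{-q}$). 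Your scenario ``$\gamma\approx 2^{2-q}\epsilon^q$ with $n$ near $\epsilon^{-q}$'' simply cannot occur, because the power-mean bound forces $\gamma\ge\epsilon^q$ there. Your alternative extremal argument ($\gamma\ge c_p\epsilon^q$ with unspecified $c_p$) would, even if carried out, only prove the theorem up to unspecified $p$-dependent factors, whereas the statement fixes the constant $9/\delta$ and the form $\frac12\sqrt{(2/\epsilon)^q}$; the paper's two-lemma combination gives those explicitly.
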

The proof of Theorem \ref{theorem:test-unif-upper} uses Chebyshev's inequality to bound the probability that $C$ is far from its expectation in terms of $Var(C)$, for both the case where $A=U_n$ and $\|A-U_n\|_p \geq \epsilon$. It focuses on a careful analysis of the variance of the number of collisions, to show that, for $m$ sufficiently large, the variance is small. For $1 \leq p \leq 2$, the dominant term eventually falls into one of two cases, which correspond directly to  ``large $n$'' ($n \geq \frac{1}{\epsilon^q}$) and ``small $n$'' ($n \leq \frac{1}{\epsilon^q}$).

Collisions, also called ``coincidences'', have been implicitly, but not explicitly, used to test uniformity for the $\ell_1$ case by Paninski~\cite{paninski2008coincidence}. Rather than directly testing the number of collisions, that paper tested the number of coordinates that were sampled exactly once. That tester is designed for the regime where $n > m$. Collisions have also been used for similar testing problems in \cite{goldreich2000testing, batu2013testing}. One interesting note is that $T$ is defined in terms of $m$, so that no matter how $m$ is chosen, if $A=U$ then the algorithm outputs ``uniform'' with probability $1-\delta$.

We also note that, if very high confidence is desired, a logarithmic dependence on $\delta$ is achievable by repeatedly running Algorithm \ref{alg:test-unif} for a fixed failure probability and taking a majority vote. The constants in the Theorem \ref{theorem:test-unif-upper-logdelta} are chosen to optimize the number of samples.
\begin{restatable}{theorem}{theoremtestunifupperlogdelta}
\label{theorem:test-unif-upper-logdelta}
For uniformity testing with $1 \leq p \leq 2$, it suffices to run Algorithm \ref{alg:test-unif} $160\ln(1/\delta)/9$ times, each with a fixed failure probability $0.2$, and output according to a majority vote; thus drawing a total number of samples
 \[ m = 800\ln(1/\delta) \begin{cases} \frac{\sqrt{n}}{\left(\epsilon n^{1/q}\right)^2}  &  n \leq \frac{1}{\epsilon^q}  \\[1.5em]
                                       \frac{1}{2} \sqrt{\left(\frac{2}{\epsilon}\right)^q}  &  n \geq \frac{1}{\epsilon^q} . \end{cases} \]
This improves on Theorem \ref{theorem:test-unif-upper} when the failure probability $\delta \leq 0.002$ or so.
\end{restatable}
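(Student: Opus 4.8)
The plan is a standard probability-amplification (``boosting'') argument: repeat the single-shot tester of Theorem \ref{theorem:test-unif-upper} many times at a fixed constant failure probability and output the majority answer, then control the chance that a majority of the runs err with a Chernoff/Hoeffding tail bound. First I would run Algorithm \ref{alg:test-unif} with failure probability $\delta_0 = 0.2$; by Theorem \ref{theorem:test-unif-upper} this uses $\frac{9}{0.2} = 45$ times the piecewise quantity ($\frac{\sqrt{n}}{(\epsilon n^{1/q})^2}$ when $n \leq \frac1{\epsilon^q}$, and $\frac12\sqrt{(2/\epsilon)^q}$ when $n \geq \frac1{\epsilon^q}$) appearing there, and is correct with probability at least $0.8$ --- outputting ``uniform'' when $A = U_n$ and ``not uniform'' when $\|A - U_n\|_p \geq \epsilon$. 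I would then run $r = 160\ln(1/\delta)/9$ such copies independently (taking $r$ to be an integer, which is harmless), for a total of $45r = 800\ln(1/\delta)$ times that same piecewise quantity, exactly the claimed bound.

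For correctness, the key observation is that each of the two cases to be distinguished has a single correct label, so the majority vote is correct whenever a strict majority of the $r$ runs emit that label. Let $X$ count the runs that emit the wrong label; by independence, $\E[X] = \sum_{i=1}^{r}\Pr[\text{run } i \text{ errs}] \leq 0.2\,r$, so $\E[X] < r/2$ and the procedure fails only on the event $\{X \geq r/2\}$. Since $X$ is a sum of independent $\{0,1\}$ variables, Hoeffding's inequality gives $\Pr[X \geq r/2] \leq \exp(-2r(0.3)^2) = \exp(-0.18\,r)$ (equivalently one may use a multiplicative Chernoff bound, or stochastic domination by $\mathrm{Binomial}(r, 0.2)$). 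With $r = 160\ln(1/\delta)/9$ this is at most $\exp(-3.2\ln(1/\delta)) = \delta^{3.2} \leq \delta$, so the overall failure probability is at most $\delta$, as required.

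I do not anticipate a genuine obstacle; the argument is routine. The only points that want a moment of care are: (i) the amplification works because the tester is a \emph{decision} with a fixed right answer, so ``majority of runs correct'' suffices and nothing adversarial can occur; (ii) one should take the number of repetitions to be an honest integer so the vote is well-defined; and (iii) the constant bookkeeping --- that $45 \cdot (160/9) = 800$, that the choices $\delta_0 = 0.2$ and $r = 160\ln(1/\delta)/9$ are made to keep the total a clean quantity while leaving the tail bound with generous slack (a more aggressive per-run failure probability would shrink the constant), and that $800\ln(1/\delta) \leq 9/\delta$ exactly when $\delta$ is roughly $0.002$ or smaller, which yields the closing comparison with Theorem \ref{theorem:test-unif-upper}.
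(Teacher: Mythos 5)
Your proposal is correct and follows essentially the same route as the paper: repeat Algorithm \ref{alg:test-unif} with per-run failure probability $0.2$, take a majority vote, and bound the probability that at least half the runs err with a standard concentration inequality, with the constant bookkeeping $45 \cdot \frac{160}{9} = 800$ matching the stated bound. The only cosmetic difference is that you invoke Hoeffding's additive bound (giving the generous tail $\delta^{3.2} \leq \delta$ for the prescribed number of repetitions), whereas the paper uses a Chernoff bound from Mitzenmacher--Upfal for which the choice $k = \frac{160}{9}\ln(1/\delta)$ makes the bound exactly $\delta$; both are valid.
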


The following lower bound shows that Algorithm \ref{alg:test-unif} is optimal for all $1 \leq p \leq 2$, $n$, and $\epsilon$, up to a constant factor depending on $p$ and the failure probability $\delta$.
\begin{restatable}{theorem}{theoremtestuniflower}
\label{theorem:test-unif-lower}
For uniformity testing with $1 \leq p \leq 2$, it is necessary to draw the following number of samples:
 \[ m = \begin{cases} \sqrt{\ln\left(1+(1-2\delta)^2\right)} \frac{\sqrt{n}}{\left(\epsilon n^{1/q}\right)^2}  & n \leq \frac{1}{\epsilon^q}   \\[1em]
                      \sqrt{2(1-2\delta)} \sqrt{\frac{1}{(2\epsilon)^q}}  & n \geq \frac{1}{\epsilon^q} . \end{cases} \]
\end{restatable}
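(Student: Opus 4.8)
The plan is a standard information‑theoretic (hybrid / Le Cam) argument. Fix the sample count $m$. We exhibit a prior $\mathcal{Q}$ supported entirely on distributions $Q$ with $\|Q-U_n\|_p\ge\epsilon$, and show that the distribution of an $m$‑sample sequence drawn from $U_n$ is close in total variation to the $\mathcal{Q}$‑mixture $\bar Q_m$ (first draw $Q\sim\mathcal{Q}$, then draw $m$ i.i.d.\ samples from $Q$). Any valid tester outputs ``uniform'' with probability $\ge 1-\delta$ on $U_n$ and ``not uniform'' with probability $\ge 1-\delta$ on every $Q$ in the support of $\mathcal{Q}$ (hence with probability $\ge 1-\delta$ under $\bar Q_m$), so the event ``output uniform'' witnesses $d_{\mathrm{TV}}((U_n)^m,\bar Q_m)\ge 1-2\delta$. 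Thus it suffices to prove $d_{\mathrm{TV}}((U_n)^m,\bar Q_m)<1-2\delta$ whenever $m$ is below the claimed threshold, which forces any tester to use at least that many samples. A different prior is used in each regime.

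For $n\le 1/\epsilon^q$ I would use Paninski's perturbed‑uniform prior: assume $n$ even (else drop a coordinate), split $[n]$ into $n/2$ pairs, and for each pair independently add $+\gamma/n$ to one coordinate and $-\gamma/n$ to the other, with a uniformly random choice of which, where $\gamma:=\epsilon n^{1/q}$. Each coordinate then differs from $1/n$ by exactly $\gamma/n$, so every $Q$ in the support satisfies $\|Q-U_n\|_p=\gamma\,n^{1/p-1}=\gamma\,n^{-1/q}=\epsilon$ exactly, and $\gamma\le 1$ precisely because $n\le 1/\epsilon^q$, so these are genuine distributions. To bound the TV distance I would use the second‑moment ($\chi^2$) method: writing $\ell_\sigma$ for the likelihood ratio against $U_n$ of the perturbed distribution with sign pattern $\sigma\in\{\pm1\}^{n/2}$, a short computation gives $\E_{X\sim U_n}[\ell_\sigma(X)\ell_{\sigma'}(X)]=1+\tfrac{2\gamma^2}{n}\langle\sigma,\sigma'\rangle$, hence
\[ \chi^2\big(\bar Q_m\,\big\|\,(U_n)^m\big)+1 \;=\; \E_{\sigma,\sigma'}\Big[\big(1+\tfrac{2\gamma^2}{n}\langle\sigma,\sigma'\rangle\big)^m\Big] \;\le\; \cosh\!\Big(\tfrac{2m\gamma^2}{n}\Big)^{n/2} \;\le\; e^{m^2\gamma^4/n}, \]
using $(1+x)^m\le e^{mx}$, independence of the $n/2$ Rademacher terms $\sigma_i\sigma_i'$, and $\cosh(t)\le e^{t^2/2}$. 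Then $d_{\mathrm{TV}}\le\tfrac12\sqrt{\chi^2}\le\tfrac12\sqrt{e^{m^2\gamma^4/n}-1}$, and requiring the right side to stay below $1-2\delta$ gives $m^2\gamma^4/n<\ln\!\big(1+\Theta((1-2\delta)^2)\big)$, i.e.\ $m=O\!\big(\sqrt{\ln(1+(1-2\delta)^2)}\cdot\sqrt n/(\epsilon n^{1/q})^2\big)$ samples are indistinguishable, which (up to pinning down the constant inside the logarithm) is the stated necessary bound after substituting $\gamma$.

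For $n\ge 1/\epsilon^q$ I would instead take $\mathcal{Q}$ to be ``uniform on a uniformly random $n^*$‑element subset $S\subseteq[n]$'', where $n^*:=\lfloor 1/(2\epsilon)^q\rfloor$. Since $q\ge 2$ in the range $1\le p\le 2$, we have $n\ge 1/\epsilon^q\ge 2n^*$, and a direct estimate $\|Q-U_n\|_p^p\ge n^*(1/n^*-1/n)^p\ge 2^{-p}(n^*)^{1-p}=2^{-p}(2\epsilon)^p=\epsilon^p$ (using $n\ge 2n^*$ and $q(p-1)=p$) shows every member of the support is $\epsilon$‑far. For the TV bound, note that a single sample under $\bar Q_m$ is marginally uniform on $[n]$, and on any tuple whose $m$ coordinates are all distinct $\bar Q_m$ assigns no more mass than $(U_n)^m$; a short mass‑bookkeeping argument then yields $d_{\mathrm{TV}}((U_n)^m,\bar Q_m)\le\Pr_{\bar Q_m}[\text{the }m\text{ samples are not all distinct}]\le\binom{m}{2}/n^*$, a birthday bound inside the size‑$n^*$ support, uniform over $S$. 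Requiring $\binom{m}{2}/n^*<1-2\delta$ gives $m^2\lesssim 2(1-2\delta)n^*$, i.e.\ $m=O\!\big(\sqrt{2(1-2\delta)}\,\sqrt{1/(2\epsilon)^q}\big)$ samples are indistinguishable, matching the claim.

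The main obstacle is getting the $\chi^2$/second‑moment bound in the $n\le 1/\epsilon^q$ case sharp enough that the explicit $\ln(1+(1-2\delta)^2)$ constant comes out as stated rather than merely a $\Theta$‑bound: this requires carrying the conversions $d_{\mathrm{TV}}\le\tfrac12\sqrt{\chi^2}$ (or a tighter variant), $\cosh(t)\le e^{t^2/2}$, and $(1+x)^m\le e^{mx}$ without slack, and handling the parity of $n$. A secondary subtlety is that in the $n\ge 1/\epsilon^q$ regime the naive $\chi^2$ bound for the random‑subset prior is far too lossy once $n\gg n^*$, so one must argue the total variation directly through the collision event and the explicit mass comparison on all‑distinct tuples rather than through $\chi^2$; and one must verify that the $\epsilon$‑farness estimate (and the choice of scale $2\epsilon$ behind $n^*$) remains valid throughout the regime, including at its boundary $n=1/\epsilon^q$.
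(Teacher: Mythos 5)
Your proposal is correct and follows essentially the same route as the paper: the same Paninski-style $\pm\gamma/n$ paired-perturbation prior with the same second-moment computation (the $1+x\le e^x$ and $\cosh(t)\le e^{t^2/2}$ steps) for $n\le 1/\epsilon^q$, and the same uniform-on-a-random-small-subset prior with a collision/birthday-type bound for $n\ge 1/\epsilon^q$. The differences are only presentational: you package the small-$n$ bound as a $\chi^2$/total-variation argument (which, via $d_{\mathrm{TV}}\ge 1-2\delta$, in fact yields the slightly stronger constant $\sqrt{\ln\left(1+4(1-2\delta)^2\right)}$ and hence the stated bound), and you replace the paper's conditioning-on-no-collisions symmetry argument with a direct mass comparison on all-distinct tuples, both of which are sound.
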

In the large-$n$ regime, the lower bound can be proven simply. We pick randomly from a set of nonuniform distributions $A$ where, if not enough samples are drawn, then the probability of \emph{any} collision is very low. But without collisions, the input is equally likely to come from $U_n$ or from one of the nonuniform $A$s, so no algorithm can distinguish these cases.

In the small-$n$ regime, the order-optimal lower bound follows from the $\ell_1$ lower bound of Paninski~\cite{paninski2008coincidence}, which does not give constants. We give a rewriting of this proof with two changes: We make small adaptations to fit general $\ell_p$ metrics, and we obtain the constant factor. The idea behind the proof of \cite{paninski2008coincidence} is to again pick randomly from a family of distributions that are close to uniform. It is shown that any algorithm's probability of success is bounded in terms of the distance from the distribution of the resulting samples to that of samples drawn from $U_n$.

\vfill\eject
\newcommand{\uniftestptwohdr}{$p > 2$}
\section{Uniformity Testing for \protect\uniftestptwohdr}
This paper fails to characterize the sample complexity of uniformity testing in the $p > 2$ regime, except for the case $p = \infty$ in which the bounds are tight. However, the remaining gap is relatively small.

First, we note that Algorithm $1$ can be slightly adapted for use for all $p > 2$, giving an upper bound on the number of samples required. The reason is that, by an $\ell_p$-norm inequality, whenever $\|A-U\|_p \geq \epsilon$, we also have $\|A-U\|_2 \geq \epsilon$. So an $\ell_2$ tester is also an $\ell_p$ tester for $p \geq 2$. This observation proves the following theorem.
\begin{theorem} \label{theorem:test-unif-upper-use2}
For uniformity testing with any $p > 2$, it suffices to run Algorithm \ref{alg:test-unif} while drawing the number of samples for $p=2$ from Theorem \ref{theorem:test-unif-upper}, namely
  \[ m = \frac{9}{\delta}\begin{cases} \frac{1}{\sqrt{n}\epsilon^2}  & n \leq \frac{1}{\epsilon^2}  \\[1em]
                                       \frac{1}{\epsilon}            & n \geq \frac{1}{\epsilon^2} . \end{cases} \]
(A logarithmic dependence on $\delta$ is also possible as in Theorem \ref{theorem:test-unif-upper-logdelta}.) 
\end{theorem}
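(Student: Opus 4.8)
The plan is to exhibit $\ell_p$ uniformity testing for $p\ge 2$ as a literal special case of $\ell_2$ uniformity testing, and then quote the $p=2$ instance of Theorem~\ref{theorem:test-unif-upper}. First I would record the standard monotonicity of $\ell_p$ norms: for any vector $x\in\mathbb{R}^n$ and any $p\ge 2$ we have $\|x\|_p\le\|x\|_2$. (Normalize so that $\|x\|_2=1$; then $|x_i|\le 1$ for every $i$, hence $|x_i|^p\le|x_i|^2$, and summing over $i$ gives $\|x\|_p^p\le\|x\|_2^2=1$; rescaling gives the inequality in general.) Applying this with $x=A-U_n$ shows that whenever $\|A-U_n\|_p\ge\epsilon$ we also have $\|A-U_n\|_2\ge\epsilon$; equivalently, the ``$\ell_p$-far'' set is contained in the ``$\ell_2$-far'' set, while the ``uniform'' case $A=U_n$ is unchanged.

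Consequently, any procedure that with probability $1-\delta$ outputs ``uniform'' when $A=U_n$ and ``not uniform'' when $\|A-U_n\|_2\ge\epsilon$ automatically meets the $\ell_p$ specification for every $p\ge 2$. I would then note that Algorithm~\ref{alg:test-unif} does not actually depend on $p$ except to select the sample count $m$ from the proven bounds — the collision count $C$ and the threshold $T=\binom{m}{2}\tfrac1n+\sqrt{\tfrac1\delta\binom{m}{2}\tfrac1n}$ are functions of $m$, $n$, $\delta$ alone — so it suffices to run it with $m$ equal to the value guaranteed sufficient by Theorem~\ref{theorem:test-unif-upper} for $p=2$, and correctness for the $\ell_2$ problem (hence for the $\ell_p$ problem) is immediate from that theorem.

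Finally I would simplify that value to match the displayed bound: for $p=2$ the H\"older conjugate is $q=2$, so $n^{1/q}=\sqrt n$, the regime divider $1/\epsilon^q$ becomes $1/\epsilon^2$, the small-$n$ expression $\frac{\sqrt n}{(\epsilon n^{1/q})^2}$ becomes $\frac{1}{\sqrt n\,\epsilon^2}$, and the large-$n$ expression $\frac12\sqrt{(2/\epsilon)^q}$ becomes $\frac1\epsilon$. The parenthetical claim about a logarithmic dependence on $\delta$ follows the same way, by invoking Theorem~\ref{theorem:test-unif-upper-logdelta} at $p=2$ and taking a majority vote. There is essentially no obstacle to this argument; the only point that warrants a moment's care is the direction of the set containment for the ``far'' case, and that direction is exactly the one the norm inequality provides.
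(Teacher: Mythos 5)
Your proposal is correct and follows essentially the same route as the paper: reduce to the $\ell_2$ case via the norm inequality $\|V\|_p \leq \|V\|_2$ for $p \geq 2$ (so $\|A-U_n\|_p \geq \epsilon$ implies $\|A-U_n\|_2 \geq \epsilon$) and then invoke the $p=2$ instance of Theorem~\ref{theorem:test-unif-upper}, with the only additions being your inline verification of the norm inequality (which the paper cites as Lemma~\ref{lemma:pnorm-ineq}) and the routine $q=2$ arithmetic.
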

\begin{proof}
If $A=U$, then by the guarantee of Algorithm \ref{alg:test-unif}, with probability $1-\delta$ it outputs ``uniform''. If $\|A-U\|_p \geq \epsilon$, then $\|A-U\|_2 \geq \epsilon$: It is a property of $\ell_p$ norms that $\|V\|_2 \geq \|V\|_p$ for all vectors $V$ when $p \geq 2$. Then, by the guarantee of Algorithm \ref{alg:test-unif}, with probability $1-\delta$ it outputs ``not uniform''.
\end{proof}
The same reasoning, but in the opposite direction, says that a lower bound for the $\ell_{\infty}$ case gives a lower bound for all $p < \infty$. Thus, by proving a lower bound for $\ell_{\infty}$ distance, we obtain the following theorem.
\begin{theorem} \label{theorem:test-unif-lower-useinf}
For uniformity testing with any $p$, it is necessary to draw the following number of samples:
  \[ m = \begin{cases} \frac{1}{2} \frac{\ln\left(1 + n(1-2\delta)^2\right)}{n \epsilon^2}  &  \text{for all $n$}  \\[1em]
                       \frac{1-2\delta}{2} \frac{1}{\epsilon}                             &  n \geq \frac{1}{\epsilon} . \end{cases} \]
We find that the first bound is larger (better) for $\bigTheta{\frac{n}{\ln(n)}} \leq \frac{1}{\epsilon}$, and the second is better for all larger $n$.
\end{theorem}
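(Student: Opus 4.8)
The plan is to prove both bounds for the $\ell_\infty$ metric and then transfer them to every $p$ using the norm inequality $\|V\|_\infty \le \|V\|_p$, valid for all $p \ge 1$ and all vectors $V$: any distribution at $\ell_\infty$-distance at least $\epsilon$ from $U_n$ is also at $\ell_p$-distance at least $\epsilon$ from $U_n$, so a lower-bound instance for $\ell_\infty$ uniformity testing is automatically one for $\ell_p$. (For $1 \le p \le 2$ the resulting bound is already implied by Theorem~\ref{theorem:test-unif-lower}; the new content is for $2 < p \le \infty$.) For the $\ell_\infty$ lower bound I will use the standard two-point method in Bayesian form: it suffices to put a prior over distributions that are $\epsilon$-far from $U_n$ and to show that the law $\nu$ of $m$ i.i.d.\ samples from a prior-random such distribution satisfies $d_{TV}\bigl(\nu,\, U_n^{\otimes m}\bigr) < 1 - 2\delta$; then no $m$-sample tester is correct with failure probability at most $\delta$ both on $U_n$ and on every distribution in the support of the prior, since its ``output uniform'' acceptance region would otherwise separate the two laws by $1 - 2\delta$. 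The prior I will use is the ``one slightly heavy coordinate'' family: draw $i \in [n]$ uniformly and set $A^{(i)} = (1-\gamma) U_n + \gamma e_i$, where $e_i$ is the point mass on coordinate $i$ and $\gamma = \frac{n\epsilon}{n-1}$ is picked so that $\|A^{(i)} - U_n\|_\infty = \epsilon$ exactly; $A^{(i)}$ is a genuine distribution whenever $\epsilon \le \frac{n-1}{n}$ (otherwise the ``far'' case is vacuous).

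For the first bound, $m \gtrsim \frac{\ln(1 + n(1-2\delta)^2)}{n\epsilon^2}$, I will control $d_{TV}$ through the $\chi^2$-divergence to the null and a second-moment computation in the spirit of Paninski~\cite{paninski2008coincidence}. Writing the single-draw likelihood ratio $r_i(x) = \frac{dA^{(i)}}{dU_n}(x) = 1 + \gamma\bigl(n\,\1[x = i] - 1\bigr)$, the $\chi^2$-divergence of the mixture of $m$-fold products from $U_n^{\otimes m}$ factorizes over draws:
\[ \chi^2\!\left(\nu \,\|\, U_n^{\otimes m}\right) + 1 \;=\; \E_{i,i'}\left(\E_{X \sim U_n}\left[\, r_i(X)\, r_{i'}(X)\, \right]\right)^{\!m}, \]
and a short computation gives $\E_{X \sim U_n}[r_i r_{i'}]$ equal to $1 + \gamma^2(n-1)$ when $i = i'$ and to $1 - \gamma^2$ when $i \ne i'$, so the right-hand side equals $\frac{1}{n}\left(1 + \gamma^2(n-1)\right)^m + \frac{n-1}{n}\left(1 - \gamma^2\right)^m$. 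Bounding this (via $1 + t \le e^t$, $(1-\gamma^2)^m \le 1$, and $\gamma^2(n-1) \le 2n\epsilon^2$) leaves $\chi^2 \le \frac{1}{n}\left(e^{2mn\epsilon^2} - 1\right)$, and the inequality $d_{TV} \le \frac12\sqrt{\chi^2}$ converts $d_{TV} < 1 - 2\delta$ into the claimed bound on $m$ (with some care for the constant and the ``$1+$'' inside the logarithm, which keeps the bound meaningful when $n$ is tiny or $\delta$ is near $\frac12$). Conceptually, the $\Pr[i = i'] = \frac1n$ ``birthday'' term carries the whole argument: a sample can betray the planted heavy coordinate only by hitting it at least twice.

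For the second bound, $m \gtrsim \frac1\epsilon$ (the binding bound once $n \ge \frac1\epsilon$), I will argue by an explicit coupling that exploits the mixture form of $A^{(i)}$. Generate the prior-random $m$-sample by first drawing $i$, then for each of the $m$ draws independently tossing a $\gamma$-biased coin --- on ``heads'' output $i$, on ``tails'' output a fresh sample from $U_n$ --- and let $H \subseteq [m]$ be the set of heads. Conditioned on $|H| = 0$ the sample is literally $U_n^{\otimes m}$, and conditioned on $|H| = 1$ it is \emph{also} exactly $U_n^{\otimes m}$, because a single coordinate pinned to the uniformly random $i$ is itself just a uniform coordinate. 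Hence $\nu$ coincides with $U_n^{\otimes m}$ off the event $\{|H| \ge 2\}$, so $d_{TV}\bigl(\nu,\, U_n^{\otimes m}\bigr) \le \Pr[|H| \ge 2] \le \binom{m}{2}\gamma^2 = \bigO{m^2 \epsilon^2}$, which drops below $1 - 2\delta$ once $m$ is a suitable constant times $\frac1\epsilon$. Comparing the two expressions then yields the stated crossover --- the first bound is larger when $\Theta(n/\ln n) \le 1/\epsilon$ and the second for all larger $n$.

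The step I expect to be the main obstacle is the bookkeeping in the first bound: pushing the exact product formula for $\chi^2$ and the bound $d_{TV} \le \frac12\sqrt{\chi^2}$ through with honest constants so that the clean expression $\frac12 \cdot \frac{\ln(1 + n(1-2\delta)^2)}{n\epsilon^2}$ actually falls out rather than something messier, and, relatedly, choosing $\gamma$ so the instance is at once a legal probability distribution and at $\ell_\infty$-distance exactly $\epsilon$. The coupling underlying the second bound and the norm-inequality reduction are comparatively routine.
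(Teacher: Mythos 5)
Your proposal is correct and follows essentially the same route as the paper: reduce every $p$ to $\ell_\infty$ via $\|V\|_p \ge \|V\|_\infty$, and prove the first $\ell_\infty$ bound with exactly the paper's one-heavy-coordinate family (random position, probabilities $\tfrac1n+\epsilon$ and $\tfrac1n-\tfrac{\epsilon}{n-1}$) and the same Paninski-style second-moment computation — your $\chi^2$ factorization is the paper's bound on $n^m\|\vec Z\|_2^2-1$ in different notation, and your constants come out at least as strong as the stated ones. The only (minor) deviation is in the $m \gtrsim 1/\epsilon$ bound, where the paper fixes a single distribution and conditions on never sampling the heavy coordinate, while you condition on at most one ``planted'' draw in the mixture representation; both are valid and give at least the claimed constant.
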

\begin{proof}
In the
\ifnum\fullversion=1
appendix (Theorems \ref{theorem:test-unif-lower-infty-eps} and \ref{theorem:test-unif-lower-infty-n}),
\else
full version of the paper,
\fi
it is proven that this is a lower-bound on the number of samples for the case $p=\infty$. By the $p$-norm inequality mentioned above, for any $p \leq \infty$ and any vector $V$, $\|V\|_p \geq \|V\|_{\infty}$. In particular, suppose we had an $\ell_p$ testing algorithm. When the sampling distribution $A = U_n$, then by the guarantee of the $\ell_p$ tester it is correct with probability at least $1-\delta$; when $\|A-U_n\|_{\infty} \geq \epsilon$, we must have $\|A-U_n\|_p \geq \epsilon$ and so again by the guarantee of the $\ell_p$ tester it is correct with probability $1-\delta$. Thus the lower bound for $\ell_{\infty}$ holds for any $\ell_p$ algorithm as well.
\end{proof}
The lower bound for $\ell_{\infty}$ distance is proven by again splitting into the large and small $n$ cases. In the large $n$ case, we can simply consider the distribution
 \[ A^* = \left(\frac{1}{n} + \epsilon, ~ \frac{1}{n} - \frac{\epsilon}{n-1}, ~ \dots, ~ \frac{1}{n} - \frac{\epsilon}{n-1}\right) . \]
If $m$ is too small, then the algorithm probably does not draw any sample of the first coordinate; but conditioned on this, $A^*$ is indistinguishable from uniform (since it is uniform on the remaining coordinates).

In the small $n$ case, we adapt the general approach of \cite{paninski2008coincidence} that was used to prove tight lower bounds for the case $p \leq 2$. We consider choosing a random permutation of $A^*$ and then drawing $m$ i.i.d. samples from this distribution. As before, we bound the success probability of any algorithm in terms of the distance between the distribution of these samples and that of the samples from $U_n$.

Comparing Theorems \ref{theorem:test-unif-upper-use2} and \ref{theorem:test-unif-lower-useinf}, we see a relatively small gap for the small $n$ regime for $2 < p < \infty$, which is left open. A natural conjecture is that the sample complexity will be $\frac{1}{\epsilon}$ for the regime $n \geq \frac{1}{\epsilon^q}$. For the small $n$ regime, it is not clear what to expect; perhaps $\frac{1}{n^{1/q} \epsilon^2}$. New techniques seem to be required, since neither the analysis of collisions as in the case $p \leq 2$, nor the analysis of the single most different coordinate, as we will see for the $p=\infty$ case below, seems appropriate or tight for the case $2 < p < \infty$. \\

\newcommand{\myalpha}[1]{\frac{1}{#1}\left(1 + \frac{\ln(2#1)}{\ln(1/\delta)}\right)}
\textbf{A better $\ell_{\infty}$ tester.} For the $\ell_{\infty}$ case, the $\ell_2$ tester is optimal in the regime where $n \geq \frac{1}{\epsilon^2}$, as proven in Theorem \ref{theorem:test-unif-upper-use2}. For smaller $n$, a natural algorithm (albeit with some tricky specifics), Algorithm \ref{alg:test-unif-infty}, gives an upper bound that matches the lower bound up to constant factors. We first state this upper bound, then explain.

\begin{restatable}{theorem}{theoremtestinftysufficient}
\label{theorem:test-infty-sufficient}
For uniformity testing with $\ell_p$ distance, it suffices to run Algorithm \ref{alg:test-unif-infty} with the following number of samples:
 \[ m = \begin{cases} 23 \frac{\ln\left(\frac{2n}{\delta}\right)}{n \epsilon^2}  & \epsilon \leq 2\alpha(n)  \\[1em]
                      35 \frac{\ln\left(\frac{1}{\delta}\right)}{\epsilon}       & \epsilon > 2\alpha(n)  \end{cases} \]
where $\alpha(n) = \myalpha{n}$. In particular, for a fixed failure probability $\delta$, we have
 \[ \alpha(n) = \bigTheta{\frac{\ln(n)}{n}} . \]
\end{restatable}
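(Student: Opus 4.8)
The plan is the standard two-part argument: \emph{completeness}, that on input $U_n$ Algorithm~\ref{alg:test-unif-infty} outputs ``uniform'' with probability at least $1-\delta$; and \emph{soundness}, that on any $A$ with $\|A-U_n\|_{\infty} \geq \epsilon$ it outputs ``not uniform'' with probability at least $1-\delta$. Algorithm~\ref{alg:test-unif-infty} works from the per-coordinate counts: let $N_i$ be the number of the $m$ drawn samples that land on coordinate $i$. Under $U_n$ each $N_i$ is $\mathrm{Bin}(m,\tfrac{1}{n})$ with mean $\tfrac{m}{n}$ (and the $N_i$ are negatively associated), while $\|A-U_n\|_{\infty}\geq\epsilon$ forces a ``witness'' coordinate $i^{\star}$ with either $A_{i^{\star}}\geq\tfrac{1}{n}+\epsilon$ (a ``heavy'' witness) or $A_{i^{\star}}\leq\tfrac{1}{n}-\epsilon$ (a ``light'' witness, which can occur only when $\epsilon\leq\tfrac{1}{n}$); in either case $N_{i^{\star}}$ is a binomial whose mean is displaced from $\tfrac{m}{n}$ by at least $m\epsilon$. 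So completeness is a concentration bound on the $N_i$ under $U_n$, obtained from a Chernoff/Bernstein tail estimate together with a union bound over the $n$ coordinates --- this union bound is why a $\ln n$, rather than merely a $\ln(1/\delta)$, enters in the small-$\epsilon$ case --- and soundness says that the $\bigTheta{m\epsilon}$ displacement on $N_{i^{\star}}$ drives Algorithm~\ref{alg:test-unif-infty}'s statistic past its threshold except with probability $\delta$.

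The case split at $\epsilon=2\alpha(n)$, where $\alpha(n)=\ln(2n/\delta)/(n\ln(1/\delta))$, is exactly the crossover point of the two resulting bounds on $m$. With $m$ chosen as in the theorem, one checks that $m\epsilon\geq\tfrac{23}{2}\ln(1/\delta)$ in the first case (using $n\epsilon\leq 2\ln(2n/\delta)/\ln(1/\delta)$) and $m\epsilon=35\ln(1/\delta)$ in the second, so the witness displacement is $\bigOmega{\ln(1/\delta)}$ in both cases, which is what soundness of a single-coordinate binomial test demands; reading this off directly gives the second-case formula $m=\bigTheta{\ln(1/\delta)/\epsilon}$. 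In the first case the binding constraint is instead quadratic in the displacement: the relevant deviation of the counts under $U_n$ has the ``variance'' form, so one needs $(m\epsilon)^2$ to exceed roughly $\tfrac{m}{n}\ln(2n/\delta)$ --- equivalently, this is exactly what is needed to place a \emph{lower} threshold at distance $\bigTheta{m\epsilon}$ below $\tfrac{m}{n}$, i.e.\ to catch a light witness --- and rearranging gives $m=\bigTheta{\ln(2n/\delta)/(n\epsilon^2)}$. At $\epsilon=2\alpha(n)$ the two formulas agree up to a constant factor, so the piecewise bound is consistent. The asymptotic claim $\alpha(n)=\bigTheta{\ln(n)/n}$ for fixed $\delta$ is immediate from $\alpha(n)=\tfrac{1}{n}+\tfrac{\ln(2n)}{n\ln(1/\delta)}$, since both summands are positive and the second is $\bigTheta{\ln(n)/n}$.

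The main obstacle --- the ``tricky specifics'' --- is carrying out the completeness estimate uniformly over the whole first regime, where $\lambda:=m/n$ ranges from large (when $\epsilon\leq\tfrac{1}{n}$ we have $m\geq 23n\ln(2n/\delta)$, so $\lambda$ is large, the counts are Gaussian-like, and the lower Chernoff tail is in force) down to small, even $\lilo{1}$ (when $\epsilon$ approaches $2\alpha(n)$), where $\mathrm{Bin}(m,\tfrac{1}{n})$ is far from Gaussian and the maximum of $n$ such counts scales like $\bigTheta{\ln(n/\delta)/\ln(1/\lambda)}$ rather than mean-plus-standard-deviation. There the threshold cannot simply be $\tfrac{m}{n}$ plus a variance term: one must use the sharp ``Poisson'' form of the upper Chernoff tail, choose the threshold and $m$ so that both the union bound under $U_n$ and the single-coordinate bound under an $\epsilon$-far $A$ close with the stated small constants, and verify that the same choice of $m$ handles heavy and light witnesses simultaneously. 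Once the regimes are separated in this way, the remaining work --- checking the inequalities relating $m\epsilon$, $m/n$, and the threshold in each case, and pinning down the constants $23$ and $35$ --- is routine.
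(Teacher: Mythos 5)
There is a genuine gap in the regime $\epsilon > 2\alpha(n)$: you analyze Algorithm~\ref{alg:test-unif-infty} as a purely per-coordinate test, but in that regime the algorithm does not test coordinates at all --- it first partitions the $n$ coordinates into at most $2\lceil\hat{n}\rceil$ groups, where $\hat{n}$ solves $\epsilon = 2\alpha(\hat{n})$, and thresholds the \emph{group} counts. This bucketing is not a cosmetic detail; it is what makes the bound $m = 35\ln(1/\delta)/\epsilon$ possible. The hard direction in this regime is completeness, not soundness: under $U_n$ one must show that no count reaches roughly $m\epsilon$, and with a per-coordinate union bound over all $n$ coordinates this fails to close with the stated constant near the boundary $\epsilon \approx 2\alpha(n)$. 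There each coordinate has mean $m/n = \Theta\bigl(1/\ln n\bigr)$ (for fixed $\delta$) and the threshold is $\Theta(\ln(1/\delta))$, so the per-coordinate tail is about $\exp\bigl[-\Theta(m\epsilon\,\ln(n\epsilon))\bigr]$ and the union over $n$ coordinates forces $m = \Omega\bigl(\ln(n/\delta)/(\epsilon\ln(n\epsilon))\bigr)$, which exceeds $O(\ln(1/\delta)/\epsilon)$ when $\epsilon = \Theta(\ln n/n)$. The paper's proof (Lemma \ref{lemma:test-infty-large-equal}) avoids this precisely by the choice of $\hat{n}$: since $\epsilon = 2\alpha(\hat{n})$, one has $\ln(2\hat{n}/\delta)/\hat{n} = \epsilon\ln(1/\delta)/2$, so the Chernoff exponent for a group (mean at most $m/\hat{n} \leq m\epsilon/2$) plus the union bound over $2\hat{n}$ groups balance exactly, yielding $m \geq 35\ln(1/\delta)/\epsilon$. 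Your proposal never mentions groups, and your claim that the second-case formula follows because ``soundness of a single-coordinate binomial test demands'' $m\epsilon = \Omega(\ln(1/\delta))$ has the logic reversed: in the paper the soundness lemma in this regime (Lemma \ref{lemma:test-infty-large-different}) holds for \emph{every} $m$, and it is the uniform-case union bound that determines $m$.

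Your treatment of the regime $\epsilon \leq 2\alpha(n)$ does match the paper's approach (Chernoff plus union bound over coordinates for $A = U_n$, and a single displaced binomial for a heavy or light witness, the two cases giving the constants near $12$ and $23$). The concern you single out as the main obstacle --- that $m/n$ can be $o(1)$ near the boundary so the sub-Gaussian form $e^{-t^2/(3\mu)}$ of the upper tail is delicate there --- is a fair technical point (the paper simply invokes that bound throughout), but it is not where the theorem's difficulty lies, and raising it does not repair the missing bucketed completeness argument in the $\epsilon > 2\alpha(n)$ case.
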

To understand Algorithm \ref{alg:test-unif-infty}, consider separately the two regimes: $\bigTheta{\frac{n}{\ln(n)}} \leq \frac{1}{\epsilon}$ and otherwise. For details of the analysis, rather than phrasing the threshold in this way, we phrase it as $\epsilon \leq 2\alpha(n)$ where $\alpha(n) = \bigTheta{\frac{\ln(n)}{n}}$, but the actual form of $\alpha$ is more complicated because it depends on $\delta$.

In the first, smaller-$n$ regime, our approach will essentially be a Chernoff plus union bound. We will draw $m = \bigTheta{\frac{\ln(n)}{n \epsilon^2}}$ samples. Then Algorithm \ref{alg:test-unif-infty} simply checks for any coordinate with an ``outlier'' number of samples (either too many or too few). The proof of correctness is that, if the distribution is uniform, then by a Chernoff bound on each coordinate and union-bound over the coordinates, with high probability no coordinate has an ``outlier'' number of samples; on the other hand, if the distribution is non-uniform, then there is an ``outlier'' coordinate in terms of its probability and by a Chernoff bound this coordinate likely has an ``outlier'' number of samples.

In the second, larger-$n$ regime (where $\epsilon > 2\alpha(n)$), we will use the same approach, but first we will ``bucket'' the distribution into $\hat{n}$ groups where $\hat{n}$ is chosen such that $\epsilon = 2\alpha(\hat{n})$. In other words, no matter how large $n$ is, we choose $\hat{n}$ so that $\epsilon = \bigTheta{\frac{\ln(\hat{n})}{\hat{n}}}$ and treat each of the $\hat{n}$ groups as its own coordinate, counting the number of samples that group gets.

In this larger-$n$ regime, note that $\epsilon$ is large compared to the probability that the uniform distribution puts on each coordinate, or in fact on each group. So if $\|A-U\|_{\infty} \geq \epsilon$, then there is a ``heavy'' coordinate (and thus group containing it) that should get an outlier number of samples. We also need, by a Chernoff plus union bound, that under the uniform distribution, probably no group is an outlier. The key point of our choice of $\hat{n}$ is that it exactly balances this Chernoff plus union bound.
\begin{algorithm}
\caption{Uniformity Tester for $\ell_{\infty}$}
\label{alg:test-unif-infty}
 \begin{algorithmic}
  \State \emph{On input $n, \epsilon$, and failure probability $\delta$:}
  \State Choose $m$ to be ``sufficient'' for $n,\epsilon,\delta$ according to proven bounds.
  \State Draw $m$ samples.
  \State Let $\alpha(x) = \myalpha{x} = \bigTheta{\frac{\ln(x)}{x}}$.
  \If{$\epsilon \leq 2\alpha(n)$}
    \State Let $t = \sqrt{\frac{6m}{n} \ln\left(\frac{2n}{\delta}\right)}$.
    \State If, for all coordinates $i$, the number of samples $X_i \in \frac{m}{n} \pm t$, output ``uniform''.
    \State Otherwise, output ``not uniform''.
  \Else
    \State Let $\hat{n}$ satisfy $\epsilon = 2\alpha(\hat{n})$.
    \State Partition the coordinates into at most $2\lceil \hat{n} \rceil$ groups, each of size at most $\lfloor \frac{n}{\hat{n}} \rfloor$.
    \State For each group $j$, let $X_j$ be the total number of samples of coordinates in that group.
    \State Let $t = \sqrt{6m\epsilon \ln\left(\frac{1}{\delta}\right)}$.
    \State If there exists a group $j$ with $X_j \geq m\epsilon - t$, output ``not uniform''.
    \State Otherwise, output ``uniform''.
  \EndIf
 \end{algorithmic}
\end{algorithm}

\section{Distribution Learning}
Recall the definition of the learning problem: Given i.i.d. samples from a distribution $A$, we must output a distribution $\hat{A}$ satisfying that $\|A-\hat{A}\|_p \leq \epsilon$. This condition must be satisfied except with probability at most $\delta$.

\subsection{Upper Bounds}
Here, Algorithm \ref{alg:learner} is the natural/naive one: Let the probability of each coordinate be the frequency with which it is sampled.
\begin{algorithm}
\caption{Learner}
\label{alg:learner}
 \begin{algorithmic}
  \State \emph{On input $p, n, \epsilon$, and failure probability $\delta$:}
  \State Choose $m$ to be ``sufficient'' for $p,n,\epsilon,\delta$ according to proven bounds.
  \State Draw $m$ samples.
  \State Let $X_i$ be the number of samples drawn of each coordinate $i \in \{1,\dots,n\}$.
  \State Let each $\hat{A}_i = \frac{X_i}{m}$.
  \State Output $\hat{A}$.
 \end{algorithmic}
\end{algorithm}

The proofs of the upper bounds rely on an elegant proof approach which is apparently ``folklore'' or known for the $\ell_2$ setting, and was introduced to the author by Cl\'{e}ment Canonne\cite{clement2014private} who contributed it to this paper. The author and Canonne in collaboration extended the proof to general $\ell_p$ metrics in order to prove the bounds in this paper. Here, we give the theorem and proof for perhaps the most interesting and novel case, that for $1 < p \leq 2$, $\bigO{\frac{1}{\epsilon^q}}$ samples are sufficient independent of $n$. The other cases have a similar proof structure.
\begin{theorem} \label{theorem:learn-upper-non}
For $1 < p \leq 2$, to learn up to $\ell_p$ distance $\epsilon$ with failure probability $\delta$, it suffices to run Algorithm \ref{alg:learner} while drawing the following number of samples:
  \[ m = \left(\frac{3}{\delta}\right)^{\frac{1}{p-1}}\frac{1}{\epsilon^q} . \]
\end{theorem}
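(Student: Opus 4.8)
The plan is to bound the expected $p$-th power of the error and apply Markov's inequality. Since $\|A-\hat A\|_p^p$ is a nonnegative random variable, it suffices to show $\E\bigl[\,\|A-\hat A\|_p^p\,\bigr] \le \delta\epsilon^p$, for then by Markov's inequality $\Pr[\,\|A-\hat A\|_p > \epsilon\,] = \Pr[\,\|A-\hat A\|_p^p > \epsilon^p\,] \le \E[\|A-\hat A\|_p^p]/\epsilon^p \le \delta$, which is exactly the desired guarantee. By linearity of expectation this expectation equals $\sum_{i=1}^n \E\bigl[\,|A_i - \hat A_i|^p\,\bigr]$, and since $\hat A_i = X_i/m$ with $X_i \sim \mathrm{Bin}(m, A_i)$, everything reduces to controlling the $p$-th absolute central moment of a single rescaled binomial coordinate.

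The crux is the per-coordinate estimate: for $1 < p \le 2$ and $X \sim \mathrm{Bin}(m,\theta)$,
\[ \E\bigl[\,|X/m - \theta|^p\,\bigr] \;\le\; \frac{C\,\theta}{m^{p-1}} \]
for an absolute constant $C$ (the stated bound corresponds to $C=3$). The essential feature is that the right-hand side is \emph{linear} in $\theta$: summing over $i$ and using $\sum_i A_i = 1$ collapses the $n$ terms into $\E\bigl[\,\|A-\hat A\|_p^p\,\bigr] \le C\,m^{1-p}$, with no leftover dependence on $n$. Then $C\,m^{1-p} \le \delta\epsilon^p$ holds as soon as $m^{p-1} \ge C/(\delta\epsilon^p)$, i.e.\ $m \ge (C/\delta)^{1/(p-1)}\epsilon^{-p/(p-1)} = (C/\delta)^{1/(p-1)}\epsilon^{-q}$, using $q = p/(p-1)$. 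This is precisely the claimed sample bound.

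The main obstacle is proving the linear-in-$\theta$ moment bound with a small constant. The naive move --- interpolating the $p$-th moment between the first and second via H\"older, $\E|X/m-\theta|^p \le \bigl(\E(X/m-\theta)^2\bigr)^{p-1}\bigl(\E|X/m-\theta|\bigr)^{2-p} \le (\theta/m)^{p/2}$ --- is \emph{not} linear in $\theta$; summed, it gives $m^{-p/2}\sum_i A_i^{p/2} \le m^{-p/2}n^{1-p/2}$, which forces $m \gtrsim n^{2/p-1}/\epsilon^2$, exactly the weaker $n$-dependent bound governing the small-$n$ regime, not what is wanted here. To get the linear bound I would write $X/m - \theta = \tfrac1m\sum_{j=1}^m (Z_j - \theta)$ as a normalized sum of i.i.d.\ centered indicators and invoke an $L^p$ moment inequality for sums of independent mean-zero terms (Marcinkiewicz--Zygmund / von Bahr--Esseen: $\E\,|\sum_j W_j|^p \le 2\sum_j \E|W_j|^p$ for $1\le p\le 2$), together with $\E|Z_j-\theta|^p = \theta(1-\theta)\bigl[(1-\theta)^{p-1}+\theta^{p-1}\bigr] \le 2^{2-p}\,\theta$; this already yields $C = 2^{3-p}\le 4$, and pushing the constant down to $3$ requires either a sharper moment inequality tailored to Bernoulli summands or a case split on whether $mA_i \le 1$ or $mA_i > 1$ (a Poisson-type tail estimate in one regime, the second-moment bound in the other). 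A cleaner alternative route, which avoids the moment inequality but changes the dependence on $\delta$, is to combine the folklore $\ell_2$ fact $\E\,\|A-\hat A\|_2^2 \le 1/m$ with the log-convexity inequality $\|v\|_p \le \|v\|_1^{(2-p)/p}\,\|v\|_2^{2/q}$ and the trivial bound $\|A-\hat A\|_1 \le 2$, then apply Markov to $\|A-\hat A\|_2^2$; this too gives a sufficient $m$ of order $1/(\delta\epsilon^q)$.
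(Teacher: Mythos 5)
Your proposal is correct in its top-level structure and matches the paper's strategy exactly there: bound $\E\|\hat A-A\|_p^p$ coordinatewise, exploit that the per-coordinate bound is linear in $A_i$ so the sum collapses via $\sum_i A_i=1$ to $C\,m^{1-p}$ with no $n$-dependence, and finish with Markov applied to $\|\hat A-A\|_p^p$. Where you genuinely diverge is in the key per-coordinate lemma. The paper proves $\E|X_i-\E X_i|^p\leq 3\,\E X_i$ by an elementary split over realizations: terms with $|x-\E X_i|\geq 1$ are absorbed into the variance (which is at most $\E X_i$), and the at most two terms with $|x-\E X_i|<1$ contribute at most $2\,\E X_i$ after a short case analysis on whether $\E X_i\geq 1$ (using Markov's inequality when $\E X_i<1$). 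You instead write $X_i/m-A_i$ as a normalized sum of centered Bernoullis and invoke von Bahr--Esseen, together with $\E|Z_j-A_i|^p\leq 2^{2-p}A_i$, obtaining $C=2^{3-p}\leq 4$; this is a clean and correct argument, arguably more modular than the paper's, but it only yields the stated constant $3$ when $p\geq 3-\log_2 3\approx 1.42$, so as written the theorem's exact bound $m=(3/\delta)^{1/(p-1)}\epsilon^{-q}$ is not literally established for $p$ close to $1$ --- only the same bound with $3$ replaced by $2^{3-p}$. You flag this and sketch two repairs: the case split on $mA_i$ (which is essentially the paper's own argument, so carrying it out would reduce to the paper's proof), and the route via $\E\|\hat A-A\|_2^2\leq 1/m$ plus the log-convexity inequality $\|v\|_p^q\leq\|v\|_1^{q-2}\|v\|_2^2$ with $\|v\|_1\leq 2$ --- which is precisely the paper's Theorem \ref{theorem:learn-upper-use2}, and in fact gives a \emph{better} $\delta$-dependence ($1/\delta$ rather than $(3/\delta)^{1/(p-1)}$); to use it to certify the literal statement you would just add the easy check that the stated $m$ dominates that sufficient sample size and that extra samples never hurt. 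Your observation that the naive H\"older interpolation $\E|X_i/m-A_i|^p\leq(A_i/m)^{p/2}$ is not linear in $A_i$ and only recovers the $n$-dependent small-$n$ bound is accurate and shows why the linear-in-$A_i$ moment bound is the real content of the theorem. In short: same skeleton, different and valid proof of the key lemma, with only the specific constant $3$ left slightly short of fully justified for small $p$.
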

\begin{proof}
Let $X_i$ be the number of samples of coordinate $i$ and $\hat{A}_i = \frac{X_i}{m}$. Note that $X_i$ is distributed Binomially with $m$ independent trials of probability $A_i$ each. We have that
 \[ \E \|\hat{A}-A\|_p^p = \frac{1}{m^p} \sum_{i=1}^n \E \left|X_i - \E X_i\right|^p . \]
We will show that, for each $i$, $\E\left|X_i - \E X_i\right|^p \leq 3\E X_i$. This will complete the proof, as then
\begin{align*}
 \E\|\hat{A}-A\|_p^p
  &\leq \frac{1}{m^p} \sum_{i=1}^n 3\E X_i  \\
  &= \frac{1}{m^p} \sum_{i=1}^n 3m A_i  \\
  &= \frac{3}{m^{p-1}} ;
\end{align*}
and by Markov's Inequality,
\begin{align*}
 \Pr[\|\hat{A}-A\|_p^p \geq \epsilon^p]
  &\leq \frac{3}{m^{p-1} \epsilon^p} ,
\end{align*}
which for $m = \left(\frac{3}{\delta}\right)^{\frac{1}{p-1}}\frac{1}{\epsilon^q}$ is equal to $\delta$.

To show that $\E\left|X_i - \E X_i\right|^p \leq 3\E X_i$, fix any $i$ and consider a possible realization $x$ of $X_i$. If $|x - \E X_i| \geq 1$, then $|x - \E X_i|^p \leq |x - \E X_i|^2$. We can thus bound the contribution of all such terms by $\E|X_i - \E X_i|^2 = Var X_i$.

If, on the other hand, $|x - \E X_i| < 1$, then $|X_i - \E X_i|^p \leq |X_i - \E X_i|$; furthermore, at most two terms satisfy this condition, namely (letting $\beta := \lfloor \E X_i \rfloor$) $x = \beta$ and $x = \beta + 1$. These terms contribute a total of at most
\begin{align*}
  &\Pr[X_i = \beta]|\E X_i - \beta| + \Pr[X_i = \beta+1]|\beta + 1 - \E X_i| \\
 \leq &\E X_i + \Pr[X_i = \beta + 1] .
\end{align*}
Consider two cases. If $\E X_i \geq 1$, then the contribution is at most $\E X_i + 1 \leq 2\E X_i$. If $\E X_i < 1$, then $\beta + 1 = 1$, and by Markov's Inequality, $\Pr[X_i \geq 1] \leq \E X_i$, so the total contribution is again bounded by $2\E X_i$.

Thus, we have
\begin{align*}
 \E |X_i - \E X_i|^p
  &\leq Var X_i + 2\E X_i \\
  &\leq 3 \E X_i
\end{align*}
because $Var X_i = (1-A_i)\E X_i$.
\end{proof}

\vspace{12pt}
A slightly tighter analysis can be obtained by reducing to the $\ell_2$ algorithm, in which the above proof technique is ``tightest''. It produces the following theorem:

\begin{restatable}{theorem}{theoremlearnupperusetwo}
\label{theorem:learn-upper-use2}
For learning a discrete distribution with $1 \leq p \leq 2$, it suffices to run Algorithm \ref{alg:learner} with the following number of samples:
  \[ m = \frac{1}{\delta} \begin{cases} \frac{n}{\left(n^{1/q} \epsilon\right)^2}  & n \leq \left(\frac{2}{\epsilon}\right)^q  \\[1em]
                                        \frac{1}{4}\left(\frac{2}{\epsilon}\right)^q   & n \geq \left(\frac{2}{\epsilon}\right)^q . \end{cases} \]
With $p \geq 2$, it suffices to draw the sufficient number for $\ell_2$ learning, namely
  \[ m = \frac{1}{\delta} \frac{1}{\epsilon^2} . \]
\end{restatable}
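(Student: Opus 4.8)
The plan is to reduce every case of this theorem to the $\ell_2$ learning bound, which admits a clean analysis with the right dependence on $\delta$. Specializing the technique of Theorem~\ref{theorem:learn-upper-non} to $p=2$ (where no case-split is needed, since $\E\lvert X_i-\E X_i\rvert^2=\mathrm{Var}(X_i)$ exactly), we get
\[
 \E\lVert\hat A-A\rVert_2^2 \;=\; \frac1{m^2}\sum_{i=1}^n\mathrm{Var}(X_i) \;=\; \frac1m\sum_{i=1}^n A_i(1-A_i) \;=\; \frac1m\bigl(1-\lVert A\rVert_2^2\bigr)\;\le\;\frac1m ,
\]
so Markov's inequality applied to $\lVert\hat A-A\rVert_2^2$ shows that $m=\tfrac1\delta\cdot\tfrac1{\epsilon^2}$ samples suffice for $\ell_2$ learning. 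This already proves the $p\ge2$ statement: for every vector $V$ and $p\ge2$ one has $\lVert V\rVert_p\le\lVert V\rVert_2$, so any $\hat A$ with $\lVert\hat A-A\rVert_2\le\epsilon$ also satisfies $\lVert\hat A-A\rVert_p\le\epsilon$.

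For $1\le p\le2$ I would obtain the two cases by running Algorithm~\ref{alg:learner} and analyzing it as an $\ell_2$ learner with a rescaled target $\epsilon'$, then converting back via a norm inequality. For the small-$n$ case, use the (Hölder) power-mean inequality $\lVert V\rVert_p\le n^{1/p-1/2}\lVert V\rVert_2=(\sqrt n/n^{1/q})\lVert V\rVert_2$: it then suffices to learn to $\ell_2$ distance $\epsilon'=n^{1/q}\epsilon/\sqrt n$, which by the base case costs $m=\tfrac1\delta\cdot\tfrac1{(\epsilon')^2}=\tfrac1\delta\cdot\tfrac{n}{(n^{1/q}\epsilon)^2}$, matching the first branch. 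For the large-$n$ case, interpolate between $\ell_1$ and $\ell_2$ instead: $\lVert V\rVert_p\le\lVert V\rVert_1^{1-2/q}\lVert V\rVert_2^{2/q}$, and crucially $\lVert\hat A-A\rVert_1\le\lVert\hat A\rVert_1+\lVert A\rVert_1=2$ because both are probability distributions. Hence $\lVert\hat A-A\rVert_p\le 2^{1-2/q}\lVert\hat A-A\rVert_2^{2/q}$, so it suffices to learn to $\ell_2$ distance $\epsilon'$ with $(\epsilon')^2=2^{2-q}\epsilon^q=4(\epsilon/2)^q$; plugging this into the base case gives $m=\tfrac1\delta\cdot\tfrac1{(\epsilon')^2}=\tfrac1{4\delta}(2/\epsilon)^q$, matching the second branch. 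Since both reductions are valid for all $n$, the stated bound is simply the pointwise minimum of the two, and one checks that the power-mean bound is the smaller one exactly when $n\le(2/\epsilon)^q$, with the two coinciding at the threshold; for $p=1$ (so $q=\infty$) the first branch covers all $n$ and recovers the folklore $\ell_1$ bound.

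I do not expect a genuinely hard step: everything rests on the single second-moment computation above together with standard norm inequalities. The main point requiring care is constant-tracking in the large-$n$ reduction — one must keep the sharp interpolation factor $2^{1-2/q}$ rather than the cruder $\lVert V\rVert_p\le 2\lVert V\rVert_2^{2/q}$, since the latter would yield only $m=\tfrac1\delta(2/\epsilon)^q$ rather than the stated $\tfrac1{4\delta}(2/\epsilon)^q$ — plus the routine bookkeeping at the boundary $n=(2/\epsilon)^q$ and the degenerate values $p=1$, $q=\infty$ under the conventions $n^{1/\infty}=1$.
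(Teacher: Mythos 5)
Your proposal is correct and follows essentially the same route as the paper: the $\ell_2$ second-moment bound $\E\|\hat A-A\|_2^2\le 1/m$ plus Markov (Theorem \ref{theorem:learn-l2-upper}), monotonicity of norms for $p\ge 2$, the $n^{1/p-1/2}$ comparison for the small-$n$ branch, and the $\ell_1$--$\ell_2$ interpolation with $\|\hat A-A\|_1\le 2$ for the large-$n$ branch. The interpolation inequality you invoke as standard is exactly the paper's Lemma \ref{lemma:pnorm-leq-2}, and your constants and the crossover at $n=(2/\epsilon)^q$ match the paper's proof.
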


In fact, $\|A-\hat{A}\|_p$ is tightly concentrated around its expectation, allowing a better asymptotic dependence on $\delta$ when high confidence is desired.
This idea is also folklore and not original to this paper. Here we apply it as follows. We must draw enough samples so that, first, the expectation of $\|\hat{A}-A\|_p$ is smaller than $\frac{\epsilon}{2}$; and second, we must draw enough so that, with probability $1-\delta$, $\|\hat{A}-A\|_p$ is no more than $\frac{\epsilon}{2}$ greater than its expectation.
It suffices to take the maximum of the number of samples that suffice for each condition to hold, resulting in the following bounds.
\begin{restatable}{theorem}{theoremlearnupperlogdelta}
\label{theorem:learn-upper-logdelta}
For learning a discrete distribution with $1 \leq p \leq 2$ and failure probability $\delta$, it suffices to run Algorithm \ref{alg:learner} with the following number of samples:
 \[ m = \max \left\{ \frac{2^{\frac{2}{p}+1} \ln(1/\delta)}{\epsilon^2} ~,~ M \right\} , \]
where
 \[ M = \begin{cases} 4\frac{n}{\left(n^{1/q} \epsilon\right)^2}  & n \leq \left(\frac{4}{\epsilon}\right)^q  \\[1em]
                                    \frac{1}{4} \left(\frac{4}{\epsilon}\right)^q  & n \geq \left(\frac{4}{\epsilon}\right)^q . \end{cases} \]
For $p \geq 2$, it suffices to use the sufficent number of samples for $\ell_2$ learning, namely 
 \[ m = \max\left\{ \frac{4\ln(1/\delta)}{\epsilon^2} ~,~ \frac{4}{\epsilon^2} \right\}. \]
\end{restatable}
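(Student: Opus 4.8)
The plan is to split the requirement ``$\Pr[\|\hat A-A\|_p\ge\epsilon]\le\delta$'' into two pieces: an \emph{expectation bound}, $\E\|\hat A-A\|_p\le\frac{\epsilon}{2}$, which is what the term $M$ buys, and a \emph{concentration bound}, $\Pr\!\big[\|\hat A-A\|_p\ge \E\|\hat A-A\|_p+\frac{\epsilon}{2}\big]\le\delta$, which is what the term $\frac{2^{2/p+1}\ln(1/\delta)}{\epsilon^2}$ buys. Since the stated $m$ is the maximum of the two, both hold simultaneously, and then $\Pr[\|\hat A-A\|_p\ge\epsilon]\le\Pr\!\big[\|\hat A-A\|_p\ge\E\|\hat A-A\|_p+\tfrac{\epsilon}{2}\big]\le\delta$, as required. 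The $p\ge2$ case is the same argument with the simpler norm inequality $\|v\|_p\le\|v\|_2$.

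For the expectation bound I would start from the exact identity $\E\|\hat A-A\|_2^2=\sum_i\mathrm{Var}(\hat A_i)=\frac1m\big(1-\|A\|_2^2\big)\le\frac1m$ (using $\hat A_i=X_i/m$ with $X_i\sim\mathrm{Bin}(m,A_i)$). For $1\le p\le2$ I then translate this to $\ell_p$ in each regime: when $n$ is small I use $\|v\|_p\le n^{1/p-1/2}\|v\|_2$, and when $n$ is large I use the H\"older interpolation $\|v\|_p\le\|v\|_1^{2/p-1}\|v\|_2^{2/q}$ together with $\|\hat A-A\|_1\le2$. In both cases Jensen's inequality (concavity of $t\mapsto t^{1/2}$, resp.\ $t\mapsto t^{1/q}$) applied to $\E\|\hat A-A\|_2^2\le\frac1m$ gives, respectively, $\E\|\hat A-A\|_p\le n^{1/p-1/2}/\sqrt m$ and $\E\|\hat A-A\|_p\le 2^{2/p-1}/m^{1/q}$; requiring each to be $\le\frac{\epsilon}{2}$ yields exactly $m\ge 4\,\frac{n}{(n^{1/q}\epsilon)^2}$ and $m\ge \frac14(\frac4\epsilon)^q$, and a direct calculation shows the two expressions cross at $n=(4/\epsilon)^q$. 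This is essentially the analysis behind Theorem~\ref{theorem:learn-upper-use2} re-run with tolerance $\epsilon/2$ and the $1/\delta$ factor replaced by the constant $4$. For $p\ge2$ the single inequality $\E\|\hat A-A\|_p\le\sqrt{\E\|\hat A-A\|_2^2}\le1/\sqrt m$ gives the threshold $m\ge4/\epsilon^2$.

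For the concentration bound I would apply McDiarmid's bounded-differences inequality to the function $f$ mapping the $m$ i.i.d.\ samples of Algorithm~\ref{alg:learner} to $\|\hat A-A\|_p$. Changing one sample moves $1/m$ of mass from one coordinate to another (or leaves $\hat A$ unchanged if it lands on the same coordinate), so $\|\hat A-\hat A'\|_p\le(2/m^p)^{1/p}=2^{1/p}/m$, and by the reverse triangle inequality $f$ has bounded differences with constant $c=2^{1/p}/m$. McDiarmid then gives $\Pr[f\ge\E f+\frac\epsilon2]\le\exp\!\big(-\tfrac{2(\epsilon/2)^2}{mc^2}\big)=\exp\!\big(-2^{-1-2/p}\epsilon^2 m\big)$, which is at most $\delta$ precisely when $m\ge \frac{2^{2/p+1}\ln(1/\delta)}{\epsilon^2}$; for $p\ge2$ the bound $2^{1/p}\le\sqrt2$ gives the threshold $4\ln(1/\delta)/\epsilon^2$. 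Note only the one-sided upper tail of McDiarmid is needed.

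The main obstacle is bookkeeping of constants rather than any conceptual difficulty: one must use the H\"older interpolation inequality (not the cruder bound $\E\|\hat A-A\|_p^p\le O_p(m^{1-p})$ from the proof of Theorem~\ref{theorem:learn-upper-non}, whose $p$-dependent constant does not collapse to $4^{q-1}$) in order for the large-$n$ branch of $M$ to come out exactly as $\frac14(\frac4\epsilon)^q$, and one must verify that the bounded-differences constant is exactly $2^{1/p}/m$, handling the degenerate resampling case (which only decreases the difference).
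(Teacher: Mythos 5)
Your proposal is correct and follows essentially the same route as the paper: the same decomposition into an expectation bound ($\E\|\hat A-A\|_p\le\epsilon/2$, obtained from $\E\|\hat A-A\|_2^2\le 1/m$ via the $n^{1/p-1/2}$ inequality in the small-$n$ regime and the interpolation $\|V\|_p^q\le\|V\|_1^{q-2}\|V\|_2^2$, i.e.\ Lemma \ref{lemma:pnorm-leq-2}, in the large-$n$ regime, plus Jensen) and a McDiarmid concentration bound with bounded-differences constant $2^{1/p}/m$, yielding exactly the stated constants. The paper's appendix (Lemmas \ref{lemma:learn-upper-l2-mean}, \ref{lemma:learn-upper-lp-mean}, \ref{lemma:learn-upper-lp-concentration} and their corollaries) carries out precisely this argument, so no gaps remain.
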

In particular, for $\ell_1$ learning, it suffices to draw
 \[ m = \max\left\{\frac{8\ln(1/\delta)}{\epsilon^2} ~,~ \frac{4n}{\epsilon^2} \right\} . \]

\subsection{Lower bounds}
\begin{restatable}{theorem}{theoremlearnlower}
\label{theorem:learn-lower}
 To learn a discrete distribution in $\ell_p$ distance, the number of samples required for all $p,\delta$ is at least
 \[ m = \begin{cases} \bigOmega{\frac{1}{\epsilon^2}}  & 2 \leq p \leq \infty  \\[1em]
                      \bigOmega{\frac{1}{\epsilon^q}}  & 1 < p \leq 2, ~ n \geq \frac{1}{\epsilon^q} . \end{cases} \]
 For $1\leq p \leq 2$ and $n \leq \frac{1}{\epsilon^q}$, there is no $\gamma > 0$ such that
 \[ m = \bigO{\frac{n}{\left(n^{1/q}\epsilon\right)^{2-\gamma}}} \]
samples, up to a constant factor depending on $\delta$, suffice for all $\delta$.
\end{restatable}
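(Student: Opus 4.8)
All three statements are lower bounds, and the plan is to prove each via a hard family of distributions, handling them in increasing order of difficulty. For $2\le p\le\infty$, I would reduce to a single coin: since $\|V\|_\infty\le\|V\|_p$ for every vector $V$, any $\ell_p$ learner is also an $\ell_\infty$ learner with the same $\epsilon$, so it suffices to rule out $\ell_\infty$ learning with $o(1/\epsilon^2)$ samples. Take $n=2$ and the two distributions $(\tfrac12,\tfrac12)$ and $(\tfrac12+2\epsilon,\tfrac12-2\epsilon)$, which are at $\ell_\infty$ distance $2\epsilon$; an $\epsilon$-accurate learner distinguishes them with probability $1-\delta$ by the triangle inequality. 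But their single-sample KL divergence is $\bigTheta{\epsilon^2}$, so the total variation distance between the $m$-fold product measures is $\bigO{\epsilon\sqrt m}$, which is $o(1)$ unless $m=\bigOmega{1/\epsilon^2}$. (For $1<p<2$ this same bound holds but is dominated by the next one.)

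For the bound $\bigOmega{1/\epsilon^q}$ when $1<p\le2$ and $n\ge1/\epsilon^q$, the plan is to use a perturbed-uniform family, paralleling the folklore $\bigOmega{n}$ lower bound for $\ell_1$ learning. Fix a small constant $c=c(p)$, set $N=\lceil c/\epsilon^q\rceil$ (so $N\le n$ in this regime), pair up the first $N$ coordinates, and for $\sigma\in\{-1,+1\}^{N/2}$ let $A^{(\sigma)}$ put mass $\tfrac{1+\sigma_k/2}{N}$ and $\tfrac{1-\sigma_k/2}{N}$ on the two coordinates of pair $k$, and $0$ elsewhere; then $\|A^{(\sigma)}-A^{(\sigma')}\|_p^p=2(1/N)^p\,d_H(\sigma,\sigma')$. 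A Gilbert--Varshamov packing of $\{-1,+1\}^{N/2}$ with minimum Hamming distance $\bigOmega{N}$ gives $2^{\bigOmega{N}}$ distributions that are pairwise $\bigTheta{N^{-1/q}}$-separated in $\ell_p$; choosing $c$ small makes this separation exceed $2\epsilon$, so an $\epsilon$-accurate learner must solve a $2^{\bigOmega{N}}$-way hypothesis test. On the other hand $\mathbb{E}_\sigma A^{(\sigma)}=U_N$ and $\mathrm{KL}\!\left(A^{(\sigma)}\,\big\|\,U_N\right)=\bigTheta{1}$ for the fixed perturbation $\tfrac12$, so the mutual information between $\sigma$ and $m$ samples is $\bigO{m}$; Fano's inequality then forces $m=\bigOmega{N}=\bigOmega{1/\epsilon^q}$ (the constant absorbing the dependence on $\delta<1$).

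For the small-$n$ claim ($1\le p\le2$, $n\le1/\epsilon^q$) I would not pin down the exact power of $\epsilon$ but instead reduce learning to uniformity testing: an algorithm that learns to $\ell_p$ distance $\epsilon/2$ yields a correct uniformity tester at tolerance $\epsilon$ by outputting ``uniform'' exactly when $\|\hat A-U_n\|_p<\epsilon/2$ (triangle inequality). Hence Theorem~\ref{theorem:test-unif-lower}, evaluated at $\epsilon/2$, shows learning requires $\bigOmega{\sqrt n/(n^{1/q}\epsilon)^2}$ samples, with the constant depending on $\delta$. Now fix any $n\ge2$ and let $\epsilon\to0$ (so $n\le1/\epsilon^q$ eventually holds): if $\bigO{n/(n^{1/q}\epsilon)^{2-\gamma}}$ samples sufficed for some $\gamma>0$, we would need $\sqrt n/(n^{1/q}\epsilon)^2\le C\,n/(n^{1/q}\epsilon)^{2-\gamma}$ for all small $\epsilon$, i.e. $(n^{1/q}\epsilon)^\gamma\ge1/(C\sqrt n)$, which fails as $\epsilon\to0$. (Alternatively, one can run the two-point version of the perturbed-uniform construction on all $n$ coordinates with perturbation $\rho=\bigTheta{\epsilon n^{1/q}}\le1$, giving the weaker but still sufficient bound $\bigOmega{1/(n^{1/q}\epsilon)^2}$.)

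The main obstacle is the second part: making the packing-plus-Fano argument quantitative. One must check that the hard family lives on at most $n$ coordinates (whence the clamp $N=\lceil c/\epsilon^q\rceil\le n$), that the fixed perturbation keeps all probabilities in $[0,1]$, and --- most delicately --- that the packing's $\ell_p$-separation genuinely exceeds $2\epsilon$, which is what pins down the $p$-dependent constant $c$. The $\mathrm{KL}\!\left(A^{(\sigma)}\,\|\,U_N\right)=\bigTheta{1}$ estimate and the Fano bookkeeping are then routine.
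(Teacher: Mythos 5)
Your proposal is correct, but on two of the three claims it takes a genuinely different route from the paper. For $2\le p\le\infty$ your two-point argument is essentially the paper's: Theorem \ref{theorem:learning-lower-from-coin} reduces learning to distinguishing a $2\epsilon$-biased coin from fair and invokes the $\ell_\infty$ testing lower bound (Theorem \ref{theorem:test-unif-lower-infty-n}) at $n=2$; you compute the distance via KL/Pinsker instead, which is fine (just make the two hypotheses strictly more than $2\epsilon$ apart so nearest-hypothesis decoding has no tie). For the $\bigOmega{1/\epsilon^q}$ bound the paper also runs Fano (Lemmas \ref{lemma:learning-id-game} and \ref{lemma:fano}), but its packing is a volume/Gilbert--Varshamov bound over the whole simplex using $\ell_p$-ball volumes (Lemma \ref{lemma:learning-size-S}, Corollary \ref{corollary:learning-size-S}) and its information bound is an entropy estimate for the histogram of counts (Lemma \ref{lemma:learning-entropy}); you instead use an explicit paired-perturbation family indexed by a hypercube code and bound $I(\sigma;X^m)\le m\max_\sigma \mathrm{KL}\left(A^{(\sigma)}\,\|\,U_N\right)=\bigO{m}$. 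Your route is leaner and yields $m=\bigOmega{N}=\bigOmega{1/\epsilon^q}$ directly; what the paper's heavier packing buys is the $\delta$-dependent exponent $\bigOmega{n/(n^{1/q}\epsilon)^{2(1-\delta)}}$ of Theorem \ref{theorem:learning-lower-our-bounds}, which is precisely how it derives the small-$n$ ``no $\gamma>0$'' statement (take $\delta<\gamma/2$ and let $n^{1/q}\epsilon\to0$). For that third claim you instead reduce learning at accuracy $\epsilon/2$ to uniformity testing at tolerance $\epsilon$ and quote Theorem \ref{theorem:test-unif-lower}: with $n$ fixed and $\epsilon\to0$ the $\bigOmega{\sqrt n/(n^{1/q}\epsilon)^2}$ testing bound eventually exceeds $C_\delta\, n/(n^{1/q}\epsilon)^{2-\gamma}$, so you rule out the claimed bound even for a single fixed $\delta<1/2$ --- a conclusion stronger than the theorem's ``for all $\delta$'' phrasing, and arguably cleaner given the paper proves that testing bound anyway (the paper deliberately gives a learning-only argument and notes the folklore $\ell_1$ bound would give the tight $n/(n^{1/q}\epsilon)^2$). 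The bookkeeping you flag (even $N$, $N\le n$, the $p$-dependent separation constant $c$, validity of the probabilities, $\delta<1/2$ where testing bounds are used) is routine and does not hide a gap.
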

As detailed in
\ifnum\fullversion=1
Appendix \ref{section:learn-lower},
\else
the full version,
\fi
these bounds can be proven from the folklore $\ell_1$ bound for the case $1 \leq p \leq 2$ (which seems to give a slightly tighter guarantee than the theorem statement); and the lower bound for $\ell_{\infty}$ uniformity testing gives the tight bound for $2 \leq p \leq \infty$. Finding it somewhat unsatisfying to reduce to the $\ell_1$ folklore result, we attempt an independent proof. This approach will give tight bounds up to (unspecified) constant factors for all $p$ and $\delta$ in the $1 < p \leq 2$, ``large $n$'' ($n \geq \frac{1}{\epsilon^q}$) regime. In the small $n$ regime, we will get bounds that look like $\frac{n}{(n^{1/q}\epsilon)^{2(1-\delta)}}$ instead of $\frac{n}{(n^{1/q}\epsilon)^2}$ (interpreted as in the above theorem). Thus, in this paper, the lower bound for this regime matches the upper bound in a weak sense; it would be nice if the below approach can be improved to yield a stronger statement.

\vspace{6pt}
We begin by defining the following game and proving the associated lemma:\\
\textbf{Distribution identification game:} The game is parameterized by maximum support size $n$, distance metric $\rho$, and tolerance $\epsilon$. First, a finite set $S$ of distributions is chosen with $\rho(A,B) > 2\epsilon$ for all $A,B \in S$. Every distribution in $S$ has support $\hat{n} \leq n$ (it will be useful to choose $\hat{n} \neq n$). The algorithm is given $S$. Second, a distribution $A \in S$ uniformly at random. Third, the algorithm is given $m$ i.i.d. samples from $A$. Fourth, the algorithm wins if it correctly guesses which $A \in S$ was chosen, and loses otherwise.

\vspace{6pt}
\begin{lemma} \label{lemma:learning-id-game} Any algorithm for learning to within distance $\epsilon$ using $m(n,p,\epsilon)$ samples with failure probability $\delta$ can be converted into an algorithm for distribution identification using $m(n,p,\epsilon)$ samples, with losing probability at most $\delta$. \end{lemma}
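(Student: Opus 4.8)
The plan is to use the hypothesized learner as a black box and then ``decode'' its output to an element of $S$ by nearest-neighbor rounding. Given an instance of the identification game (the set $S$, together with $m = m(n,p,\epsilon)$ i.i.d.\ samples from the unknown $A \in S$), I would first feed the $m$ samples to the learning algorithm, run with parameters $p$, support bound $n$, tolerance $\epsilon$, and failure probability $\delta$. This is legitimate because every distribution in $S$ has support size $\hat n \le n$, so the learner's promise applies to $A$. Let $\hat A$ be the learner's output. The identification algorithm then outputs $\tilde A := \arg\min_{B \in S} \rho(B,\hat A)$, breaking ties arbitrarily.

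For the analysis, condition on the event that the learner succeeds, i.e.\ $\rho(A,\hat A) = \|A - \hat A\|_p \le \epsilon$; by assumption this event has probability at least $1-\delta$. On this event I claim $\tilde A = A$. Indeed, for every $B \in S$ with $B \neq A$, the triangle inequality for $\rho$ together with the separation property $\rho(A,B) > 2\epsilon$ guaranteed in the game setup gives
\[ \rho(B,\hat A) \;\ge\; \rho(A,B) - \rho(A,\hat A) \;>\; 2\epsilon - \epsilon \;=\; \epsilon \;\ge\; \rho(A,\hat A) . \]
Hence $A$ is the unique minimizer of $\rho(\cdot,\hat A)$ over $S$, and the rounding step returns $A$. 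Therefore the identification algorithm errs only when the learner errs, i.e.\ with probability at most $\delta$, while drawing exactly the same number of samples $m(n,p,\epsilon)$.

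There is essentially no technical obstacle here; the only points requiring (minor) care are that $\rho$ is the same $\ell_p$ metric the learner is built for and satisfies the triangle inequality, that the pairwise distances in $S$ are \emph{strictly} larger than $2\epsilon$ (so the displayed inequality is strict and ties with $A$ cannot arise), and that permitting $\hat n < n$ does not violate the learner's support-size assumption. All three are immediate from the definitions, so the argument above is complete up to these routine checks.
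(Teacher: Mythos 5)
Your proposal is correct and follows essentially the same route as the paper's proof: run the learner, output the nearest member of $S$ to $\hat{A}$, and use the triangle inequality with the $>2\epsilon$ separation to show the true $A$ is the unique minimizer on the learner's success event. No issues.
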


\begin{proof} Suppose the true oracle is $A \in S$. Run the learning algorithm, obtaining $\hat{A}$, and output the member $B$ of $S$ that minimizes $\rho(\hat{A},B)$ (where $\rho$ is the distance metric of the game; for us, it will be $\ell_p$ distance). With probability at least $1 - \delta$, by the guarantee of the learning algorithm, $\|\hat{A} - A\|_p \leq \epsilon$. When this occurs, we always output the correct answer, $A$: For any $B \neq A$ in $S$, by the triangle inequality $\|\hat{A} - B\|_p \geq \|B - A\| - \|\hat{A} - A\| > 2\epsilon - \epsilon = \epsilon$.
\end{proof}

The proofs of the lower bounds then proceed in the following fashion, at a high level:
\begin{enumerate}
 \item Construct a large set $S$ of distributions. For instance, for $1 \leq p \leq 2$, we have $|S| \approx \left(\frac{1}{(\hat{n})^{1/q}\epsilon}\right)^{\hat{n}}$. The main idea is to use a sphere-packing argument as with \emph{e.g.} the Gilbert-Varshamov bound in error-correcting codes. (In particular, the ``construction'' is not constructive; we merely prove that such a set exists.)
 \item Relate the probability of winning the game to the information obtained from the samples. Intuitively, we need a good ratio of the entropy of the samples, $\approx \hat{n} \log\left(\sqrt{\frac{m}{\hat{n}}}\right)$, to the entropy of the choice of distribution, $\log|S|$.
 \item Combine these steps. For instance, for $1 \leq p \leq 2$, we get that the probability of winning looks like $\left(\hat{n}^{1/q}\epsilon \sqrt{\frac{m}{\hat{n}}}\right)^{\hat{n}}$,
 implying that, for a constant probability of winning, we must pick $m \approx \frac{\hat{n}}{((\hat{n})^{1/q}\epsilon)^2}$.
 \item Choose $\hat{n} \leq n$. For $1 \leq p \leq 2$, in the small $n$ regime where $n \leq \frac{1}{\epsilon^q}$, the best choice turns out to be $\hat{n} = n$; in the large $n$ regime, the choice $\hat{n} = \frac{1}{\epsilon^q}$ turns out to be optimal and gives a lower bound $\bigTheta{\hat{n}}$ that is independent of $n$ for that range (since for any large enough $n$, we make the same choice of $\hat{n}$).
\end{enumerate}

\section{Prior and Future Work}
\label{section:related-work}
\subsection{Discussion of Prior Work}
The study of problems under $\ell_p$ metrics crops up in many areas of theoretical computer science and probability, as mentioned in the introduction. Similar in spirit to this paper is Berman et al 2014~\cite{berman2014testing}, which examined testing properties of real-valued functions such as monotonicity, Lipschitz constant, and convexity, all under various $\ell_p$ distances. Another case in which ``exotic'' metrics have been studied in connection with testing and learning is in Do et al 2011~\cite{do2011sublinear}, which studied the distance between and equality of two distributions under Earth Mover Distance.

For the problem of testing uniformity, Paninski 2008~\cite{paninski2008coincidence} examines the $\ell_1$ metric in the case of large-support distributions. The lower bound technique, which is slightly extended and utilized in this paper, establishes that $\bigOmega{\frac{\sqrt{n}}{\epsilon^2}}$ samples are necessary to test uniformity under the $\ell_1$ metric (with constants unknown). This lower bound holds for all support sizes $n$. The algorithm that gives the upper bound in that paper, a matching $m = \bigO{\frac{\sqrt{n}}{\epsilon^2}}$, holds for the case of very large support size $n$, namely $n > m$. This translates to $n = \bigOmega{\frac{1}{\epsilon^4}}$. The reason is that the algorithm counts the number of coordinates that are sampled exactly once; when $n > m$, this indirectly counts the number of collisions (more or less).

\cite{paninski2008coincidence} justifies a focus on $n > m$ because, for small $n$, one could prefer to just learn the distribution, which tells one whether it is uniform or not. However, depending on $\epsilon$, this paper shows that the savings can still be substantial: the number of samples required is on the order of $\frac{n}{\epsilon^2}$ to learn versus $\frac{\sqrt{n}}{\epsilon^2}$ to test uniformity using Algorithm \ref{alg:test-unif}. To the author's knowledge an order-optimal $\ell_1$ tester for all regimes may have previously been open. However, independently to this work, Diakonikolas et al 2015~\cite{diakonikolas2015testing} give an $\ell_2$ uniformity tester for the small-$n$ regime (which is optimal in that regime) and which implies an order-optimal $\ell_1$ tester for all parameters. They use a Poissonization and chi-squared-test approach.

More broadly, the idea of using collisions is common and also arises for related problems, \emph{e.g.} by \cite{goldreich2000testing} in a different context, and by Batu et al 2013~\cite{batu2013testing} for testing closeness of two given distributions in $\ell_1$ distance. This latter problem was resolved more tightly by Chan et al 2014~\cite{chan2014optimal} who established a $\bigTheta{\max\left\{\frac{n^{2/3}}{\epsilon^{4/3}}, ~~ \frac{\sqrt{n}}{\epsilon^2}\right\}}$ sample complexity. This problem may be a good candidate for future $\ell_p$ testing questions. It may be that the collision-based analysis can easily be adapted for general $\ell_p$ norms.

The case of learning a discrete distribution seems to the author to be mostly folklore. It is known that $\bigTheta{\frac{n}{\epsilon^2}}$ samples are necessary and sufficient in $\ell_1$ distance (as mentioned for instance in \cite{daskalakis2013learning}). It is also known via the ``DKW inequality''~\cite{dvoretzky1956asymptotic} that $\bigTheta{\frac{1}{\epsilon^2}}$ samples are sufficient in $\ell_{\infty}$ distance, with a matching lower bound coming from the biased coin setting (since learning must be at least as hard as distinguishing a $2$-sided coin from uniform). It is not clear to the author exactly what bounds would be considered ``known'' or ``folklore'' for the learning problem in $\ell_2$; perhaps the upper bound that $\bigO{\frac{1}{\epsilon^2}}$ samples are sufficient in $\ell_2$ distance is known. This work does provide a resolution to these questions, giving tight upper and lower bounds, as part of the general $\ell_p$ approach. But it should be noted that the results in at least these cases were already known and indeed the general upper-bound technique, introduced to the author by Cl\'{e}ment Canonne~\cite{clement2014private}, is not original here (possibly appearing in print for the first time).

\subsection{Bounds and Algorithms via Conversions}
As mentioned at times throughout the paper, conversions between $\ell_p$ norms can be used to convert algorithms from one case to another. In some cases this can give easy and tight bounds on the number of samples necessary and sufficient. The primary such inequality is Lemma \ref{lemma:pnorm-ineq}.
\begin{restatable}{lemma}{lemmapnormineq}
\label{lemma:pnorm-ineq}
 For $1 \leq p \leq s \leq \infty$, for all vectors $V \in \mathbb{R}^n$,
 \[ \frac{\|V\|_p}{n^{\frac{1}{p}-\frac{1}{s}}} ~ \leq ~ \|V\|_s ~ \leq ~ \|V\|_p . \]
\end{restatable}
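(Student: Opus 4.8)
The plan is to prove the two-sided inequality $\|V\|_p / n^{1/p-1/s} \le \|V\|_s \le \|V\|_p$ for $1 \le p \le s \le \infty$ by standard norm-comparison arguments, handling finite exponents first and then taking limits for $s = \infty$.

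For the upper bound $\|V\|_s \le \|V\|_p$, first I would reduce to the case $\|V\|_p = 1$ by homogeneity: if $V = 0$ the claim is trivial, otherwise replace $V$ by $V/\|V\|_p$. Then each $|V_i| \le 1$, so since $s \ge p$ we have $|V_i|^s \le |V_i|^p$ termwise; summing gives $\sum_i |V_i|^s \le \sum_i |V_i|^p = 1$, hence $\|V\|_s \le 1 = \|V\|_p$. The case $s = \infty$ is immediate since $\max_i |V_i| \le 1$ under the same normalization, or one can just note $\|V\|_\infty = \lim_{s \to \infty} \|V\|_s$.

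For the lower bound $\|V\|_p \le n^{1/p - 1/s} \|V\|_s$, the natural tool is H\"older's inequality applied to the vector $(|V_i|^p)_i$ paired with the all-ones vector $(1,\dots,1)$. Writing $\sum_i |V_i|^p = \sum_i |V_i|^p \cdot 1$ and applying H\"older with exponents $s/p$ and its conjugate $s/(s-p)$ (both at least $1$ since $s \ge p$), we get $\sum_i |V_i|^p \le \left(\sum_i |V_i|^s\right)^{p/s} \left(\sum_i 1\right)^{(s-p)/s} = \|V\|_s^p \cdot n^{(s-p)/s}$. Raising to the $1/p$ power yields $\|V\|_p \le \|V\|_s \cdot n^{(s-p)/(ps)} = n^{1/p - 1/s}\|V\|_s$, as desired. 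For $s = \infty$ one argues directly: $\sum_i |V_i|^p \le n (\max_i |V_i|)^p = n \|V\|_\infty^p$, so $\|V\|_p \le n^{1/p}\|V\|_\infty$, which matches the formula since $1/s = 0$.

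I do not anticipate a serious obstacle here — the result is classical — the only care needed is bookkeeping the edge cases ($V = 0$, $p = s$ where both inequalities become equalities, and $s = \infty$ where the finite-exponent computations must be replaced by their limiting or direct analogues) and checking that the H\"older conjugate exponent $s/(s-p)$ is well-defined and $\ge 1$ throughout the range $p \le s < \infty$ (it degenerates correctly as $p \to s$). So the bulk of the "proof" is just assembling these two short paragraphs cleanly.
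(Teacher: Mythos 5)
Your proposal is correct and takes essentially the same route as the paper: the upper bound is the same termwise power-comparison after normalization (the paper normalizes by $\|V\|_s$ inside a ratio, you normalize by $\|V\|_p$), and for the lower bound you invoke H\"older against the all-ones vector where the paper applies Jensen's inequality to the convex map $x \mapsto x^{s/p}$ --- two interchangeable phrasings of the same power-mean comparison. Your handling of the edge cases ($V=0$, $p=s$, $s=\infty$) is also fine, so no gaps.
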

For instance, suppose we have an $\ell_2$ learning algorithm so that, when it succeeds, we have $\|\hat{A}-A\|_2 \leq \alpha$. Then for $p > 2$, $\|\hat{A}-A\|_p \leq \|\hat{A}-A\|_2 \leq \alpha$, so we have an $\ell_p$ learner with the same guarantee. This also says that any lower bound for an $\ell_p$ learner, $p > 2$, immediately implies the same lower bound for $\ell_2$.

Meanwhile, for $p < 2$, $\|\hat{A}-A\|_p \leq \|\hat{A}-A\|_2 n^{\frac{1}{p}-\frac{1}{2}} \leq \alpha n^{\frac{1}{p}-\frac{1}{2}}$. This implies that, to get an $\ell_p$ learner for distance $\epsilon$, it suffices to use an $\ell_2$ learner for distance $\alpha = \epsilon n^{\frac{1}{2}-\frac{1}{p}} = \epsilon n^{1/q} / \sqrt{n}$. This can also be used to convert a lower bound for $\ell_p$, $p < 2$, into a lower bound for $\ell_2$ learners.

While these conversions can be useful especially for obtaining the tightest possible bounds, the techniques in this paper primarily focus on using a general technique that applies to all $\ell_p$ norms separately. However, it should be noted that applying these conversions to prior work can obtain some of the bounds in this paper (primarily for learning).

\subsection{Future Work}
An immediate direction from this paper is to close the gap on uniformity testing with $2 < p < \infty$, where $n$ is smaller than $\frac{1}{\epsilon^2}$. Although this case may be somewhat obscure or considered unimportant and although the gap is not large, it might require interesting new approaches.

A possibly-interesting problem is to solve the questions considered in this paper, uniformity testing and learning, when one is not given $n$, the support size. For uniformity testing, the question would be whether the distribution is $\epsilon$ far from every uniform distribution $U_n$, or whether it is equal to $U_n$ for some $n$. For each $p > 1$, these problems should be solvable without knowing $n$ by using the algorithms in this paper for the worst-case $n$ (note that, unlike the $p=1$ case, there is an $n$-independent maximum sample complexity). However, it seems possible to do better by attempting to learn or estimate the support size while samples are drawn and terminating when one is confident of one's answer.

A more general program in which this paper fits is to consider learning and testing problems under more ``exotic'' metrics than $\ell_1$, such as $\ell_p$, Earth Mover's distance \cite{do2011sublinear}, or others. Such work would benefit from finding motivating applications for such metrics. An immediate problem along these lines is testing whether two distributions are equal or $\epsilon$-far from each other in $\ell_p$ distance.

One direction suggested by the themes of this work is the testing and learning of ``thin'' distributions: those with small $\ell_{\infty}$ norm (each coordinate has small probability). For $p > 4/3$, we have seen that uniformity testing becomes easier over thinner distributions, where $n$ is larger. It also seems that we ought to be able to more quickly learn a thin distribution. At the extreme case, for $1 < p$, if $\max_i A_i \leq \epsilon^q$, then by Observation \ref{obs:thin}, we can learn $A$ to within distance $2\epsilon$ with zero samples by always outputting the uniform distribution on support size $\frac{1}{\epsilon^q}$. Thus, it may be interesting to consider learning (and perhaps other problems as well) as parameterized by the thinness of the distribution.

\section*{Acknowledgements}
The author thanks Cl\'{e}ment Canonne for discussions and contributions to this work.
Thanks to {\small \texttt{cstheory.stackexchange.com}}, via which the author first became interested in this problem.
Thanks to Leslie Valiant and Scott Linderman, teaching staff of Harvard CS 228, in which some of these results were obtained as a class project.
Finally, thanks to the organizers and speakers at the Workshop on Efficient Distribution Estimation at STOC 2014 for an interesting and informative introduction to and survey of the field.


\vfill\eject

\bibliographystyle{abbrv}
\bibliography{references}

\ifnum\fullversion=1 

\vfill\eject
\break
\clearpage
\break

\appendix
The structure of the appendix matches the technical sections in the body of the paper.

\startcontents
\printcontents{1}{1}{}

\section{Preliminaries}
\label{section:preliminaries}
We consider discrete probability distributions of support size $n$, which will be represented as vectors $A \in \mathbb{R}^n$ where each entry $A_i \geq 0$ and $\sum_{i=1}^n A_i = 1$. We refer to $1,\dots,i,\dots,n$ as the coordinates.

$n$ will always be the support size of the distributions under consideration. $U_n$ will always refer to the uniform distribution on support size $n$, sometimes denoted $U$ where $n$ is evident from context. $m$ will always denote the number of i.i.d. samples drawn by an algorithm.

For $p \geq 1$, the $\ell_p$ norm of any vector $V \in \mathbb{R}^n$ is
 \[ \|V\|_p = \left(\sum_{i=1}^n |V_i|^p \right)^{1/p} . \]
The $\ell_{\infty}$ norm is
 \[ \|V\|_{\infty} = \max_{i=1,\dots,n} |V_i| . \]
For $1 \leq p \leq \infty$, the $\ell_p$ distance metric on $\mathbb{R}^n$ sets the distance between $V$ and $U$ to be $\|V-U\|_p$.

For a given $p$, $1 \leq p \leq \infty$, we let $q$ denote the H\"older conjugate of $p$: When $1 < p < \infty$, $q = \frac{p}{p-1}$ (and so $\frac{1}{p} + \frac{1}{q} = 1$); and $1$ and $\infty$ are conjugates of each other. We may use math with infinity. For instance, $\frac{1}{\infty}$ is treated as $0$. We may be slightly sloppy and, for instance, write $n \leq \frac{1}{\epsilon^q}$ when $q$ may be $\infty$, in which case (since $\epsilon < 1$) the expression is true for all $n$.

\paragraph{Goals} In all of the tasks considered in this paper, we are given $n \geq 2$ (the support size), $1 \leq p \leq \infty$ (specifying the distance metric), and $0 < \epsilon < 1$ (the ``tolerance''). We are given ``oracle access'' to a discrete probability distribution, meaning that we can specify a number $m$ and receive $m$ independent samples from the distribution.

We wish to determine the neccessary and sufficient number of i.i.d. samples to draw from oracle distributions in order to solve a given problem. The number of samples will always be denoted $m$; the goal is to determine the form of $m$ in terms of $n$, $p$, and $\epsilon$. The goal will be to return the correct (or a ``good enough'') answer with probability at least $1-\delta$ (we may call this the ``confidence''; $\delta$ is the ``failure probability''). For uniformity testing, $0 < \delta < 0.5$; for learning, $0 < \delta < 1$.

\subsection{Useful Facts and Intuition}
The first lemma is well-known and will be used in many places to relate the different norms of a vector. The second is used to relate norms independently of the support size.
\lemmapnormineq*
\begin{proof} To show $\|V\|_s \leq \|V\|_p$: First, for $s=\infty$, we only need that
 \[ \left(\max_i |V_i|\right)^p \leq \sum_i |V_i|^p , \]
which is immediate. Now suppose $s < \infty$. Then we just need the following ratio to exceed $1$:\footnote{The idea of this trick was observed from http://math.stackexchange.com/questions/76016/is-p-norm-decreasing-in-p.}
\begin{align*}
 \left(\frac{\|V\|_p}{\|V\|_s}\right)^p
  &=    \sum_i \left(\frac{|V_i|}{\|V\|_s}\right)^p  \\
  &\geq \sum_i \left(\frac{|V_i|}{\|V\|_s}\right)^s  \\
  &=    1 .
\end{align*}
The inequality follows because, as already proven, for any $s$, $\|V\|_s \geq \max_i |V_i|$; so each term is at most $1$, and we have $s \geq p$, so the value decreases when raised to the $s$ rather than to the $p$.

It remains to show $\|V\|_p \leq n^{\frac{1}{p}-\frac{1}{s}}\|V\|_s$. Rewriting, we want to show
 \[ \frac{\|V\|_p}{n^{1/p}} \leq \frac{\|V\|_s}{n^{1/s}} . \]
If $s=\infty$, then we have
 \[ \left(\frac{\sum_i |V_i|^p}{n}\right)^{1/p} \leq \max_i |V_i| , \]
which follows because the maximum exceeds the average. For $s < \infty$, raise both sides to the $s$ power: We want to show 
 \[ \left(\frac{\sum_i |V_i|^p}{n}\right)^{\frac{s}{p}} \leq \frac{\sum_i |V_i|^s}{n} . \]
Since $\frac{s}{p} \geq 1$, the function $x \mapsto x^{\frac{s}{p}}$ is convex, and the above holds directly by Jensen's inequality.
\end{proof}

\begin{lemma} \label{lemma:pnorm-leq-2}
For any vector $V \in \mathbb{R}^n$ with $\|V\|_1 \leq c$:
\begin{enumerate}
 \item For $1 < p \leq 2$ with conjugate $q=\frac{p}{p-1}$,
  \[ \|V\|_p^q \leq c^{q-2} \|V\|_2^2 . \]
 \item For $2 \leq p \leq \infty$ with conjugate $q=\frac{p}{p-1}$,
  \[ \|V\|_p^q \geq c^{q-2} \|V\|_2^2 . \]
\end{enumerate}
\end{lemma}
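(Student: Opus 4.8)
The plan is to derive both inequalities from the standard \emph{log-convexity} (interpolation) inequality for $\ell_p$ norms: for $1 \le r \le t \le s \le \infty$ and $\theta \in [0,1]$ with $\frac{1}{t} = \frac{\theta}{r} + \frac{1-\theta}{s}$, every vector $V$ satisfies $\|V\|_t \le \|V\|_r^{\theta}\,\|V\|_s^{1-\theta}$. This is a one-line consequence of H\"{o}lder's inequality applied to the splitting $|V_i|^t = |V_i|^{t\theta}\cdot |V_i|^{t(1-\theta)}$ with exponents $\frac{r}{t\theta}$ and $\frac{s}{t(1-\theta)}$, which are conjugate precisely because of the defining relation for $\theta$. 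I would include this derivation for completeness, treating $s = \infty$ separately via the trivial bound $\sum_i |V_i|^t \le (\max_i |V_i|)^{t-r}\sum_i |V_i|^r$. I would also dispose of $V = 0$ (trivial) at the outset and note that $p = 2$ gives $q = 2$ and equality in both parts.

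For part~(1), where $1 < p \le 2$, I would apply the interpolation inequality with $r = 1$, $t = p$, $s = 2$, which forces $\theta = \frac{2-p}{p}$ and $1-\theta = \frac{2(p-1)}{p}$, giving $\|V\|_p \le \|V\|_1^{(2-p)/p}\,\|V\|_2^{2(p-1)/p}$. Raising both sides to the power $q = \frac{p}{p-1}$ turns these exponents into $\frac{2-p}{p-1}$ and $2$, and a short computation confirms $\frac{2-p}{p-1} = q - 2$. Since $1 < p \le 2$ makes $q - 2 \ge 0$ and $\|V\|_1 \le c$, we may replace $\|V\|_1^{q-2}$ by $c^{q-2}$, yielding $\|V\|_p^q \le c^{q-2}\|V\|_2^2$.

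For part~(2), where $2 \le p \le \infty$, the roles of $2$ and $p$ swap: I would apply interpolation with $r = 1$, $t = 2$, $s = p$, obtaining $\theta = \frac{p-2}{2(p-1)}$ and $1-\theta = \frac{p}{2(p-1)}$, hence $\|V\|_2 \le \|V\|_1^{(p-2)/(2(p-1))}\,\|V\|_p^{p/(2(p-1))}$. Squaring gives $\|V\|_2^2 \le \|V\|_1^{(p-2)/(p-1)}\,\|V\|_p^{q}$, and since $\frac{p-2}{p-1} = 2 - q \ge 0$ (because $p \ge 2$ forces $q \le 2$), bounding $\|V\|_1 \le c$ gives $\|V\|_2^2 \le c^{2-q}\|V\|_p^q$; multiplying through by $c^{q-2}$ rearranges this into $c^{q-2}\|V\|_2^2 \le \|V\|_p^q$. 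The endpoint $p = \infty$ ($q = 1$) is just $\|V\|_2^2 \le \|V\|_\infty\|V\|_1 \le c\,\|V\|_\infty$, which I would record directly as it is the $s = \infty$ instance of the interpolation step.

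As for difficulty, there is essentially no real obstacle: the content is the interpolation inequality plus exponent bookkeeping. The only point requiring care is the direction of the inequality when substituting $\|V\|_1 \le c$ — this is valid only because the exponent $q - 2$ (resp. $2 - q$) is nonnegative, and $q \ge 2 \iff p \le 2$, which is exactly the case split dictating parts~(1) and~(2). I would verify the identity $\frac{2-p}{p-1} = q - 2$ once and reuse it in both parts.
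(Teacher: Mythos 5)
Your proposal is correct and is in substance the same argument as the paper's: the paper's Jensen step --- bounding $\E |V_i|^{p-1}$ under the $|V_i|/\|V\|_1$-weighted distribution --- is exactly the $(r,t,s)=(1,p,2)$ (resp.\ $(1,2,p)$) instance of the interpolation inequality you invoke, and both proofs reach the identical intermediate bound $\|V\|_p^q \leq \|V\|_1^{q-2}\|V\|_2^2$ (reversed for $p \geq 2$) before using the sign of $q-2$ to substitute $\|V\|_1 \leq c$. The only cosmetic differences are that you package the convexity through H\"{o}lder rather than Jensen and treat the $p=\infty$ endpoint explicitly, which the paper leaves implicit.
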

\begin{proof}
We have
\begin{align}
 \|V\|_p^q
  &= \left(\sum_i |V_i|^p \right)^{\frac{1}{p-1}}  \nonumber \\
  &= \left(\|V\|_1 \sum_i \frac{|V_i|}{\|V\|_1} |V_i|^{p-1} \right)^{\frac{1}{p-1}}  \nonumber \\
  &= \left(\|V\|_1 \E |V_i|^{p-1} \right)^{\frac{1}{p-1}} \label{eqn:prelim-pnorm-2},
\end{align}
treating $\left(\frac{|V_1|}{\|V\|_1},\dots,\frac{|V_n|}{\|V\|_1}\right)$ as a probability distribution on $\{1,\dots,n\}$.
For the first claim of the lemma, by Jensen's inequality, since $p-1 \leq 1$ and the function $x \mapsto x^{p-1}$ is concave,
\begin{align*}
 \E |V_i|^{p-1}
  &\leq \left(\E |V_i|\right)^{p-1}  \\
  &=    \left(\frac{1}{\|V\|_1} \sum_i V_i^2 \right)^{p-1} ,
\end{align*}
which (plugging back into Equation \ref{eqn:prelim-pnorm-2}) gives
 \[ \|V\|_p^q \leq \|V\|_1^{\frac{2-p}{p-1}} ~ \|V\|_2^2 . \]
We have that $\frac{2-p}{p-1} = q-2$. And since for the first case $q-2 \geq 0$, the right side is maximized when $\|V\|_1 = c$.

For the second claim of the lemma, $p-1 \geq 1$, so by Jensen's inequality we get the exact same conclusion but with the inequality's direction reversed. (Note that in this case, $q-2 \leq 0$, so the right side is minimized when $\|V\|_1$ is at its maximum value $c$.)
\end{proof}
In particular, if $V$ is a probability distribution (so $\|V\|_1 = 1$), and $1 < p \leq 2$, then
 \[ \|V\|_p^q \leq \|V\|_2^2 \leq \|V\|_q^p. \]

\section{Uniformity Testing for \protect\uniftesthdr}
\label{section:test}

\subsection{Upper Bounds (sufficient)}

The upper-bound analysis focuses on the properties of $C$, the number of collisions, in Algorithm \ref{alg:test-unif}.
Recall that $C = \sum_{1\leq j < k \leq m} \1[\text{$j$th sample $=$ $k$th sample}]$; in other words, it is the number of pairs of samples that are of the same coordinate.
\lemmacollisions*
\begin{proof}
(1) We have
 \begin{align*}
  \mu_A &= \E \sum_{1 \leq j \leq k \leq m} \1[S_j=S_k]  \\
        &= {m\choose 2} \Pr[S_j=S_k]  \\
        &= {m\choose 2} \sum_i \Pr[S_j = S_k = i]  \\
        &= {m\choose 2} \sum_i A_i^2 .
 \end{align*}
 Meanwhile,
 \begin{align*}
  \|A-U\|_2^2
   &= \sum_i \left(A_i - \frac{1}{n}\right)^2  \\
   &= \sum_i \left(A_i^2 - \frac{2}{n}A_i + \frac{1}{n^2}\right)  \\
   &= \sum_i A_i^2 ~~-~~ \frac{1}{n}
 \end{align*}
 using that $\sum_i A_i = 1$. \\

(2) Recall that we wrote $C$ as a sum of random variables $\1[S_j=S_k]$ for all pairs $j\neq k$. The variance of a sum of random variables is the sum, over all pairs of variables $\1[S_j=S_k]$ and $\1[S_x=S_y]$, of the covariances:
 \begin{align*}
  Var(C) &= \sum_{j\neq k} \sum_{x\neq y} Cov(\1[S_j=S_k],\1[S_x=S_y])  \\
         &= \sum_{j\neq k} \sum_{x\neq y} \Big( \E\left(\1[S_j=S_k]\1[S_x=S_y]\right) \\
         &~~~~~~~~~~~~~~  - \E\left(\1[S_j=S_k]\right)\E\left(\1[S_x=S_y]\right)\Big)  \\
         &= \sum_{j\neq k} \sum_{x\neq y} \Big( \Pr[S_j=S_k \text{ and } S_x=S_y]  \\
         &~~~~~~~~~~~~~~  - \Pr[S_j=S_k]\Pr[S_x=S_y] \Big).
 \end{align*}
If all four of $j,k,x,y$ are distinct, \emph{i.e.} the two pairs of samples have no samples in common, then the events $S_j=S_k$ and $S_x=S_y$ are independent, so all of these terms in the summation are zero. Otherwise, first note that the right summand is
 \begin{align*}
  \Pr[S_j=S_k]\Pr[S_x=S_y] &= \left(\sum_i A_i^2\right)^2  \\
                           &= \|A\|_2^4 .
 \end{align*}
 Now consider the case where the pairs are equal: $\{j,k\} = \{x,y\}$. This case holds for ${m \choose 2}$ choices of $\{j,k,x,y\}$ (namely, all possible pairs $j \neq k$), and when it holds,
 \begin{align*}
  \Pr[S_j=S_k \text{ and } S_x=S_y] &= \Pr[S_j = S_k]  \\
                                    &= \|A\|_2^2 .
 \end{align*}
 The final case is where the pairs have one index in common: $|\{j,k\} \cap \{x,y\}| = 3$. This case holds for all possible unequal triples of indices, ${m \choose 3}$ triples, and for each one it appears $6$ times in the sum: If $a < b < c$, we have (1)$j=a,k=b,x=b,y=c$; (2) $j=a,k=c,x=b,y=c$; (3) $j=a,k=b,x=a,y=c$, and the symmetric three cases with $(j,k)$ swapped with $(x,y)$. So, to reiterate, this case holds for $6{m\choose 3}$ terms in the sum. When it holds,
 \begin{align*}
  \Pr[S_j=S_k \text{ and } S_x=S_y] &= \Pr[S_j=S_k=S_x=S_y]  \\
                                    &= \Pr[\text{three samples are all equal}]  \\
                                    &= \sum_i A_i^3  \\
                                    &= \|A\|_3^3 .
 \end{align*}
 Putting it all together, we get that
 \[ Var(C) = {m \choose 2}\left(\|A\|_2^2 - \|A\|_2^4\right) + 6{m\choose 3}\left(\|A\|_3^3 - \|A\|_2^4\right) . \]
 (For a sanity check, we can notice that we got ${m \choose 2} + 6{m \choose 3}$ nonzero terms in the sum. Let us count the zero terms: the ones where $j,k,x,y$ are all distinct.\footnote{This is \emph{not} ${m\choose 4}$, because a given set of four distinct indices can appear as $j,k,x,y$ in $6$ different ways (one can check), giving $6{m\choose 4} = {m\choose 2}{m-2 \choose 2}$.} Thus, count all the ways we can first pick $j\neq k$, which is ${m \choose 2}$, times all the ways we can pick $x \neq y$ from the remaining $m-2$ indices, which is ${m-2 \choose 2}$. Thus, the number of zero terms is ${m \choose 2}{m-2 \choose 2}$. Now, to complete the sanity check, note that in total there are ${m \choose 2}^2$ terms in the sum, and we do have ${m\choose 2}{m-2 \choose 2} + {m\choose 2} + 6{m \choose 3} = {m\choose 2}^2$.)
\end{proof}

\theoremtestunifupper*

We give a proof sketch before giving the full proof.

\textit{Proof sketch.} Given Lemma \ref{lemma:collisions}, the proof is intuitively straightforward (if slightly tedious).
Recall that the threshold is
 \[ T = {m\choose 2}\frac{1}{n} + \sqrt{\frac{1}{\delta}{m\choose 2}\frac{1}{n}} . \]
We output ``uniform'' if and only if $C \leq T$.

$T$ was chosen to ``fit'' the expectation and variance of the collisions when the oracle $A$ is the uniform distribution.
In that case, the expected number of collisions is $\mu_A = {m\choose 2}\frac{1}{n}$ and the variance is $Var(C) \leq \mu_A$ (it turns out).
Thus, by Chebyshev, $\Pr[C \geq T] \leq \Pr[|C-\mu_A| \geq \sqrt{\mu_A/\delta}] \leq \delta Var(C)/\mu_A \leq \delta$.
This argument holds for all choices of $m$, since we chose $T$ depending on $m$.

If the oracle is some $A$ with $\|A-U\|_p \geq \epsilon$, then we again apply Chebyshev's inequality, looking to bound the $\Pr[C < T]$.
The variance is made up of several additive terms, and in different regimes different terms will dominate. Knowing the correct form of $m$ ``in advance'', and plugging it in, simplifies the case analysis somewhat and enables us to solve for a constant.

\begin{proof}
First, we prove that, if $A=U$, then with probability at least $1-\delta$, we output ``uniform''.
By Chebyshev's Inequality,
\begin{align*}
 \delta
  &\geq \Pr\left[|C - \mu_U| \geq \sqrt{Var(C)/\delta}\right]  \\
  &\geq \Pr\left[C \geq \mu_U + \sqrt{\mu_U/\delta}\right]  \\
  &=    \Pr\left[C \geq T\right].
\end{align*}
We used Lemma \ref{lemma:collisions}, the definition of $T$, and the observation that when drawing from the uniform distribution, $Var(C) \leq {m\choose 2}\|U\|_2^2 = \mu_U$, because $\|U\|_3^3 = \|U\|_2^4 = \frac{1}{n^2}$.
(Note that this proof works for any $m$, since the threshold is chosen as the ``correct'' function of $m$.
 The bound on $m$ is only needed for the next part of the proof.)

Next, and more involved, is the proof that, if $\|A-U\|_p \geq \epsilon$, then with probability at least $1-\delta$, we output ``different''.
Again, we will employ Chebyshev, this time to bound\footnote{Note this argument requires $\mu_A - T > 0$, which will turn out from the math below to be true if $m \geq \frac{\sqrt{6}}{\sqrt{n}\|A-U\|_2^2}$, and it will turn out that we always pick $m$ larger than this.}
\begin{align*}
 \Pr\left[C \leq T\right]
  &= \Pr\left[\mu_A - C \geq \mu_A - T\right]  \\
  &\leq \Pr\left[ |\mu_A - C| \geq \mu_A - T \right]  \\
  &\leq \frac{Var(C)}{(\mu_A - T)^2} .
\end{align*}
So we need to pick $m$ so that, when $\|A-U\|_p \geq \epsilon$,
\begin{equation}
 Var(C) \leq \delta \left(\mu_A - T\right)^2 . \label{eqn:dist-unif-upper-key}
\end{equation}
Recall that
$ \mu_A = {m\choose 2}\left(\frac{1}{n} + \|A-U\|_2^2\right) $
and
$ T = {m\choose 2}\frac{1}{n} + \sqrt{\frac{1}{\delta}{m\choose 2}\frac{1}{n}} . $
Thus,
\begin{align*}
 \mu_A - T &= {m\choose 2}\|A-U\|_2^2 - \sqrt{{m\choose 2}\frac{1}{\delta n}} ,
\end{align*}
so the right side of Inequality \ref{eqn:dist-unif-upper-key} is
\begin{align}
 &\delta \left(\mu_A - T\right)^2  \nonumber \\
  &= \delta {m\choose 2}^2\|A-U\|_2^4 - 2{m\choose 2}^{3/2}\|A-U\|_2^2\sqrt{\frac{\delta}{n}} + {m\choose 2}\frac{1}{n} . \label{eqn:dist-unif-upper-muAT}
\end{align}
Meanwhile, we claim that the left side satisfies the inequality
\begin{align}
 Var(C) \leq &{m\choose 2}\frac{1}{n} + \nonumber \\
             &{m\choose 2}\|A-U\|_2^2\left(1 + 2(m-2)\left(\frac{1}{n} + \|A-U\|_2\right)\right) . \label{eqn:dist-unif-upper-var}
\end{align}
We defer the proof of Inequality \ref{eqn:dist-unif-upper-var} and first show how it is used to prove the lemma.
Recall that the goal is to choose $m$ so that Inequality \ref{eqn:dist-unif-upper-key} holds. We can be assured that Inequality \ref{eqn:dist-unif-upper-key} holds if the right side of Inequality \ref{eqn:dist-unif-upper-var} is at most the right side of Equation \ref{eqn:dist-unif-upper-muAT}.
After subtracting ${m\choose 2}\frac{1}{n}$ from both sides and dividing both sides by ${m\choose 2}\|A-U\|_2^2$, this reduces to
\begin{align*}
 1 + 2(m-2)\left(\frac{1}{n} + \|A-U\|_2\right)
  &\leq \delta {m\choose 2}\|A-U\|_2^2 - 2\sqrt{\frac{{m\choose 2}\delta}{n}} .
\end{align*}
Apply on the right side that ${m\choose 2} \leq \frac{m^2}{2}$,\footnote{Justified because the right side is positive implies that this substitution increases it.} move the rightmost term to the other side, and divide through by $\delta \frac{m^2}{2} \|A-U\|_2^2$: it suffices that
\begin{align}
 &\frac{2\sqrt{2}}{m \sqrt{\delta n} \|A-U\|_2^2} + \frac{2}{\delta m^2 \|A-U\|_2^2}  \nonumber \\
 &+ \frac{4}{\delta n m \|A-U\|_2^2} + \frac{4}{\delta m \|A-U\|_2} ~~ \leq 1. \label{eqn:dist-unif-sum-var}
\end{align}
Now, suppose that $m$ satisfies
\begin{align}
 m \geq \frac{k}{\delta} \max\left\{\frac{1}{\sqrt{n}\|A-U\|_2^2}, ~\frac{1}{\|A-U\|_2}\right\} . \label{eqn:dist-unif-m-condition}
\end{align}
Then we get the requirement
 \[ \frac{2\sqrt{2\delta}}{k} + \frac{2\delta}{k^2} + \frac{4}{k\sqrt{n}} + \frac{4}{k} \leq 1, \]
which, since $\delta < 0.5$ and $n \geq 2$, we can check is satisfied for $k = 9$ (or actually $k \geq 8.940...$).

It remains to ensure that $m$ satisfies Inequality \ref{eqn:dist-unif-m-condition}, which is in terms of $\|A-U\|_2$; but we are given a guarantee of the form $\|A-U\|_p \geq \epsilon$.
For $p \leq 2$, since $\|A-U\|_p \geq \epsilon$, we have by Lemmas \ref{lemma:pnorm-ineq} and \ref{lemma:pnorm-leq-2} that
 \[ \|A-U\|_2 \geq \alpha := \max\left\{ \frac{\epsilon}{n^{\frac{1}{2}-\frac{1}{q}}}, ~~ \frac{\epsilon^{q/2}}{2^{\frac{q-2}{2}}} \right\} , \]
plugging in that $\|A-U\|_1 \leq 2$. For $n \leq \frac{1}{(2\epsilon)^q}$, the first term is larger, and we get that
 \[ m \geq \frac{9}{\delta} \max\left\{\frac{n^{\frac{1}{2}-\frac{2}{q}}}{\epsilon^2}, ~~ \frac{2^{\frac{q-2}{2}}}{\epsilon^{q/2}}\right\} \]
samples suffices.
This completes the proof, except to show Inequality \ref{eqn:dist-unif-upper-var} as promised.

To prove it, start by dropping the relatively insignificant first $\|A\|_2^4$ term:
 \begin{align*}
  Var(C) &\leq {m\choose 2}\left(\|A\|_2^2 + 2(m-2)\left(\|A\|_3^3 - \|A\|_2^4\right)\right)  \\
 \end{align*}
We will show that
\begin{align*}
 \|A\|_3^3 - \|A\|_2^4
  &\leq \|A\|_2^2 ~ \left(\frac{1}{n} + \|A-U\|_2\right) .
\end{align*}
One can check that this will complete the proof of Inequality \ref{eqn:dist-unif-upper-var}, by substituting and rearranging (also using that $\|A\|_2^2 = \frac{1}{n} + \|A-U\|_2^2$).

To show that $\|A\|_3^3 - \|A\|_2^4 \leq \|A\|_2\left(\frac{1}{n} + \|A-U\|_2\right)$, introduce the notation $\delta_i = A_i - \frac{1}{n}$. (This is unrelated to the failure probability.) Then with some rearranging (note that $\sum_i \delta_i = 0$),
\begin{align*}
 \|A\|_3^3 &= \sum_i \left(\frac{1}{n} + \delta_i\right)^3  \\
           &= \frac{1}{n^2} + \sum_i \delta_i^2\left(\frac{3}{n} + \delta_i\right)
\end{align*}
and
\begin{align*}
 \|A\|_2^4 &= \left(\frac{1}{n} + \sum_i \delta_i^2\right)^2  \\
           &= \frac{1}{n^2} + \sum_i \delta_i^2\left(\frac{2}{n} + \sum_j \delta_j^2\right) .
\end{align*}
Thus, the difference is at most (dropping the relatively insignificant $\sum_j \delta_j^2$ term)
\begin{align*}
 \|A\|_3^3 - \|A\|_2^4
  &\leq \sum_i \delta_i^2\left(\frac{1}{n} + \delta_i\right)  \\
  &= \|A-U\|_2^2\frac{1}{n} + \|A-U\|_3^3 .
\end{align*}
At this point, use the fact from Lemma \ref{lemma:pnorm-ineq} that $\|A-U\|_3 \leq \|A-U\|_2$ to get
 \[ \|A\|_3^3 - \|A\|_2^4 \leq \|A-U\|_2^2\left(\frac{1}{n} + \|A-U\|_2\right) . \]
\end{proof}

\theoremtestunifupperlogdelta*
\begin{proof}
Suppose we run Algorithm \ref{alg:test-unif} $k$ times, each with a fixed failure probability $\delta'$. The number of samples is $k$ times the number given in Theorem \ref{theorem:test-unif-upper} (with parameter $\delta'$). Each iteration is correct independently with probability at least $1-\delta'$, so the probability that the majority vote is incorrect is at most the probability that a Binomial of $k$ draws with probability $1-\delta'$ each has at most $k/2$ successes; by a Chernoff bound (\emph{e.g.} Mitzenmacher and Upfal~\cite{mitzenmacher2005probability}, Theorem 4.5),
\begin{align*}
 \Pr[ \text{\# successes} \leq k/2 ]
  &\leq \exp\left[ \frac{-\left((1-\delta')k - \frac{k}{2}\right)^2}{2(1-\delta')k} \right]  \\
  &= \exp\left[ \frac{-k\left(\frac{1}{2} - \delta'\right)^2}{2\left(1-\delta'\right)} \right] .
\end{align*}
Thus, it suffices to set
 \[ k = \ln\left(\frac{1}{\delta}\right)\left(\frac{2(1-\delta')}{\left(\frac{1}{2} - \delta'\right)^2}\right) . \]
(Technically there ought to be a ceiling function around this expression in order to make $k$ an integer.) This holds for any choice of $\delta' < 0.5$, but it is approximately minimized by $\delta' = 0.2$, when $k = \frac{160}{9}\ln(1/\delta)$. Each iteration requires the number of samples stated in Theorem \ref{theorem:test-unif-upper} with failure probability $\delta' = 0.2$, which completes the proof of the theorem.
\end{proof}

\subsection{Lower Bounds (necessary)}

\theoremtestuniflower*
\begin{proof} The proof will be given separately for the two separate cases by (respectively) Theorems \ref{theorem:test-unif-lower-smallp-smalln} and \ref{theorem:test-unif-lower-smallp-bign}.
\end{proof}

\begin{theorem} \label{theorem:test-unif-lower-smallp-bign}
For uniformity testing with $1 < p \leq 2$ and $n \geq \frac{1}{\epsilon^q}$, with failure probability $\delta$, it is necessary to draw at least the following number of samples:
 \[ m = \sqrt{2(1-2\delta)\frac{1}{(2\epsilon)^q}} . \]
\end{theorem}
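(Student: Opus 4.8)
The plan is to use the distribution identification / indistinguishability approach sketched in the body of the paper for the large-$n$ regime: pick a hard family of non-uniform distributions that are indistinguishable from $U_n$ unless a collision occurs, then argue that with too few samples no collision is seen, so no algorithm can do better than guessing. Concretely, I would take the family $\{A^{(S)}\}$ where $S$ ranges over size-$k$ subsets of $\{1,\dots,n\}$ and $A^{(S)}$ puts probability mass $\frac{2\epsilon'}{k}$ (for an appropriate $\epsilon'$ comparable to $\epsilon$, e.g. $\epsilon' = \epsilon/2^{?}$ chosen so the $\ell_p$ distance works out) concentrated somehow — actually the cleanest choice is to put weight $\tfrac1n + c$ on a few special coordinates and $\tfrac1n$ (or slightly less, to renormalize) on the rest, so that the $\ell_p$ distance to $U_n$ is exactly $\epsilon$. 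Since $n \geq \tfrac{1}{\epsilon^q}$, a single coordinate of excess weight $\sim\epsilon$ (or $k$ coordinates of excess $\epsilon/k^{1/p}$ each) can be made to have $\ell_p$-distance $\epsilon$ while leaving each coordinate with probability at most, say, $O(\epsilon^q)$-ish; the key point is that such a distribution is supported essentially uniformly on $\Theta(1/\epsilon^q)$ ``apparent'' coordinates, matching Observation \ref{obs:thin}.

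The core of the argument: condition on the event $E$ that all $m$ samples drawn are of distinct coordinates (no collision). I would show that under $U_n$, $\Pr[\neg E] \leq \binom{m}{2}\frac{1}{n}$, and under any $A$ in the family, the analogous collision probability is at most a constant times this (here is where I need $n$ large, i.e. $n \geq 1/\epsilon^q$, so that even the boosted coordinates have probability $O(\epsilon)$ and $\binom{m}{2}$ times this is still small once $m = O(\sqrt{1/\epsilon^q})$). The crucial observation is that \emph{conditioned on $E$ and on which multiset of coordinates was hit}, the posterior over ``uniform vs.\ which non-uniform $A$'' is essentially uninformative — if I make the non-uniform distributions by choosing a uniformly random subset of coordinates to boost, then conditioned on seeing $m$ distinct coordinates none of which collide, the likelihood ratio between $U_n$ and the mixture of non-uniform distributions is close to $1$ (up to the $\binom{m}{2}\cdot(\text{excess})$ correction). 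Hence the statistical distance between the sample distribution under $U_n$ and under the random non-uniform $A$ is bounded by roughly $\binom{m}{2}$ times the per-pair excess collision probability, which I'd pin down and set equal to $1-2\delta$ to get the stated bound $m = \sqrt{2(1-2\delta)/(2\epsilon)^q}$.

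The steps in order would be: (1) fix the hard family and verify $\|A - U_n\|_p \geq \epsilon$ for each member, using $n \geq 1/\epsilon^q$ to ensure feasibility; (2) compute the total variation distance between the distribution of $m$ i.i.d.\ samples under $U_n$ and under the (uniformly random) hard instance — I expect to bound it by $\binom{m}{2}$ times the maximum over pairs of coordinates of the gap in pairwise-collision probability, which is $O(\epsilon^q)$ by construction; more precisely I would try to get the clean bound $d_{TV} \le \binom m2 \cdot (2\epsilon)^q$ or similar so the constant comes out; (3) invoke the standard fact that if $d_{TV} < 1-2\delta$ then no tester succeeds with probability $\geq 1-\delta$, and solve $\binom{m}{2}(2\epsilon)^q/2 \leq \binom{m}{2}m^{-2}\cdot(\text{something})$... rather, solve the resulting inequality for $m$ to obtain $m \geq \sqrt{2(1-2\delta)/(2\epsilon)^q}$.

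The main obstacle I anticipate is step (2): getting the total variation bound to come out with exactly the right constant, and in particular choosing the hard family so that (a) its $\ell_p$-distance from uniform is exactly (or at least) $\epsilon$ — which forces a particular relationship between the ``excess'' put on boosted coordinates and how many coordinates are boosted, via the $\ell_p$ norm — and simultaneously (b) the pairwise collision surplus is cleanly $(2\epsilon)^q$ (or a clean multiple thereof) rather than some messier expression. I would likely first do the computation with a single boosted coordinate of excess $\epsilon$ (which needs the support to absorb the renormalization, fine since $n$ is large), see what constant drops out, and if it's off by a factor, switch to boosting $k$ coordinates or adjust $\epsilon'$ to $\epsilon/2$ (hence the $(2\epsilon)^q$ in the target) to absorb the slack. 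The collision-probability bookkeeping conditioned on ``no collision so far'' is the delicate part, but it is morally identical to the coupon-collector / birthday-paradox style argument that the paper already alludes to, so I expect it to go through.
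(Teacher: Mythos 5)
Your skeleton (a permutation-symmetric hard family, conditioning on the no-collision event, Markov's inequality on the expected number of collisions, and a TV-style argument that an indistinguishable-without-collisions input forces failure probability above $\delta$) is exactly the paper's route, and the TV bound you aim for would indeed recover the stated constant, since when the conditional laws given ``no collision'' coincide the total variation distance is at most the larger collision probability, which is $\binom{m}{2}\|A\|_2^2$. The genuine gap is in step (1), the choice of the hard family, and it is not a matter of constants. The candidates you commit to trying first --- a single boosted coordinate of excess $\approx\epsilon$, or ``a few'' coordinates at $\frac1n+c$ --- fail for every $p<2$: such a distribution has $\|A\|_2^2\approx\epsilon^2$, so its expected collision count is $\approx\binom{m}{2}\epsilon^2$, which already becomes large at $m\approx 1/\epsilon$, far below the target $m\approx\epsilon^{-q/2}$ (recall $q>2$). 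In other words, a collision counter \emph{does} detect these instances with too few samples, so no lower bound of the claimed magnitude can be extracted from them; relatedly, your assertion that the family's collision probability is ``at most a constant times'' the uniform one $\binom{m}{2}\frac1n$ is false in the regime $n\gg\epsilon^{-q}$, and it is the family's own collision probability, not the uniform one, that is the binding quantity. The construction must spread the deviation as thinly as the $\ell_p$ constraint allows, i.e., over $\Theta(\epsilon^{-q})$ coordinates (you gesture at this via Observation \ref{obs:thin} but do not commit to it): the paper takes $A$ uniform on a uniformly random subset of exactly $\hat n=(2\epsilon)^{-q}$ coordinates and zero elsewhere, using $n\ge\epsilon^{-q}$ only to guarantee $\hat n\le n/2$ so that each live coordinate deviates from $\frac1n$ by at least $\frac1{2\hat n}$, giving $\|A-U\|_p\ge\epsilon$, while $\|A\|_2^2=\frac1{\hat n}=(2\epsilon)^q$ makes $\Pr[\text{any collision}]\le\frac{m^2}{2}(2\epsilon)^q$, which is where the exact constant $\sqrt{2(1-2\delta)/(2\epsilon)^q}$ comes from.

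A second, smaller soft spot: you describe the conditional likelihood ratio as ``essentially uninformative ... close to $1$ up to corrections.'' With the fully symmetric family (the boosted set chosen uniformly at random over all coordinates, or the paper's random-subset construction) no approximation is needed: conditioned on all $m$ samples being distinct, every sequence of distinct coordinates is exactly equally likely under the mixture and under $U_n$, so the conditional sample distributions are identical and the only discrepancy is the collision event itself. If you instead work with an asymmetric or renormalized family where this exactness fails, the ``up to corrections'' bookkeeping is real extra work and will cost you the clean constant. So the fix is to commit to the maximally spread, exactly symmetric family first; after that your steps (2) and (3) go through essentially as you wrote them.
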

\textit{Proof sketch.} We will construct a family of distributions, all of which are $\epsilon$-far from uniform. We will draw a member uniformly randomly from the family, and give the algorithm oracle access to it.
If the algorithm has failure probability at most $\delta$, then it outputs ``not uniform'' with probability at least $1-\delta$ on average over the choice of oracle (because it does so for every oracle in the family).

However, the algorithm must also say ``uniform'' with probability at least $1-\delta$ when given oracle access to $U$. The idea will be that, on both the uniform distribution and one chosen from the family, the probability of \emph{any} collision is very low. But, conditioned on no collisions, a randomly chosen member of the family is completely indistinguishable from uniform. So if the algorithm usually says ``uniform'' when the input has no collisions, then it is usually wrong when the oracle is drawn from our family; or vice versa.

\begin{proof}
Construct a family of distributions as follows. We will choose a particular value $\hat{n} \leq \frac{n}{2}$ (to be specified later). Pick $\hat{n}$ coordinates uniformly at random from the $n$ coordinates, and let each have probability $\frac{1}{\hat{n}}$. The remaining coordinates have probability zero.

We will need to confirm two properties: that $\|A-U\|_p \geq \epsilon$ for every $A$ in the family, and that the probability of any collision occurring is small. Toward the first property, we have that on each of the $\hat{n}$ nonzero coordinates, $|A_i - \frac{1}{n}| = \frac{1}{\hat{n}} - \frac{1}{n} \geq \frac{1}{2\hat{n}}$, using that $\frac{1}{n} \leq \frac{1}{2\hat{n}}$. Thus,
\begin{align}
 \|A-U\|_p^p &\geq \hat{n}\left(\frac{1}{2\hat{n}}\right)^p  \nonumber \\
             &= \frac{1}{2^p(\hat{n})^{p-1}} .  \label{eqn:dist-unif-lower-lpnorm}
\end{align}
So for the first property, $\ell_p$ distance $\epsilon$ from uniform, we must choose $\hat{n}$ so that Expression \ref{eqn:dist-unif-lower-lpnorm} is at least $\epsilon^p$.
For the property that the chance of a collision is small, we have by Markov's Inequality that for any $A$ in the family,
\begin{align}
  \Pr[C \geq 1] &\leq \E[C]  \nonumber \\
                &=    {m\choose 2}\|A\|_2^2  \nonumber \\
                &=    {m\choose 2}\hat{n}\left(\frac{1}{\hat{n}}\right)^2 \nonumber \\
                &\leq \frac{m^2}{2\hat{n}} . \label{eqn:dist-unif-lower-prob-col}
\end{align}
Now we choose $\hat{n} = \left(\frac{1}{2\epsilon}\right)^q$. Note that, if $n \geq \frac{1}{\epsilon^q}$, then $\hat{n} = \frac{n}{2^q} \leq \frac{n}{2}$.
For the first property, for any distribution $A$ in the family, by Inequality \ref{eqn:dist-unif-lower-lpnorm}, $\|A-U\|_p^p \geq \frac{(2\epsilon)^{q(p-1)}}{2^p} = \epsilon^p$.
For the second property, by Inequality \ref{eqn:dist-unif-lower-prob-col}, $\Pr[C \geq 1] \leq m^2 \left(2\epsilon\right)^q/2$, so if $m < \sqrt{2\frac{1-2\delta}{(2\epsilon)^q}}$, then
 \[ \Pr[C \geq 1] \leq 1-2\delta . \]
This shows that, if the oracle is drawn from the family, then the expected number of collisions, and thus probability of any collision, is less than $1-2\delta$ if $m$ is too small.
Meanwhile, if the oracle is the uniform distribution $U$, then the expected number of collisions is smaller (since $\|U\|_2^2 = \frac{1}{n} \leq \|A\|_2^2$).
So if $m$ is smaller than the bounds given, then for either scenario of oracle, the algorithm observes a collision with probability less than $1-2\delta$.

But if there are no collisions, then the input consists entirely of distinct samples and every such input is equally likely, under both the oracle being $U$ and under a distribution chosen uniformly from our family (by symmetry of the family).
Thus, conditioned on zero collisions, the probability $\gamma$ of the algorithm outputting ``uniform'' is equal when given oracle access to $U$ and when it is given oracle access to a uniformly chosen member of our family of distributions.
If $\gamma \leq \frac{1}{2}$, then the probability of correctness when given oracle access to $U$ is at most $\gamma\cdot \Pr[\text{no collisions}] + \Pr[\text{collisions}] \leq \frac{1}{2} + \frac{1}{2}\Pr[\text{collisions}] \leq \frac{1}{2} + \frac{1}{2}(1-2\delta) = 1-\delta$.
Conversely, if $\gamma \geq \frac{1}{2}$, then the probability of correctness when given oracle access to a member of the family is at most $(1-\gamma)\Pr[\text{no collisions}] + \Pr[\text{collisions}] \leq \frac{1}{2} + \frac{1}{2}\Pr[\text{collisions}] \leq 1-\delta$ again.
\end{proof}

\begin{theorem} \label{theorem:test-unif-lower-smallp-smalln}
For uniformity testing with $1 \leq p \leq 2$, if $n \leq \frac{1}{\epsilon^q}$, then it is necessary to draw the following number of samples:
  \[ m = \sqrt{\ln\left((1-2\delta)^2+1\right)} \frac{\sqrt{n}}{\left(\epsilon n^{1/q}\right)^2} . \]
\end{theorem}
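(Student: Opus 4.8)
The plan is to adapt Paninski's lower-bound argument~\cite{paninski2008coincidence} to general $\ell_p$, keeping track of constants. First I would build the hard family. Assume $n$ is even (the odd case is cosmetically different). Pair the coordinates as $\{2k-1,2k\}$ for $k=1,\dots,n/2$, and index a distribution by a sign vector $\sigma\in\{\pm1\}^{n/2}$: set $A^\sigma_{2k-1}=\tfrac1n+\sigma_k c$ and $A^\sigma_{2k}=\tfrac1n-\sigma_k c$, where $c:=\epsilon/n^{1/p}=\epsilon\, n^{1/q}/n$ (reading $n^{1/q}=1$ when $p=1$). Every coordinate then deviates from $\tfrac1n$ by exactly $c$, so $\|A^\sigma-U_n\|_p^p=n c^p=\epsilon^p$; thus every member of the family is exactly $\epsilon$-far from $U_n$ in $\ell_p$. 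Since $n\le 1/\epsilon^q$ forces $nc=\epsilon\, n^{1/q}\le 1$, these are genuine distributions (and $U_n$ dominates each of them).

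\textbf{Reduction to a distinguishing problem.} Let $P_0$ be the law of $m$ i.i.d.\ samples from $U_n$ and $P_1$ the law of $m$ i.i.d.\ samples from $A^\sigma$ with $\sigma$ uniform. A tester with failure probability $\le\delta$ must output ``uniform'' with probability $\ge1-\delta$ under $P_0$ and ``not uniform'' with probability $\ge1-\delta$ under every $A^\sigma$, hence under the mixture $P_1$; a standard averaging argument then forces $\|P_0-P_1\|_{\mathrm{TV}}\ge 1-2\delta$. So it suffices to show that if $m<m^*:=\sqrt{\ln\!\big((1-2\delta)^2+1\big)}\cdot\frac{\sqrt n}{(\epsilon n^{1/q})^2}$, then $\|P_0-P_1\|_{\mathrm{TV}}<1-2\delta$.

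\textbf{The second-moment ($\chi^2$) bound.} Next I would compute $\E_{P_0}\!\big[(dP_1/dP_0)^2\big]$. Expanding $dP_1/dP_0(x_1,\dots,x_m)=\E_\sigma\big[\prod_j nA^\sigma_{x_j}\big]$ and taking the expectation over $x\sim U_n^m$, the samples' independence collapses it to $\E_{\sigma,\sigma'}\big[(1+n\sum_i\delta^\sigma_i\delta^{\sigma'}_i)^m\big]$, where $\delta^\sigma_i=A^\sigma_i-\tfrac1n$; the paired structure gives $n\sum_i\delta^\sigma_i\delta^{\sigma'}_i=2nc^2\sum_k\sigma_k\sigma'_k$, so this equals $\E_Z\big[(1+2nc^2 Z)^m\big]$ with $Z=\sum_{k=1}^{n/2}W_k$ a sum of $n/2$ independent Rademacher variables. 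Here is where the hypothesis $n\le1/\epsilon^q$ is essential: it yields $|2nc^2 Z|\le 2nc^2\cdot\tfrac n2=n^2c^2=\epsilon^2 n^{2/q}\le 1$, so $1+2nc^2 Z\ge 0$ and the inequality $1+y\le e^y$ can be applied coordinatewise. Thus $\E_Z[(1+2nc^2Z)^m]\le\E_Z[e^{2nc^2 m Z}]=(\cosh(2nc^2m))^{n/2}\le e^{n^3c^4m^2}$, using $\cosh t\le e^{t^2/2}$. Substituting $c=\epsilon n^{1/q}/n$ gives $n^3c^4m^2=\epsilon^4 n^{4/q}m^2/n$, which at $m=m^*$ is exactly $\ln\!\big((1-2\delta)^2+1\big)$. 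Hence for $m<m^*$ we get $\chi^2(P_1\|P_0)=\E_{P_0}[(dP_1/dP_0)^2]-1<(1-2\delta)^2$, and therefore $\|P_0-P_1\|_{\mathrm{TV}}\le\tfrac12\sqrt{\chi^2(P_1\|P_0)}<\tfrac{1-2\delta}{2}<1-2\delta$, contradicting the previous paragraph. So $m\ge m^*$ samples are necessary.

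\textbf{Main obstacle.} The combinatorics of the $\chi^2$ computation is routine once one sees that the paired perturbation turns the cross term into $2nc^2$ times a Rademacher sum. The genuinely delicate point — and essentially the only place the hypothesis $n\le1/\epsilon^q$ is used — is verifying $1+2nc^2Z\ge0$, which is what lets one replace Poissonization by the clean bound $1+y\le e^y$ and so recover the exact constant $\sqrt{\ln((1-2\delta)^2+1)}$ rather than a lossier version; getting the final inequality chain to land on precisely the claimed $m^*$ (rather than a constant multiple of it) depends on avoiding any such waste. Adapting the write-up to $p=1$ (where $q=\infty$, $n^{1/q}=1$, and $n\le1/\epsilon^q$ holds vacuously) and to odd $n$ needs only trivial modifications.
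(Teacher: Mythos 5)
Your proposal is correct and follows essentially the same route as the paper: the same paired $\pm\,\epsilon n^{1/q}/n$ perturbation family, the same collapse of the second moment to $\E_{\sigma,\sigma'}\bigl(1+2nc^2\sum_k\sigma_k\sigma'_k\bigr)^m$, and the same $1+y\le e^y$ plus $\cosh t\le e^{t^2/2}$ bounds yielding the exponent $m^2(\epsilon n^{1/q})^4/n$ and hence the stated constant. The only difference is packaging — you phrase the distinguishing step via $\chi^2$ and total variation (which in fact leaves a little slack beyond the claimed constant), whereas the paper bounds $\|\vec{Z}-\vec{U}\|_1$ directly through $\ell_2$ on the sample space — and your explicit note that $n\le 1/\epsilon^q$ guarantees $1+2nc^2Z\ge 0$ (needed to raise the inequality to the $m$-th power) is a point the paper glosses over.
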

\begin{proof}
We know from \cite{paninski2008coincidence} that, in $\ell_1$ norm, $\Omega\left(\frac{\sqrt{n}}{\epsilon^2}\right)$ samples are required. This result actually immediately implies the bound with an unknown constant, by a careful change of parameters, as follows. Suppose that $A$ satisfies $\|A-U\|_p \leq \epsilon$, for $1 \leq p \leq \infty$. Then by Lemma \ref{lemma:pnorm-ineq}, $\|A-U\|_1 \leq \epsilon n^{1-\frac{1}{p}} = \epsilon n^{1/q}$. So let $\alpha = \epsilon n^{1/q}$. Then since $\|A-U\|_1 \leq \alpha$, the number of samples required to distinguish $A$ from $U$ is on the order of
 \[ \frac{\sqrt{n}}{\alpha^2} = \frac{\sqrt{n}}{\left(n^{1/q}\epsilon\right)^2} . \]

Below, we chase through the construction and analysis (somewhat modified for clarity, it is hoped) of \cite{paninski2008coincidence}, adapted for the general case. The primary point of the exercise is to obtain the constant in the bound, which is not apparent in \cite{paninski2008coincidence}. 

So fix $1 \leq p \leq 2$. The plan is to construct a set of distributions and draw one uniformly at random, then draw $m$ i.i.d. samples from it. These samples are distributed in some particular way; let $\vec{Z}$ be their distribution (written as a length-$n^m$ vector, since there are $n^m$ possible outcomes). Let $\vec{U}$ be the distribution of the $m$ input samples when the oracle distribution is $U$; $\vec{U} = \left(\frac{1}{n^m},\dots,\frac{1}{n^m}\right)$ since every outcome of the $m$ samples is equally likely.

Suppose that the algorithm, which outputs either ``unif'' or ``non'', is correct with probability at least $1-\delta > 0.5$. Then first, a minor lemma:
\begin{equation}
  \delta \geq \frac{1 - \|\vec{Z} - \vec{U}\|_1}{2} . \label{eqn:unif-lower-l1-lemma}
\end{equation}
Proof of the lemma: Letting $\Pr_A[\text{event}]$ be the probability of ``event'' when the oracle is drawn from our distribution, and analogously for $\Pr_U[\text{event}]$:
\begin{align*}
 &\left| \Pr_U[\text{alg says ``unif''}] - \Pr_A[\text{alg says ``unif''}] \right|  \\
  &= \left| \sum_{s \in [n^m]} \Pr_U[\text{alg says ``unif'' on $s$}]\left(\Pr[s \leftarrow \vec{U}] - \Pr[s \leftarrow \vec{Z}]\right) \right|  \\
  &\leq \sum_{s \in [n^m]} \left|\vec{U}_s - \vec{Z}_s\right|  \\
  &= \|\vec{U} - \vec{Z}\|_1 ;
\end{align*}
on the other hand, the first line is lower-bounded by $|1-\delta ~ - ~ \delta| = 1-2\delta$, which proves the lemma (Inequality \ref{eqn:unif-lower-l1-lemma}).

Now we repeat Paninski's construction, slightly generalized for the $\ell_p$ case. We assume $n$ is even; if not, apply the following construction to the first $n-1$ coordinates. The family of distributions is constructed (and sampled from uniformly) as follows. For each $i=1,3,5,\dots$, flip a fair coin. If heads, let $A_i = \frac{1}{n}(1+\alpha)$ and let $A_{i+1} = \frac{1}{n}(1-\alpha)$. If tails, let $A_i = \frac{1}{n}(1-\alpha)$ and let $A_{i+1} = \frac{1}{n}(1+\alpha)$.

Here $\alpha = \epsilon n^{1/q}$. We need to verify that each $A$ so constructed is a valid probability distribution and that $\|A-U\|_p \geq \epsilon$. Since $n \leq \frac{1}{\epsilon^q}$, we have that $\alpha \leq 1$, so our construction does give a valid probability distribution. And $\|A-U\|_p^p = n\left(\frac{\alpha}{n}\right)^p =  n^{1-p} \epsilon^p n^{p/q} = \epsilon^p$.

Now we just need to upper-bound $\|\vec{U} - \vec{Z}\|_1$, and we will be done. Utilize the inequality of Lemma \ref{lemma:pnorm-ineq}, $\|\vec{U} - \vec{Z}\|_1 \leq \|\vec{U} - \vec{Z}\|_2 \sqrt{n^m}$, and upper-bound this $2$-norm. We have
\begin{align}
 \|\vec{U}-\vec{Z}\|_2^2  \nonumber
  &= \sum_{s \in [n^m]}\left(\vec{Z}_s - \frac{1}{n^m}\right)^2  \nonumber \\
  &= \sum_s \left(\vec{Z}_s^2 - \frac{2}{n^m}\vec{Z}_s + \frac{1}{n^{2m}}\right)  \nonumber \\
  &= \left(\sum_s \vec{Z}_s^2\right) - \frac{1}{n^m} . \label{eqn:dist-unif-lower-1}
\end{align}
Now,
\begin{align*}
 \sum_s \vec{Z}_s^2
  &= \sum_s \sum_{A,A'} \frac{1}{2^n} \Pr[s \mid A] \Pr[s \mid A']
\end{align*}
where $A$ and $A'$ are random variables: They are distributions drawn uniformly from our family, each with probability $\frac{1}{2^{n/2}}$ (since we make $n/2$ binary choices).

Let $s_j$, for $j=1,\dots,m$, be the $j$th sample. Now, rearrange:
\begin{align*}
 \sum_s \vec{Z}_s^2
  &= \sum_{A,A'} \frac{1}{2^n} \sum_s \Pr[s \mid A] \prod_{j=1}^m A'_{s_j}
\end{align*}
View the inner sum as follows: After fixing $A$ and $A'$, we take the expectation, over a draw of a sample $s$ from $A$, of the quantity $\Pr[s \mid A']$, which is expanded into the product. But now, each term $A'_{s_j}$ is independent, since the $m$ samples are drawn i.i.d. from $A$ (and recall that, in this expectation, $A$ and $A'$ are fixed and not random). The expectation of the product is the product of the expectations:
\begin{align*}
 \sum_s \vec{Z}_s^2
 &= \sum_{A,A'} \frac{1}{2^n} \prod_{j=1}^m \sum_s \Pr[s \mid A] A'_{s_j}  \\
 &= \sum_{A,A'} \frac{1}{2^n} \prod_{j=1}^m \sum_{s_j \in [n]} \Pr[s_j \mid A] A'_{s_j}  \\
 &= \sum_{A,A'} \frac{1}{2^n} \prod_{j=1}^m \sum_{i=1}^n A_i A'_i  \\
 &= \sum_{A,A'} \frac{1}{2^n} \left( \sum_{i=1}^n A_i A'_i \right)^m  \\
\end{align*}
We can simplify the inner sum. After factoring out a $\frac{1}{n}$ from each probability, consider the odd coordinates $i=1,3,5,\dots$. Either $A_i \neq A'_i$, in which case $A_iA'_i = \frac{1}{n^2} (1+\alpha)(1-\alpha) = \frac{1}{n^2}(1-\alpha^2) = A_{i+1}A'_{i+1}$, or $A_i = A'_i$. In this case, $A_iA_i' + A_{i+1}A'_{i+1} = \frac{1}{n^2}\left((1+\alpha)^2 + (1-\alpha)^2\right) = \frac{2}{n^2}(1+\alpha^2)$. So the inner sum is equal to
\begin{align*}
 \sum_{i=1}^n A_i A'_i
  &= \frac{1}{n} \left(1 + \frac{2\alpha^2}{n} \sum_{i=1,3,5,\dots} \sigma_i(A,A') \right).
\end{align*}
where
 \[ \sigma_i(A,A') = \begin{cases} 1 & A_i=A'_i \\ -1 & A_i\neq A'_i \end{cases} . \]
Note that unless $A=A'$, $\sigma_i(A,A')$ has a $0.5$ probability of taking each value, independently for all $i$.

OK, we now plug the inner sum back in and use the inequality $1+x \leq e^x$:
\begin{align*}
 \sum_s \vec{Z}_s^2
  &= \sum_{A,A'} \frac{1}{2^n} \left( \frac{1}{n} \left(1 + \frac{2\alpha^2}{n} \sum_{i=1,3,\dots} \sigma_i(A,A')\right) \right)^m  \\
  &\leq \frac{1}{n^m} \sum_{A,A'} \frac{1}{2^n} e^{\frac{2m\alpha^2}{n} \sum_{i=1,3,\dots} \sigma_i(A,A')}  \\
  &= \frac{1}{n^m} \sum_{A,A'} \frac{1}{2^n} \prod_{i=1,3,\dots} e^{\frac{2m\alpha^2}{n} \sigma_i(A,A')} .
\end{align*}
This double sum is an expectation over the random variables $A$ and $A'$, which now means it is an expectation only over the $\sigma_i(A,A')$s. As each is independent and uniform on $\{-1,1\}$, we can convert the expectation of products into a product of expectations, take the expectation, and use the cosh inequality $\frac{e^x + e^{-x}}{2} \leq e^{x^2/2}$:
\begin{align*}
 \sum_s \vec{Z}_s^2
  &\leq \frac{1}{n^m} \prod_{i=1,3,\dots} \E e^{\frac{2m\alpha^2}{n} \sigma_i(A,A')}  \\
  &=    \frac{1}{n^m} \left( \frac{1}{2} e^{\frac{2m\alpha^2}{n}} + \frac{1}{2} e^{\frac{-2m\alpha^2}{n}}\right)^{n/2}  \\
  &\leq \frac{1}{n^m} \left( e^{\frac{2m^2\alpha^4}{n^2}} \right)^{n/2}  \\
  &=    \frac{1}{n^m} e^{\frac{m^2\alpha^4}{n}} .
\end{align*}
Plugging this all the way back into Equation \ref{eqn:dist-unif-lower-1},
\begin{align*}
 \|\vec{U}-\vec{Z}\|_2^2        &\leq \frac{1}{n^m}\left(e^{\frac{m^2\alpha^4}{n}} - 1\right)  \\
 \implies \|\vec{U}-\vec{Z}\|_1 &\leq \frac{1}{\sqrt{n^m}} \sqrt{e^{\frac{m^2\alpha^4}{n}}-1} \sqrt{n^m}  \\
                                &=    \sqrt{e^{\frac{m^2\alpha^4}{n}}-1} .
\end{align*}
It is already apparent that we need $m \geq \Omega\left(\frac{\sqrt{n}}{\alpha^2}\right)$, and by construction $\frac{\sqrt{n}}{\alpha^2} = \frac{\sqrt{n}}{(n^{1/q} \epsilon)^2}$. More precisely, plugging in to Inequality \ref{eqn:unif-lower-l1-lemma} (the ``mini-lemma''), we find that to succeed with probability $\geq 1-\delta$, an algorithm must draw
 \[ m \geq \sqrt{\ln\left((1-2\delta)^2+1\right)} \frac{\sqrt{n}}{\left(n^{1/q}\epsilon\right)^2}  \]
samples.
\end{proof}

\section{Uniformity Testing for \protect\uniftestptwohdr}

\subsection{Lower Bounds (necessary)}

\begin{theorem} \label{theorem:test-unif-lower-infty-eps}
To test uniformity in $\ell_{\infty}$ distance for any $n > \frac{1}{\epsilon}$ requires the following number of samples:
  \[ m = \frac{1-2\delta}{2} \frac{1}{\epsilon}  . \]
\end{theorem}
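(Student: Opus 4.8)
The plan is to mimic the large-$n$ lower bound for $p\le 2$ (Theorem~\ref{theorem:test-unif-lower-smallp-bign}), replacing the ``no collision'' event by the event that the single heavy coordinate is never sampled. Concretely, I would use the distribution $A^*$ with $A^*_1=\frac1n+\epsilon$ and $A^*_i=\frac1n-\frac{\epsilon}{n-1}$ for $i\ge 2$; this is a valid probability distribution and $\|A^*-U_n\|_\infty=\max\{\epsilon,\frac{\epsilon}{n-1}\}=\epsilon$, so any correct tester must output ``not uniform'' on $A^*$ with probability $\ge 1-\delta$ while outputting ``uniform'' on $U_n$ with probability $\ge 1-\delta$.

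The key observation is that, conditioned on the event $E$ that coordinate $1$ never appears among the $m$ samples, the sample sequence has exactly the same distribution under $A^*$ as under $U_n$: in both cases it is $m$ i.i.d.\ draws from the uniform distribution on $\{2,\dots,n\}$ (for $A^*$ this uses $\frac{A^*_i}{1-A^*_1}=\frac{1}{n-1}$ for every $i\ge 2$). Hence the probability $\gamma$ that the algorithm outputs ``uniform'' conditioned on $E$ is the same for the two oracles. Separately, a union bound gives $\Pr_{A^*}[E^c]\le m(\frac1n+\epsilon)$ and $\Pr_{U_n}[E^c]\le \frac{m}{n}$; since $n>\frac1\epsilon$ forces $\frac1n+\epsilon<2\epsilon$, both of these are strictly below $1-2\delta$ whenever $m<\frac{1-2\delta}{2\epsilon}$.

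Finally I would do the standard case split on $\gamma$. If $\gamma\le\frac12$, then $\Pr_{U_n}[\text{output ``uniform''}]\le\gamma\Pr_{U_n}[E]+\Pr_{U_n}[E^c]\le\frac12(1-\Pr_{U_n}[E^c])+\Pr_{U_n}[E^c]=\frac12+\frac12\Pr_{U_n}[E^c]<1-\delta$, contradicting correctness on $U_n$. If instead $\gamma\ge\frac12$, then symmetrically $\Pr_{A^*}[\text{output ``not uniform''}]\le(1-\gamma)\Pr_{A^*}[E]+\Pr_{A^*}[E^c]\le\frac12+\frac12\Pr_{A^*}[E^c]<1-\delta$, contradicting correctness on $A^*$. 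Either way $m<\frac{1-2\delta}{2\epsilon}$ is impossible, so $m\ge\frac{1-2\delta}{2}\cdot\frac1\epsilon$.

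The only mildly delicate steps are (i) verifying that conditioning on $E$ genuinely equalizes the two sample distributions --- this is where it matters that $A^*$ is \emph{exactly} uniform off its heavy coordinate --- and (ii) carrying along the two oracles' different unconditional probabilities of $E$ through the final inequalities; I expect neither to be a real obstacle, and the write-up should be short, parallel to the proof of Theorem~\ref{theorem:test-unif-lower-smallp-bign} (and feeding into Theorem~\ref{theorem:test-unif-lower-useinf}).
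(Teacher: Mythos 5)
Your proposal is correct and is essentially identical to the paper's own proof: same distribution $A^*$ with one $\epsilon$-heavy coordinate, same conditioning on the event that coordinate $1$ is never sampled (where $A^*$ and $U_n$ become indistinguishable), the same bound $m(\tfrac1n+\epsilon)<2m\epsilon\le 1-2\delta$, and the same case split on the conditional probability $\gamma$ of outputting ``uniform''. No gaps; the two delicate points you flag are handled exactly as in the paper.
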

\textit{Proof sketch.} The proof is similar to the proof of Theorem \ref{theorem:test-unif-lower-smallp-bign}, the lower bound for $p \leq 2$ and $n \geq \frac{1}{\epsilon^q}$. In this case, we only need one distribution $A$ (not a family of distributions), which has probability $\frac{1}{n}+\epsilon$ on one coordinate and is uniform on the others. Thus, $\|A-U_n\|_{\infty} = \epsilon$. Without enough samples, probably the large coordinate is never drawn; but conditioned on this, $A$ and $U_n$ are indistinguishable.
\begin{proof}
Let
 \[ A = \left(\frac{1}{n} + \epsilon, ~ \frac{1}{n} - \frac{\epsilon}{n-1}, ~ \dots, ~ \frac{1}{n} - \frac{\epsilon}{n-1}\right) . \]
If $m \leq \frac{1-2\delta}{2}\frac{1}{\epsilon}$, then
\begin{align*}
 \Pr_A[\text{sample coord 1}]
   &= m\left(\frac{1}{n} + \epsilon\right)  \\
   &< 2m\epsilon  \\
   &\leq 1-2\delta
\end{align*}
using that $\frac{1}{n} < \epsilon$. Also note that
 \[  \Pr_U[\text{sample coord 1}] \leq \Pr_A[\text{sample coord 1}] \leq 1-2\delta . \]

Now, we claim that, conditioned on not sampling coordinate 1, the distribution of samples is the same under $A$ and under $U$. This follows because, for both $A$ and $U$, the distribution over samples conditioned on not sampling coordinate $1$ is uniform.
Let $\gamma$ be the probability that the algorithm says ``uniform'' given that the samples do not contain coordinate $1$ (again, we just argued that this probability is equal to $\gamma$ whether the distribution is $A$ or $U$). If $\gamma \geq \frac{1}{2}$, then the probability of correctness when drawing samples from $A$ is at most
\begin{align*}
 &\Pr_A[\text{sample coord 1}] + (1-\gamma) \left(1-\Pr_A[\text{sample coord 1}]\right)  \\
 &\leq \frac{1}{2} + \Pr_A[\text{sample coord 1}]\left(1 - \frac{1}{2}\right)  \\
 &< \frac{1}{2}\left(1 + 1-2\delta\right)  \\
 &= 1 - \delta . 
\end{align*}
Similarly, if $\gamma \leq \frac{1}{2}$, then the probability of correctness when drawing samples from $U$ is at most
\begin{align*}
 &\Pr_U[\text{sample coord 1}] + \gamma \left(1-\Pr_U[\text{sample coord 1}]\right)  \\
 &< 1 - \delta
\end{align*}
by the same arithmetic. So the algorithm has a larger failure probability than $\delta$ in at least one of these cases.
\end{proof}

\begin{theorem} \label{theorem:test-unif-lower-infty-n}
To test uniformity in $\ell_{\infty}$ distance for any $n$ requires at least the following number of samples:
 \[ m = \frac{1}{2} \frac{\ln\left(1+n(1-2\delta)^2\right)}{\epsilon^2 n} . \]
\end{theorem}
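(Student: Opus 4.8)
The plan is to follow the Paninski-style argument of Theorem~\ref{theorem:test-unif-lower-smallp-smalln}, but replacing the paired $\pm\alpha$ perturbations by the single-heavy-coordinate construction $A^*$. Concretely, I would take the hard instance to be a uniformly random ``shift'' of $A^*$: for a location $\ell$ chosen uniformly from $\{1,\dots,n\}$, put mass $\tfrac1n+\epsilon$ on coordinate $\ell$ and mass $\tfrac1n-\tfrac{\epsilon}{n-1}$ on each other coordinate. Every such distribution is a valid probability vector at $\ell_\infty$ distance exactly $\epsilon$ from $U_n$, so an algorithm with failure probability $\le\delta$ must output ``not uniform'' with probability $\ge 1-\delta$ averaged over this family while still outputting ``uniform'' with probability $\ge 1-\delta$ on $U_n$. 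Letting $\vec Z$ be the distribution of the $m$ samples when the oracle is drawn from the family and $\vec U=(n^{-m},\dots,n^{-m})$ that of $m$ samples from $U_n$, the ``mini-lemma'' (Inequality~\ref{eqn:unif-lower-l1-lemma}) gives $\delta \ge \tfrac{1-\|\vec Z-\vec U\|_1}{2}$, so it suffices to show $\|\vec Z-\vec U\|_1 < 1-2\delta$ whenever $m$ is below the claimed bound.

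To bound $\|\vec Z-\vec U\|_1$, I would proceed exactly as in Theorem~\ref{theorem:test-unif-lower-smallp-smalln}: by Lemma~\ref{lemma:pnorm-ineq}, $\|\vec Z-\vec U\|_1 \le \sqrt{n^m}\,\|\vec Z-\vec U\|_2$, and $\|\vec Z-\vec U\|_2^2 = \sum_s \vec Z_s^2 - n^{-m}$, where $\sum_s \vec Z_s^2 = \E_{A,A'}\big[\big(\sum_i A_iA'_i\big)^m\big]$ for $A,A'$ independent uniform members of the family. The one computation particular to this construction is that $\sum_i A_iA'_i$ depends only on whether $A$ and $A'$ have the same heavy coordinate: writing $\beta = \tfrac{n^2\epsilon^2}{(n-1)^2}$, it equals $\tfrac1n\big(1+(n-1)\beta\big)$ when they coincide (probability $\tfrac1n$) and $\tfrac1n(1-\beta)$ otherwise. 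Substituting, $\|\vec Z-\vec U\|_1^2 \le \tfrac1n\big(1+(n-1)\beta\big)^m + \tfrac{n-1}{n}(1-\beta)^m - 1$.

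To finish, since $(1-\beta)^m<1$ the second term is at most $\tfrac{n-1}{n}$, so $\|\vec Z-\vec U\|_1^2 < \tfrac1n\big[\big(1+(n-1)\beta\big)^m - 1\big]$; applying $1+x\le e^x$ with $x=(n-1)\beta$ gives $\big(1+(n-1)\beta\big)^m \le e^{m(n-1)\beta}$. Thus $\|\vec Z-\vec U\|_1 < 1-2\delta$ holds as soon as $e^{m(n-1)\beta}\le 1+n(1-2\delta)^2$, i.e. $m(n-1)\beta \le \ln\big(1+n(1-2\delta)^2\big)$, and since $(n-1)\beta = \tfrac{n^2\epsilon^2}{n-1}$ and $\tfrac{n-1}{n^2}\ge\tfrac{1}{2n}$ for $n\ge 2$, this is implied by $m \le \tfrac12\tfrac{\ln(1+n(1-2\delta)^2)}{n\epsilon^2}$, which is the claimed bound. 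I expect the only subtle point — and the only real departure from the $\ell_1$ proof of \cite{paninski2008coincidence} — to be that the ``light'' coordinates sit strictly below $\tfrac1n$, so $\sum_i A_iA'_i$ dips below $\tfrac1n$ when $A\neq A'$; this is exactly what produces the extra $\tfrac{n-1}{n}(1-\beta)^m$ term, but because that term is bounded by $\tfrac{n-1}{n}$ it only works in our favor, so the remaining steps are routine bookkeeping.
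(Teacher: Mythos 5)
Your proposal is correct and follows essentially the same route as the paper's own proof: the same family of $n$ shifted copies of $A^*$, the same reduction $\delta \geq \frac{1-\|\vec Z-\vec U\|_1}{2}$, the same $\ell_1\!\to\!\ell_2$ bound with the factor $\sqrt{n^m}$, the same second-moment computation split into the cases $A=A'$ (probability $\tfrac1n$) and $A\neq A'$, bounding the latter term by $1$ and applying $1+x\le e^x$ to the former. The only cosmetic difference is where the estimate $\tfrac{n^2}{n-1}\le 2n$ is applied (the paper uses it inside the exponent, you use $\tfrac{n-1}{n^2}\ge\tfrac1{2n}$ at the end), which yields the identical constant.
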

\begin{proof}
We proceed by the same general technique as in Theorem \ref{theorem:test-unif-lower-smallp-smalln}, the proof of Paninski in \cite{paninski2008coincidence}.

Our family of distributions will be the possible permutations of the distribution $A$ from the proof of Theorem \ref{theorem:test-unif-lower-infty-eps}; namely, we will have a family of $n$ distributions, each of which puts probability $\frac{1}{n} + \epsilon$ on one coordinate and puts probability $\frac{1}{n} - \frac{\epsilon}{n-1}$ on the remaining coordinates. We select a coordinate $i \in \{1,\dots,n\}$ uniformly at random, which chooses the distribution that puts higher probability on $i$.

As shown in the proof of Theorem \ref{theorem:test-unif-lower-infty-eps}, letting $\vec{Z}$ be the distribution of samples obtained by picking a member of the family and then drawing $m$ samples, and letting $\vec{U}$ be the distribution of samples obtained by drawing $m$ samples from $U$, we have for any algorithm
\begin{align}
  \delta \geq \frac{1 - \|\vec{Z} - \vec{U}\|_1}{2} . \label{eqn:test-l1-mini-lemma2}
\end{align}
Meanwhile, by the $p$-norm inequality (Lemma \ref{lemma:pnorm-ineq}), recalling that $\vec{Z}$ and $\vec{U}$ are vectors of length $n^m$,
\begin{align}
 \|\vec{Z} - \vec{U}\|_1
  &\leq \sqrt{n^m \|\vec{Z}-\vec{U}\|_2^2}  \nonumber \\
  &= \sqrt{n^m \|\vec{Z}\|_2^2 - 1} ,  \label{eqn:test-infty-bound-l1-norm}
\end{align}
using that
\begin{align*}
 \|\vec{Z} - \vec{U}\|_2^2
  &= \sum_s |\vec{Z}_s - \vec{U}_s|^2  \\
  &= \sum_s \vec{Z}_s^2 + \vec{U}_s^2 - 2\vec{Z}_s\vec{U}_s  \\
  &= \|\vec{Z}\|_2^2 + \frac{1}{n^m} - 2\frac{1}{n^m}\sum_s \vec{Z}_s  \\
  &= \|\vec{Z}\|_2^2 - \frac{1}{n^m} .
\end{align*}
Thus, our task is again to bound $\|\vec{Z}\|_2^2$. Our next step toward this will be to obtain the following:
\begin{align*}
  \sum_s \vec{Z}_s^2 &= \E_{A,A'} \left( \E_{s \sim A} \Pr[s \sim A'] \right)^m.
\end{align*}
Here, $A$ and $A'$ are two distributions draw randomly from the family, and the notation $s \sim A$ means drawing a set of samples $s$ i.i.d. from $A$ (so the inner expectation is over a sample $s$ drawn from $A$ and is the expectation of the probability of that sample according to $A'$). The proof is precisely as in that of Theorem \ref{theorem:test-unif-lower-infty-eps}:
\begin{align*}
 \sum_s \vec{Z}_s^2
  &= \sum_s \left(\E_A \Pr[s \sim A]\right)\left(\E_{A'} \Pr[s \sim A']\right)  \\
  &= \E_{A,A'} \sum_s \Pr[s \sim A] \Pr[s \sim A']  \\
  &= \E_{A,A'} \E_{s \sim A} \Pr[s \sim A']  \\
  &= \E_{A,A'} \E_{s \sim A} \prod_{k=1}^m \Pr[s_k \sim A']  \\
  &= \E_{A,A'} \left( \E_{s \sim A} \Pr[s_k \sim A'] \right)^m.
\end{align*}
We used that each sample $s_k$ in $s$ is independent, so the expectation of the product is the product of the expectations; and since they are identically distributed, this is just the inner expectation to the $m$th power.

Next, we claim that
 \[ \E_{s\sim A} \Pr[s_k \sim A'] = \begin{cases} \frac{1}{n} + \frac{\epsilon^2 n}{n-1}  & A = A'  \\
      \frac{1}{n} - \frac{\epsilon^2 n}{(n-1)^2}  & A \neq A' \end{cases} .  \]
To prove it, suppose that $A$ has highest probability on coordinate $i$ and $A'$ on coordinate $j$. Then
\begin{align*}
  &\E_{s\sim A} \Pr[s_k \sim A']  \\
  &= \Pr[j \sim A] \left(\frac{1}{n} + \epsilon\right) + \left(1 - \Pr[j \sim A]\right)\left(\frac{1}{n} - \frac{\epsilon}{n-1}\right)
\end{align*}
and since $\Pr[j \sim A]$ is either $\frac{1}{n} + \epsilon$ in the case $A=A'$ or else $\frac{1}{n} - \frac{\epsilon}{n-1}$ otherwise, one can check the claim.

Thus we now have
\begin{align*}
 \sum_s \vec{Z}_s^2
  &= \E_{A,A'} \left( \begin{cases} \frac{1}{n} + \frac{\epsilon^2 n}{n-1}  & A = A'  \\
      \frac{1}{n} - \frac{\epsilon^2 n}{(n-1)^2}  & A \neq A' \end{cases} \right)^m .
\end{align*}
And because $A=A'$ with probability exactly $\frac{1}{n}$ when both are chosen randomly,
\begin{align*}
 \sum_s \vec{Z}_s^2
  &= \frac{1}{n}\left(\frac{1}{n} + \frac{\epsilon^2 n}{n-1}\right)^m + \frac{n-1}{n}\left(\frac{1}{n} - \frac{\epsilon^2 n}{(n-1)^2}\right)^m  \\
  &= \frac{1}{n^m} \left( \frac{1}{n} \left(1 + \frac{\epsilon^2 n^2}{n-1}\right)^m + \frac{n-1}{n}\left(1 - \frac{\epsilon^2 n^2}{(n-1)^2}\right)^m \right) \\
  &\leq \frac{1}{n^m} \left( \frac{1}{n} \left(1 + 2\epsilon^2 n\right)^m + \frac{n-1}{n} \right) \\
  &\leq \frac{1}{n^m} \left( \frac{1}{n}\exp\left[ 2m\epsilon^2 n \right]  + \frac{n-1}{n} \right)  \\
  &=    \frac{1}{n^m} \left( \frac{1}{n}\left(\exp\left[ 2m\epsilon^2 n \right] - 1\right) + 1 \right) .
\end{align*}

Plugging back in to Inequalities \ref{eqn:test-infty-bound-l1-norm} and \ref{eqn:test-l1-mini-lemma2}, it is necessary that
 \[ \delta \geq \frac{1 - \sqrt{\frac{1}{n}\left(\exp\left[ 2m\epsilon^2 n \right] - 1\right)}}{2} ; \]
equivalently,
 \[ \frac{1}{n}\left(\exp\left[ 2m\epsilon^2 n \right] - 1\right) \geq (1 - 2\delta)^2 ; \]
which equates to
 \[ \exp\left[ 2m\epsilon^2 n \right] \geq n(1 - 2\delta)^2 + 1 . \]
Thus,
 \[ m \geq \frac{1}{2} \frac{\ln\left(1 + n(1-2\delta)^2 \right)}{\epsilon^2 n} . \]
\end{proof}

\subsection{Upper Bounds (sufficient)}

Let us briefly recall Algorithm \ref{alg:test-unif-infty}. For a threshold $\alpha(n) = \bigTheta{\frac{\ln(n)}{n}}$, we condition on whether $\epsilon \leq 2\alpha(n)$ or $\epsilon > \alpha(n)$. These essentially correspond to the small $n$ and large $n$ regimes for this problem.

If $\epsilon \leq 2\alpha(n)$, we draw $\bigTheta{\frac{\ln(n)}{n \epsilon^2}}$ samples and check whether all coordinates have a number of samples close to their expectation; if not, we output ``not uniform''.

If $\epsilon > 2\alpha(n)$, we draw $\bigTheta{\frac{1}{\epsilon}}$ samples. We choose $\hat{n}$ such that $\epsilon = 2\alpha(\hat{n})$; in other words, $\epsilon = \bigTheta{\frac{\ln(\hat{n})}{\hat{n}}}$. We then divide the coordinates into about $\hat{n}$ ``groups'' where, if $A=U$, then each group has probability about $\frac{1}{\hat{n}}$. We then check for any group with a ``large'' outlier number of samples; if one exists, then we output ``not uniform''.

\theoremtestinftysufficient*
\begin{proof}
For each case, we will prove two lemmas that imply the upper bound. First, for the case, $\epsilon \leq 2\alpha(n)$, Lemma \ref{lemma:test-infty-small-equal} states that if $A=U$ then $X_i \in \frac{m}{n} \pm t$ for all coordinates $i$ except with probability $\delta$; and Lemma \ref{lemma:test-infty-small-different} states that if $\|A-U\|_{\infty} \geq \epsilon$ then some coordinate has $X_i \not\in \frac{m}{n} \pm t$ except with probability $\delta$.

Similarly, for the case $\epsilon > 2\alpha$, Lemma \ref{lemma:test-infty-large-equal} states that if $A=U$ then $X_j < m\epsilon - t$ for all groups $j$ except with probability $\delta$; and Lemma \ref{lemma:test-infty-large-different} states that if $\|A-U\|_{\infty} \geq \epsilon$ then some group has $X_j \geq m\epsilon - t$ except with probability $\delta$.
\end{proof}
\begin{lemma} \label{lemma:test-infty-small-equal}
If $A = U$, then (for any $m, n,\epsilon$) with probability at least $1-\delta$, every coordinate $i$ satisfies that $X_i \in \frac{m}{n} \pm \sqrt{3\frac{m}{n}\ln\left(\frac{2n}{\delta}\right)}$.
\end{lemma}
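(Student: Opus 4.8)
The plan is a Chernoff bound applied coordinate-by-coordinate, followed by a union bound over the $n$ coordinates. Under $A = U_n$ with $m$ i.i.d.\ samples, the count $X_i$ of coordinate $i$ is $\mathrm{Binomial}(m,1/n)$ --- in particular a sum of $m$ independent indicators --- with mean $\mu := \E X_i = m/n$. Write $L := \ln(2n/\delta)$ and $t := \sqrt{3\mu L}$, so that $t^2/\mu = 3L$. It suffices to show $\Pr[|X_i - \mu| > t] \le \delta/n$ for each fixed $i$, since a union bound then gives $\Pr[\exists i : |X_i - \mu| > t] \le \delta$, which is the claim.

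First I would treat the ``typical'' regime $t \le \mu$ (equivalently $m/n \ge 3L$). Here the two-sided multiplicative Chernoff bound $\Pr[|X_i - \mu| \ge \gamma\mu] \le 2e^{-\mu\gamma^2/3}$, valid for $0 < \gamma \le 1$ (e.g.\ Mitzenmacher--Upfal~\cite{mitzenmacher2005probability}), applies with $\gamma = t/\mu \le 1$: the exponent is $\mu\gamma^2/3 = t^2/(3\mu) = L$, so $\Pr[|X_i - \mu| \ge t] \le 2e^{-L} = \delta/n$. The lower tail is in fact smaller, with exponent $\mu\gamma^2/2$; only the upper tail is binding, and the factor $2$ is absorbed by the ``$2$'' inside $L = \ln(2n/\delta)$.

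Second, in the complementary ``large-deviation'' regime $t > \mu$, the lower constraint $X_i \ge \mu - t$ is vacuous because $X_i \ge 0$, so it remains only to bound $\Pr[X_i \ge \mu + t]$ by $\delta/n$. For this I would use a large-deviation form of the Chernoff bound --- e.g.\ $\Pr[X_i \ge a] \le e^{-\mu}(e\mu/a)^{a}$ with $a = \mu + t$, or $\Pr[X_i \ge R] \le 2^{-R}$ for $R \ge 6\mu$ --- and verify that the constant $3$ in the definition of $t$ still suffices for the range of $m$ relevant to Algorithm~\ref{alg:test-unif-infty}; recall that in the regime $\epsilon \le 2\alpha(n)$ the ``sufficient'' choice $m = \bigTheta{\ln(2n/\delta)/(n\epsilon^2)}$ combined with $\epsilon = \bigO{\ln(n)/n}$ keeps $\mu = m/n$ bounded below, which is what controls this case.

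I expect this second regime to be the only genuine obstacle: the textbook multiplicative Chernoff bound with constant $1/3$ is stated for deviations no larger than the mean, so when $\mu = m/n$ is small relative to $\ln(2n/\delta)$ one must pick a large-deviation estimate and confirm the exponent still clears $\ln(2n/\delta)$. Everything else --- the binomial law of $X_i$, independence across the $m$ samples, the union bound, and the bookkeeping of the two tails --- is routine.
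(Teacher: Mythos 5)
Your first regime ($t \le \mu$, i.e.\ $m/n \ge 3\ln(2n/\delta)$) is exactly the paper's argument: $X_i \sim \mathrm{Binomial}(m,1/n)$, the two-sided multiplicative Chernoff bound with exponent $t^2/(3\mu)$, each tail at most $\delta/(2n)$, and a union bound over the $n$ coordinates. You are in fact more careful than the paper, which cites Mitzenmacher--Upfal Theorems 4.4/4.5 and applies $P \le e^{-t^2/(3\mu)}$ to both tails with no restriction on the size of the deviation, even though, as you note, that form is only valid for deviations up to the mean.

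The genuine gap is your second regime, and it cannot be closed in the way you propose, because the lemma as stated (``for any $m,n,\epsilon$'') is false there: take $\delta = 1/3$, $n = 1000$, $m = 1$, so $\mu = 1/n$ and $t = \sqrt{3\ln(2n/\delta)/n} \approx 0.16$; the single sampled coordinate has $X_i = 1 \notin \mu \pm t$ with probability $1$. So no large-deviation estimate can rescue the unrestricted claim. Your fallback --- verifying the constant only for the $m$ used by Algorithm \ref{alg:test-unif-infty} --- both changes the statement and does not behave as you suggest: at $\epsilon = 2\alpha(n)$ the prescribed $m$ gives $\mu = m/n = \Theta\left(\ln(1/\delta)^2/\ln(2n/\delta)\right)$, which is \emph{not} bounded below by a constant, so $t = \sqrt{3\mu\ln(2n/\delta)} = O_{\delta}(1)$ while the Poisson-type upper tail is roughly $\exp\left(-\Theta\left(t\ln(t/\mu)\right)\right) = \exp\left(-O_{\delta}(\ln\ln n)\right)$, which does not beat the $\delta/n$ needed for the union bound once $n$ is large. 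In other words the $t > \mu$ case is not mere bookkeeping: the lemma needs an extra hypothesis such as $m \ge 3n\ln(2n/\delta)$ (forcing $t \le \mu$), or a larger threshold when $m/n$ is small compared to $\ln(2n/\delta)$, and the paper's own proof silently skips exactly the point you flagged.
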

\begin{proof}
The number of samples of any particular coordinate $i$ is distributed as a Binomial$(m,1/n)$. Let $\mu = \E X_i = \frac{m}{n}$. By a Chernoff bound (e.g. Mitzenmacher and Upfal~\cite{mitzenmacher2005probability}, Theorems 4.4 and 4.5), the following inquality holds for both $P = \Pr[X_i \leq \mu - t]$ and $P = \Pr[X_i \geq \mu + t]$:
\begin{equation}
  P \leq e^{-\frac{t^2}{3\mu}} . \label{eqn:chernoff-loose}
\end{equation}
Since $\mu = \frac{m}{n}$, if we set
 \[ t = \sqrt{3\frac{m}{n}\ln\left(\frac{2n}{\delta}\right)} , \]
then we get that $X_i$ falls outside the range in either direction with probability at most $\frac{\delta}{n}$; a union bound over the $n$ coordinates gives that the probability of any of them falling outside the range is at most $\delta$.
\end{proof}

\begin{lemma} \label{lemma:test-infty-small-different}
Suppose $\|A-U\|_{\infty} \geq \epsilon$ and $\epsilon \leq 2\alpha(n)$, and we draw $m \geq 23\frac{\ln\left(\frac{2n}{\delta}\right)}{n \epsilon^2}$ samples. Then with probability at least $1-\delta$, some coordinate $i$ satisfies that $X_i \not\in \frac{m}{n} \pm \sqrt{3\frac{m}{n}\ln\left(\frac{2n}{\delta}\right)}$.
\end{lemma}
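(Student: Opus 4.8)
The plan is to localize the argument to a single coordinate. Since $\|A-U\|_\infty\ge\epsilon$, fix a coordinate $i^*$ with $|A_{i^*}-\tfrac1n|\ge\epsilon$. By the test in Algorithm~\ref{alg:test-unif-infty}, it suffices to show that with probability at least $1-\delta$ the count $X_{i^*}$ falls outside the band $\tfrac mn\pm t$, where $t=\sqrt{3\tfrac mn\ln(2n/\delta)}$. I split into the \emph{heavy} case $A_{i^*}\ge\tfrac1n+\epsilon$ and the \emph{light} case $A_{i^*}\le\tfrac1n-\epsilon$ (the latter requiring $\epsilon\le\tfrac1n$, i.e.\ $n\epsilon\le1$).

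In the heavy case, $X_{i^*}\sim\mathrm{Bin}(m,A_{i^*})$ stochastically dominates $Y\sim\mathrm{Bin}(m,\tfrac1n+\epsilon)$, so $\Pr[X_{i^*}\le\tfrac mn+t]\le\Pr[Y\le\tfrac mn+t]$; since $\E Y=\tfrac mn+m\epsilon$, the right-hand side is the probability of a downward deviation of $m\epsilon-t$ from the mean, which I bound by the standard multiplicative Chernoff lower-tail estimate $\Pr[Y\le\E Y-s]\le\exp\!\bigl(-s^2/(2\,\E Y)\bigr)$ (as used in the proof of Lemma~\ref{lemma:test-infty-small-equal}). The light case is the mirror image: $X_{i^*}$ is stochastically below $Y'\sim\mathrm{Bin}(m,\tfrac1n-\epsilon)$ with $\E Y'=\tfrac mn-m\epsilon$, and $\Pr[X_{i^*}\ge\tfrac mn-t]\le\Pr[Y'\ge\E Y'+(m\epsilon-t)]$, bounded by the matching upper-tail (Bernstein-type) inequality, whose denominator $2\,\E Y'+\tfrac23(m\epsilon-t)$ is at most $\tfrac{2m}{n}$.

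What remains is to verify that $m\ge 23\ln(2n/\delta)/(n\epsilon^2)$, together with the hypothesis $\epsilon\le2\alpha(n)$, makes each exponent at least $\ln(1/\delta)$. Substituting the value of $m$ gives $m\epsilon=23\ln(2n/\delta)/(n\epsilon)$ and $t=\sqrt{69}\,\ln(2n/\delta)/(n\epsilon)$, so the signal-to-threshold gap is $m\epsilon-t=(23-\sqrt{69})\ln(2n/\delta)/(n\epsilon)>0$. Using $\E Y,\ 2\,\E Y'+\tfrac23(m\epsilon-t)\le2\max\{\tfrac mn,m\epsilon\}$, every exponent is at least
\[
 \frac{(m\epsilon-t)^2}{4\max\{\tfrac mn,\,m\epsilon\}}=\frac{(23-\sqrt{69})^2}{92}\cdot\frac{\ln(2n/\delta)}{\max\{1,\,n\epsilon\}}.
\]
If $n\epsilon\le1$ this is at least $\tfrac{(23-\sqrt{69})^2}{92}\ln(1/\delta)\ge\ln(1/\delta)$; if $n\epsilon\ge1$, the hypothesis $\epsilon\le2\alpha(n)$, i.e.\ $n\epsilon\le2\bigl(1+\tfrac{\ln(2n)}{\ln(1/\delta)}\bigr)$, gives $\tfrac{\ln(2n/\delta)}{n\epsilon}\ge\tfrac12\ln(1/\delta)$, so the exponent is at least $\tfrac{(23-\sqrt{69})^2}{184}\ln(1/\delta)\ge\ln(1/\delta)$. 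Hence $\Pr[X_{i^*}\in\tfrac mn\pm t]\le e^{-\ln(1/\delta)}=\delta$ in all cases.

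The one genuinely delicate point is this last regime, $n\epsilon\ge1$: there the sample budget $m\asymp\ln(2n/\delta)/(n\epsilon^2)$ is small and $\E X_{i^*}$ is of the same order as the signal $m\epsilon$, so the Chernoff exponent would be too weak on its own --- it is precisely the standing assumption $\epsilon\le2\alpha(n)$ (which forces $n\epsilon$ to be small relative to $\ln(2n)/\ln(1/\delta)$, and is exactly the condition under which the algorithm stays in this non-bucketed branch) that rescues it. A secondary check is that the constant $23$ --- not a smaller one --- is what makes $(23-\sqrt{69})^2/92\ge1$, i.e.\ guarantees that the threshold $t$ does not erase the signal $m\epsilon$.
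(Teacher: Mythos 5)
Your proof is correct and takes essentially the same route as the paper's: isolate a single coordinate with $|A_i-\tfrac1n|\ge\epsilon$, split into the heavy and light cases, apply a Chernoff-type tail bound to that coordinate's binomial count, and invoke the hypothesis $\epsilon\le2\alpha(n)$ exactly where the heavy coordinate's mean can be of order $m\epsilon$ --- your substitutions (stochastic dominance to reduce to $\mathrm{Bin}(m,\tfrac1n\pm\epsilon)$, and the $2\mu$ lower tail plus Bernstein upper tail in place of the paper's uniform $e^{-s^2/(3\mu)}$ bound with monotone substitution of $\E X_i$) are cosmetic. The only loose end is that you evaluate the exponent at $m=23\ln(2n/\delta)/(n\epsilon^2)$ exactly; since the exponent equals $\bigl(\sqrt{m}\,\epsilon-\sqrt{3\ln(2n/\delta)/n}\bigr)^2/\bigl(4\max\{\tfrac1n,\epsilon\}\bigr)$, which is increasing in $m$ from this point on, the bound extends to all $m$ at least that value.
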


\begin{proof}
There must be some coordinate $i$ such that either $A_i \leq \frac{1}{n} - \epsilon$ or $A_i \geq \frac{1}{n} + \epsilon$. Take the first case. (Note that in this case $\frac{1}{n} \geq \epsilon$.)
By the Chernoff bound mentioned above (Inequality \ref{eqn:chernoff-loose}),
\begin{align*}
 \Pr\left[ X_i \geq \frac{m}{n} - t \right]
  &= \Pr\left[ X_i \geq \E X_i + \left(\frac{m}{n} - t - \E X_i\right) \right] \\
  &\leq \exp \left[ -\frac{\left(\frac{m}{n} - t - \E X_i\right)^2 }{ 3\E X_i} \right]  \\
  &\leq \exp \left[ -\frac{\left(m\epsilon - t\right)^2 }{ 3m\left(\frac{1}{n} - \epsilon\right)} \right]
\end{align*}
because $\E X_i \leq m\left(\frac{1}{n} - \epsilon\right)$ and this substitution only increases the bound.

For this to be bounded by $\delta$, it suffices that
 \[ m\epsilon - t \geq \sqrt{3 \frac{m}{n} \ln\left(\frac{1}{\delta}\right)} . \] 
Now we substitute $t = \sqrt{3\frac{m}{n}\ln\left(\frac{2n}{\delta}\right)}$. Because $t$ is larger than the right-hand side, it suffices that
\begin{align*}
 m\epsilon &\geq 2 t  \\
 \iff m\epsilon &\geq 2\sqrt{\frac{3m}{n}\ln\left(\frac{2n}{\delta}\right)}  \\
 \iff m &\geq \frac{12 \ln\left(\frac{2n}{\delta}\right)}{n\epsilon^2} .
\end{align*}
That completes the proof for this case.

Now take the case that there exists some $A_i \geq \frac{1}{n} + \epsilon$.
\begin{align*}
 \Pr\left[ X_i \leq \frac{m}{n} + t \right]
  &= \Pr\left[ X_i \leq \E X_i - \left(\E X_i - \frac{m}{n} - t\right) \right]  \\
  &\leq \exp\left[-\frac{(\E X_i - \frac{m}{n} - t)^2}{3 \E X_i} \right] .
\end{align*}
This bound is decreasing in $\E X_i$, so we can use the inequality $\E X_i \geq m\left(\epsilon + 1/n\right)$:
\begin{align*}
 &\leq \exp\left[-\frac{(m\epsilon - t)^2}{3m\left(\frac{1}{n} + \epsilon\right)}\right] .
\end{align*}
The above is bounded by $\delta$ if it is true that
 \[ m\epsilon - t \geq \sqrt{3m\ln\left(\frac{1}{\delta}\right)\left(\frac{1}{n} + \epsilon\right)} . \]
Since $\epsilon \leq 2\alpha(n)$, we have
\begin{align*}
 \ln\left(\frac{1}{\delta}\right)\left(\frac{1}{n} + \epsilon\right)
  &\leq \ln\left(\frac{1}{\delta}\right)\left(\frac{1}{n} + \myalpha{n}\right)  \\
  &= \frac{3\ln\left(\frac{1}{\delta}\right)}{n} + \frac{2\ln\left(2n\right)}{n}  \\
  &\leq \frac{3\ln\left(\frac{2n}{\delta}\right)}{n}  .
\end{align*}
Thus, it suffices to have $m$ satisfy
\begin{align*}
 m\epsilon - t
  &\geq \sqrt{\frac{9m\ln\left(\frac{2n}{\delta}\right)}{n}}  \\
  &=    3\sqrt{\frac{m\ln\left(\frac{2n}{\delta}\right)}{n}} .
\end{align*}
Because $t = \sqrt{3}\sqrt{\frac{m\ln\left(\frac{2n}{\delta}\right)}{n}}$, it suffices that
\begin{align*}
 m\epsilon \geq \left(3+\sqrt{3}\right)\sqrt{\frac{m\ln\left(\frac{2n}{\delta}\right)}{n}}  \\
 \iff m \geq \left(3+\sqrt{3}\right)^2 \frac{\ln\left(\frac{2n}{\delta}\right)}{n \epsilon^2} .
\end{align*}
In particular, $\left(3 + \sqrt{3}\right)^2 \leq 23$.
\end{proof}

\begin{lemma} \label{lemma:test-infty-large-equal}
Suppose $A = U$ and $\epsilon > 2\alpha(n)$, and we draw $m \geq 35\frac{\ln\left(1/\delta\right)}{\epsilon}$ samples. Then with probability at least $1-\delta$, every group $j$ satisfies that $X_{j} \leq m\epsilon - \sqrt{3m\epsilon\ln\left(\frac{1}{\delta}\right)}$.
\end{lemma}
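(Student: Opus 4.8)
The plan is to prove this by a Chernoff bound applied to each group's sample count, followed by a union bound over the groups, with the required number of samples pinned down by the calibration of $\alpha$. Throughout, I work in the $\epsilon > 2\alpha(n)$ branch of Algorithm \ref{alg:test-unif-infty}: $\hat n$ is defined by $\epsilon = 2\alpha(\hat n)$, the coordinates are split into $N \le 2\lceil \hat n\rceil$ groups each containing at most $\lfloor n/\hat n\rfloor$ coordinates, and $X_j$ counts the samples landing in group $j$. When $A = U_n$, a fresh sample lands in group $j$ with probability equal to that group's mass, which is at most $\lfloor n/\hat n\rfloor/n \le 1/\hat n$; hence each $X_j$ is a sum of $m$ independent indicators and $\E X_j \le m/\hat n$.

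First I would extract from the definition of $\alpha$ the two facts that drive everything. Writing $L := 1 + \ln(2\hat n)/\ln(1/\delta)$, the identity $\epsilon = 2\alpha(\hat n) = \frac{2}{\hat n}\bigl(1 + \frac{\ln(2\hat n)}{\ln(1/\delta)}\bigr)$ says both (i) $L = \hat n\epsilon/2 \ge 1$, so $\E X_j \le m/\hat n = m\epsilon/(2L) < m\epsilon/2$, and (ii) $\ln(2\hat n) = (L-1)\ln(1/\delta)$, so $\ln(N/\delta) \le \ln(2\lceil\hat n\rceil) + \ln(1/\delta) \le L\ln(1/\delta) + O(1)$. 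Together with $t = \sqrt{3m\epsilon\ln(1/\delta)} \le \sqrt{3/35}\,m\epsilon$ (valid since $m \ge 35\ln(1/\delta)/\epsilon$), fact (i) shows that the threshold $m\epsilon - t$ exceeds $\E X_j$ by at least $m\epsilon(1 - \tfrac{1}{2L}) - t \ge \tfrac12 m\epsilon - t$, a positive constant fraction of $m\epsilon$.

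Next I would bound the per-group failure probability $\Pr[X_j \ge m\epsilon - t]$. Since the target $m\epsilon - t$ lies a constant fraction of $m\epsilon$ above the mean $\E X_j \le m\epsilon/(2L)$, a multiplicative Chernoff bound gives $\Pr[X_j \ge m\epsilon - t] \le \exp(-c\, m\epsilon\, L)$ for an absolute constant $c$. Concretely this splits into two regimes: when $L$ is close to $1$ (equivalently $\hat n$ small) the deviation is only a constant multiple of the mean, so one uses the Gaussian-type bound $\Pr[X \ge \mu + s] \le e^{-s^2/(3\mu)}$ from Inequality \ref{eqn:chernoff-loose}; when $L$ is large ($\hat n$ large, $\E X_j$ tiny) the target dwarfs the mean and one uses the large-deviation form $\Pr[X \ge a] \le (e\E X_j/a)^a$. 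In both cases the factor $L$ survives in the exponent, which is exactly what is needed to absorb the $\ln\hat n$ in the union bound. A union bound over the $N$ groups then gives total failure probability at most $N\exp(-c\,m\epsilon\,L) \le \exp\bigl(L\ln(1/\delta) + O(1) - c\,m\epsilon\,L\bigr)$, which is at most $\delta$ once $c\,m\epsilon \ge \ln(1/\delta) + O(1)$, i.e.\ once $m \ge 35\ln(1/\delta)/\epsilon$; the constant $35$ is chosen to cover both this inequality and the additive $O(1)$. Finally, raising $m$ above this threshold only widens the gap above the mean and enlarges the Chernoff exponent, so the bound holds for every $m \ge 35\ln(1/\delta)/\epsilon$.

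The step I expect to be the main obstacle is the Chernoff analysis: producing one clean tail bound valid uniformly in $\hat n$ — interpolating between the moderate-deviation regime ($\hat n$ small) and the genuine large-deviation regime ($\hat n$ large) — and controlling the constants tightly enough that $m = 35\ln(1/\delta)/\epsilon$ really beats the $\ln(2\lceil\hat n\rceil)$ loss in the union bound. The rest — the two identities that come out of the definition of $\alpha$, the union bound, and monotonicity in $m$ — is routine bookkeeping.
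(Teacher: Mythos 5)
Your plan has the same skeleton as the paper's proof (per-group mean at most $m/\hat n$, a Chernoff bound per group at the threshold $m\epsilon-t$, a union bound over the at most $2\lceil\hat n\rceil$ groups, and the calibration $\ln(2\hat n/\delta)/\hat n=\ln(1/\delta)\,\alpha(\hat n)=\ln(1/\delta)\,\epsilon/2$ together with $\epsilon-1/\hat n\ge\epsilon/2$, which is where $35\ge\bigl(2\sqrt3(1+1/\sqrt2)\bigr)^2$ comes from). The gap is the unified per-group bound you assert, $\Pr[X_j\ge m\epsilon-t]\le\exp(-c\,m\epsilon\,L)$. Writing $\mu=\E X_j\le m\epsilon/(2L)$ and $a=m\epsilon-t\ge(1-\sqrt{3/35})\,m\epsilon\approx 0.71\,m\epsilon$, the deviation exceeds the mean as soon as $L\gtrsim 1.4$, i.e.\ in essentially the whole parameter range; there the sub-Gaussian form of Inequality \ref{eqn:chernoff-loose} is not available, and the large-deviation form $(e\mu/a)^a$ that you propose gives exponent $a\ln\bigl(a/(e\mu)\bigr)=\Theta(m\epsilon\ln L)$, not $\Theta(m\epsilon L)$ --- and this is the true order of the Binomial tail, so no Chernoff-type estimate can do better. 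Hence ``in both cases the factor $L$ survives in the exponent'' is false in the case that matters, and your union-bound arithmetic, which needs the exponent to beat $\ln(N/\delta)\approx L\ln(1/\delta)$ with $m\epsilon=35\ln(1/\delta)$, does not close: at the minimal $m$ the per-group exponent is about $24.7\,\ln(1/\delta)\bigl[\ln(1.41L)-1\bigr]$, which falls below $L\ln(1/\delta)$ already for $L$ on the order of $10^2$, and for larger $L$ the linear union-bound cost wins outright.

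So the obstacle you flagged at the end is real and is not just a matter of optimizing constants: an honest two-regime Chernoff analysis cannot reproduce a linear-in-$L$ exponent, and without it the stated sample bound $m\ge 35\ln(1/\delta)/\epsilon$ is not recovered by this route when $\ln\hat n\gg\ln(1/\delta)$. For comparison, the paper's proof obtains the linear-in-$L$ exponent by applying the $e^{-s^2/(3\mu)}$ form of Inequality \ref{eqn:chernoff-loose} with $s=m\epsilon-t-m/\hat n$ and $\mu\le m/\hat n$ uniformly, i.e.\ also when $s\gg\mu$, which is outside the range in which the cited Chernoff bounds justify that form; your attempt to treat that regime correctly is exactly what exposes the difficulty. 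To complete a proof along your lines one must either accept a weaker conclusion in the large-$L$ regime (e.g.\ a sample bound with an extra $\ln\hat n$-type factor, or a larger threshold $t$), or restrict the range of $L$, rather than expect the claimed $\exp(-c\,m\epsilon\,L)$ tail to hold.
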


\begin{proof}
Recall that we have divided into at most $2\hat{n}$ groups, each of size $\lfloor \frac{n}{\hat{n}} \rfloor$. When $A=U$, this implies that each group has probability at most $\frac{1}{\hat{n}}$. Therefore, by the same Chernoff bound (Inequality \ref{eqn:chernoff-loose}), for any group $j$,
\begin{align*}
 \Pr[X_j \geq m\epsilon - t] 
  &= \Pr\left[X_j \geq \E X_j + \left(m\epsilon - t - \E X_j \right) \right]  \\
  &\leq \exp\left[ -\frac{\left(m\epsilon - t - \E X_j \right)^2}{3\E X_j} \right]  \\
  &\leq \exp\left[ -\frac{\left(m\epsilon - t - \frac{m}{\hat{n}}\right)^2}{3m/\hat{n}} \right] .
\end{align*}
We wish this probability to be bounded by $\frac{\delta}{2\hat{n}}$, as then, by a union bound over the at most $2\hat{n}$ groups, the probability that any group exceeds the threshold is at most $\delta$. Thus, it suffices that
\begin{align*}
 m\epsilon - t - \frac{m}{\hat{n}}
  &\geq \sqrt{3\frac{m}{\hat{n}}\ln\left(\frac{2\hat{n}}{\delta}\right)}
\end{align*}
Now we can apply our fortuitous choice of $\hat{n}$: Note that
\begin{align*}
 \frac{\ln\left(\frac{2\hat{n}}{\delta}\right)}{\hat{n}}
  &= \frac{\ln\left(\frac{1}{\delta}\right) + \ln(2\hat{n})}{\hat{n}}  \\
  &= \ln\left(\frac{1}{\delta}\right) \alpha(\hat{n})  \\
  &= \ln\left(\frac{1}{\delta}\right) \frac{\epsilon}{2} .
\end{align*}
So it suffices that
\begin{align*}
 m\epsilon - t - \frac{m}{\hat{n}}
  &\geq \sqrt{\frac{3}{2}}\sqrt{m\epsilon \ln\left(\frac{1}{\delta}\right)} .
\end{align*}
We have that $t = \sqrt{3}\sqrt{m\epsilon \ln\left(\frac{1}{\delta}\right)}$, so it suffices that
 \[ m\left(\epsilon - \frac{1}{\hat{n}}\right) \geq \sqrt{3}\left(1+\frac{1}{\sqrt{2}}\right)\sqrt{m\epsilon \ln\left(\frac{1}{\delta}\right)} . \]
Since $\epsilon = 2\alpha(\hat{n})$, in particular $\epsilon \geq \frac{2}{\hat{n}}$, or $\epsilon - \frac{1}{\hat{n}} \geq \frac{\epsilon}{2}$. Therefore, it suffices that
\begin{align*}
  m\epsilon &\geq 2\sqrt{3}\left(1+\frac{1}{\sqrt{2}}\right)\sqrt{m\epsilon \ln\left(\frac{1}{\delta}\right)}  \\
 \iff ~~ m &\geq \left(2\sqrt{3}\left(1+\frac{1}{\sqrt{2}}\right)\right)^2 \frac{\ln\left(\frac{1}{\delta}\right)}{\epsilon} .
\end{align*}
In particular, $\left(2\sqrt{3}\left(1+\frac{1}{\sqrt{2}}\right)\right)^2 \leq 35$.
\end{proof}

\begin{lemma} \label{lemma:test-infty-large-different}
Suppose $\|A-U\|_{\infty} \geq \epsilon$ and $\epsilon > 2\alpha(n)$. Then (for any $m$) with probability at least $1-\delta$, there exists some group $j$ whose number of samples $X_j \geq m\epsilon - \sqrt{3m\epsilon\ln\left(\frac{1}{\delta}\right)}$.
\end{lemma}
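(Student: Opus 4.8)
\textit{Proof proposal.} The idea is to pin down one ``heavy'' coordinate whose own group is, by itself, very likely to clear the threshold. First I would note that $\|A-U\|_{\infty} \ge \epsilon$ means some coordinate $i^*$ has $|A_{i^*} - 1/n| \ge \epsilon$, and that in the present regime this forces $A_{i^*} \ge 1/n + \epsilon$: the other possibility, $A_{i^*} \le 1/n - \epsilon$, cannot occur because $\epsilon > 2\alpha(n) > 2/n > 1/n$ makes $1/n - \epsilon < 0 \le A_{i^*}$. (Only $\epsilon > 1/n$ is used here; the stronger hypothesis $\epsilon > 2\alpha(n)$ is what Lemma~\ref{lemma:test-infty-large-equal} needs to balance its union bound.) Let $j^*$ be the group into which the partition placed $i^*$, and let $B$ count the samples falling exactly on $i^*$, so that $X_{j^*} \ge B$ and $B \sim \mathrm{Bin}(m, A_{i^*})$ with $\E B = m A_{i^*} \ge m\epsilon + m/n$. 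It therefore suffices to show $\Pr[B \le m\epsilon - t] \le \delta$, where $t = \sqrt{3m\epsilon \ln(1/\delta)}$.

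Set $L = \ln(1/\delta) > 0$. If $m\epsilon \le t$ then $m\epsilon - t \le 0 \le B$ deterministically and there is nothing to prove, so assume $m\epsilon > t$; squaring, this is the same as $m\epsilon > 3L$, and it also gives $t = \sqrt{3m\epsilon L} \ge 3L$. Writing $a = \E B$ and $c = m\epsilon$ (so $t^2 = 3cL$ and $a \ge c$), the shortfall below the mean is $s := a - (m\epsilon - t) = a - c + t \ge t \ge 0$, so the lower-tail Chernoff bound of Inequality~\ref{eqn:chernoff-loose} applies:
\[ \Pr[B \le m\epsilon - t] \;=\; \Pr[B \le \E B - s] \;\le\; \exp\!\left(-\frac{(a-c+t)^2}{3a}\right). \]
It remains to check $(a-c+t)^2 \ge 3aL$ for every $a \ge c$. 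Expanding, dropping $(a-c)^2 \ge 0$, and using $3aL = 3(a-c)L + 3cL = 3(a-c)L + t^2$, the inequality reduces to $2(a-c)t \ge 3(a-c)L$, which holds because $a - c \ge 0$ and $2t \ge 6L \ge 3L$. Hence the exponent is at most $-L$, so $\Pr[B \le m\epsilon - t] \le e^{-L} = \delta$, and with probability at least $1 - \delta$ the group $j^*$ satisfies $X_{j^*} \ge B \ge m\epsilon - t$.

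The only genuinely delicate point --- and the reason one cannot simply quote a Chernoff bound verbatim --- is that the threshold $m\epsilon - t$ does not scale with $A_{i^*}$, so when $i^*$ is far heavier than $1/n + \epsilon$ the event $\{B \le m\epsilon - t\}$ is a deviation of order $\E B$ rather than of order $t$; the one-line monotonicity check $(a-c+t)^2 \ge 3aL$ on $a \ge c$ is exactly what certifies that this extra room only helps rather than hurts. Everything else --- which coordinate lands in which group, the precise value of $\hat n$, the exact form of $\alpha$ --- is irrelevant to this lemma, since we only ever use $X_{j^*} \ge B$ and the single consequence $\epsilon > 1/n$ of the regime hypothesis.
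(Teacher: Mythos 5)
Your proposal is correct and follows essentially the same route as the paper: isolate a coordinate $i^*$ with $A_{i^*} \geq \frac{1}{n}+\epsilon$ (using $\frac{1}{n}-\epsilon<0$), note that its group's count dominates the coordinate's count, and apply the lower-tail Chernoff bound of Inequality \ref{eqn:chernoff-loose} with $t=\sqrt{3m\epsilon\ln(1/\delta)}$. Your explicit check that $(a-c+t)^2 \geq 3aL$ for $a \geq c$, together with the degenerate case $m\epsilon \leq t$, simply spells out the monotonicity step the paper compresses into ``using that $\E X_i \geq m\epsilon$''.
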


\begin{proof}
This is just a Chernoff bound. Note that if coordinate $i$ has some number of samples, then there exists a group (that containing $i$) having at least that many samples. So we simply prove the lemma for the number of samples of some coordinate $X_i$.

If $\|A-U\|_{\infty} \geq \epsilon$ and $\epsilon > 2\alpha(n)$, then in particular $\epsilon > \frac{2}{n}$, which implies that there exists some coordinate $i$ with $A_i > \frac{1}{n} + \epsilon$ (because $\frac{1}{n} - \epsilon < 0$).
Using the Chernoff bound mentioned above (Inequality \ref{eqn:chernoff-loose}),
\begin{align*}
 \Pr[X_i < m\epsilon - t]
  &=    \Pr[X_i < \E X_i - (\E X_i - m\epsilon + t)]  \\
  &\leq \exp\left[ - \frac{\left(\E X_i - m\epsilon + t\right)^2}{3\E X_i} \right]  \\
  &\leq \exp\left[ - \frac{t^2}{3m\epsilon} \right] ,
\end{align*}
using that $\E X_i \geq m\epsilon$; and this is bounded by $\delta$ if
 \[ t \geq \sqrt{3m\epsilon\ln\left(\frac{1}{\delta}\right)} . \]
\end{proof}

\section{Distribution Learning}
\label{section:learning}

\subsection{Upper Bounds (sufficient)}

We first show the following bound for $\ell_2$ learning, which is slightly tighter than Theorem \ref{theorem:learn-upper-non}.
\begin{theorem} \label{theorem:learn-l2-upper}
To learn in $\ell_2$ distance with failure probability $\delta$, it suffices to run Algorithm \ref{alg:learner} while drawing the following number of samples:
 \[ m = \frac{1}{\delta} \frac{1}{\epsilon^2} . \]
\end{theorem}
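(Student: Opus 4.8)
The plan is to follow the same strategy as in the proof of Theorem~\ref{theorem:learn-upper-non}, specialized to $p=2$, where the key inequality $\E|X_i - \E X_i|^p \le 3\E X_i$ can be replaced by the exact identity $\E|X_i - \E X_i|^2 = \mathrm{Var}(X_i)$; this is precisely the case in which that proof technique is tightest, which is why the constant improves. First I would write $\hat A_i = X_i/m$, where $X_i$ is distributed Binomially with $m$ independent trials of success probability $A_i$, so that $\E X_i = m A_i$ and the cross terms vanish when expanding $(X_i/m - A_i)^2$. Then
\[
\E\|\hat A - A\|_2^2 = \sum_{i=1}^n \E\left(\frac{X_i}{m} - A_i\right)^2 = \frac{1}{m^2}\sum_{i=1}^n \mathrm{Var}(X_i) = \frac{1}{m}\sum_{i=1}^n A_i(1-A_i).
\]
Since $A_i(1-A_i) \le A_i$ for each $i$ and $\sum_{i=1}^n A_i = 1$, this yields $\E\|\hat A - A\|_2^2 \le \frac{1}{m}$.

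Next I would apply Markov's inequality to the nonnegative random variable $\|\hat A - A\|_2^2$ (not to $\|\hat A - A\|_2$ itself — working with the square is what makes the expectation a clean sum of per-coordinate variances):
\[
\Pr\left[\|\hat A - A\|_2 \ge \epsilon\right] = \Pr\left[\|\hat A - A\|_2^2 \ge \epsilon^2\right] \le \frac{\E\|\hat A - A\|_2^2}{\epsilon^2} \le \frac{1}{m\epsilon^2}.
\]
Substituting $m = \frac{1}{\delta}\frac{1}{\epsilon^2}$ makes the right-hand side exactly $\delta$, which is precisely the failure-probability guarantee required, completing the proof.

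There is essentially no obstacle here: every step is a routine computation, and the only mild point worth stating explicitly is the choice to bound $\|\hat A - A\|_2^2$ rather than $\|\hat A - A\|_2$ when invoking Markov. I note that this same bound also follows from Theorem~\ref{theorem:learn-upper-non} with $p=2$ (giving $m = \frac{3}{\delta}\frac{1}{\epsilon^2}$), but the direct variance computation above removes the slack in the factor of $3$. If one wanted the logarithmic-in-$\delta$ dependence of Theorem~\ref{theorem:learn-upper-logdelta}, one would additionally need a concentration statement for $\|\hat A - A\|_2$ around its expectation; for the bound stated here, Markov's inequality alone suffices.
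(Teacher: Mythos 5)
Your proposal is correct and matches the paper's proof: the paper likewise isolates the bound $\E\|\hat{A}-A\|_2^2 \le \frac{1}{m}$ via the per-coordinate Binomial variance computation (Lemma \ref{lemma:learn-upper-l2-mean}) and then applies Markov's inequality to the squared norm to get failure probability $\frac{1}{m\epsilon^2} = \delta$. No gaps; the approach is essentially identical.
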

Before proving it, let us separately show the key fact:
\begin{lemma} \label{lemma:learn-upper-l2-mean}
If we draw $m$ samples, then
 \[ \E\left[ \|A - \hat{A}\|_2^2 \right] \leq \frac{1}{m} . \]
\end{lemma}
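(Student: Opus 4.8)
The plan is a direct second-moment computation, coordinate by coordinate, exploiting that the empirical estimator is unbiased. Writing $\hat{A}_i = X_i/m$ where $X_i$ is the number of samples landing on coordinate $i$, I would first observe that $X_i$ is distributed as $\mathrm{Binomial}(m, A_i)$, so $\E X_i = m A_i$ and hence $\E \hat{A}_i = A_i$. By linearity of expectation across the $n$ coordinates,
\[
 \E\left[\|A-\hat{A}\|_2^2\right] = \sum_{i=1}^n \E\left[(A_i - \hat{A}_i)^2\right] = \sum_{i=1}^n \E\left[(\hat{A}_i - \E\hat{A}_i)^2\right] = \sum_{i=1}^n \mathrm{Var}(\hat{A}_i).
\]

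Next I would compute each variance: $\mathrm{Var}(\hat{A}_i) = \frac{1}{m^2}\mathrm{Var}(X_i) = \frac{1}{m^2} \cdot m A_i(1-A_i) = \frac{A_i(1-A_i)}{m}$, using the standard variance of a Binomial. Summing over $i$ gives
\[
 \E\left[\|A-\hat{A}\|_2^2\right] = \frac{1}{m}\sum_{i=1}^n A_i(1-A_i) = \frac{1}{m}\left(1 - \|A\|_2^2\right) \leq \frac{1}{m},
\]
since $\sum_i A_i = 1$ and $\|A\|_2^2 \geq 0$. This is exactly the claimed bound.

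There is essentially no obstacle here; the only mild point to be careful about is making explicit that the estimator is unbiased so that the mean-squared error equals the sum of per-coordinate variances, rather than carrying cross terms. (One could alternatively expand $(A_i - X_i/m)^2$ directly and use $\E X_i^2 = \mathrm{Var}(X_i) + (\E X_i)^2$, arriving at the same place.) It is worth remarking, for the later theorem, that the inequality is tight up to the $\|A\|_2^2$ slack, which is what makes the reduction to the $\ell_2$ learner the ``tightest'' instance of this technique.
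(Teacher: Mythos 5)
Your proposal is correct and matches the paper's proof essentially verbatim: both use that $\hat{A}_i = X_i/m$ is unbiased with $X_i \sim \mathrm{Binomial}(m,A_i)$, so $\E\|A-\hat{A}\|_2^2 = \frac{1}{m^2}\sum_i \mathrm{Var}(X_i) = \frac{1}{m}\sum_i A_i(1-A_i) \leq \frac{1}{m}$. No gaps; nothing further is needed.
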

\begin{proof}[of Lemma \ref{lemma:learn-upper-l2-mean}]
As in the proof of Theorem \ref{theorem:learn-upper-non}, letting $X_i$ be the number of samples of coordinate $i$:
\begin{align*}
 \E \|\hat{A}-A\|_2^2
  &= \frac{1}{m^2} \sum_{i=1}^n \E \left(X_i - \E X_i\right)^2  \\
  &= \frac{1}{m^2} \sum_{i=1}^n Var(X_i)  \\
  &= \frac{1}{m^2} \sum_{i=1}^n m A_i (1-A_i)  \\
  &\leq \frac{1}{m} \sum_{i=1}^n A_i  \\
  &= \frac{1}{m} .
\end{align*}
\end{proof}
\begin{proof}[of Theorem \ref{theorem:learn-l2-upper}]
Using Markov's Inequality and Lemma \ref{lemma:learn-upper-l2-mean},
\begin{align*}
 \Pr[\|\hat{A}-A\|_2 \geq \epsilon]
  &= \Pr[\|\hat{A}-A\|_2^2 \geq \epsilon^2]  \\
  &\leq \frac{\E\|\hat{A}-A\|_2^2}{\epsilon^2}  \\
  &\leq \frac{1}{m \epsilon^2}  \\
  &= \delta
\end{align*}
if $m = \frac{1}{\delta}\frac{1}{\epsilon^2}$.
\end{proof}

\theoremlearnupperusetwo*
\begin{proof}
For the case $p \geq 2$, we have (Lemma \ref{lemma:pnorm-ineq}) that $\|A-\hat{A}\|_p \leq \|A - \hat{A}\|_2$, so learning to within $\epsilon$ in $\ell_2$ distance implies learning for $\ell_p$ distance.

For $p \leq 2$: By Theorem \ref{theorem:learn-l2-upper}, if we run Algorithm \ref{alg:learner} while drawing $\frac{1}{\delta}\frac{1}{\alpha^2}$ samples, then with probability $1-\delta$, $\|\hat{A}-A\|_2 \leq \alpha$.

In this case, by the $\ell_p$ norm inequality of Lemma \ref{lemma:pnorm-ineq}, for $p \leq 2$,
\begin{align*}
 \|\hat{A}-A\|_p
  &\leq n^{\frac{1}{p}-\frac{1}{2}} \|\hat{A}-A\|_2  \\
  &=    \frac{\sqrt{n}}{n^{1/q}} \|\hat{A}-A\|_2  \\
  &\leq \frac{\sqrt{n}}{n^{1/q}} \alpha  \\
  &=    \epsilon
\end{align*}
if we set $\alpha = \frac{\epsilon n^{1/q}}{\sqrt{n}}$. Thus, we are guaranteed correctness with probability $1-\delta$ if we draw a number of samples equal to
 \[ \frac{1}{\delta} \frac{1}{\alpha^2} = \frac{1}{\delta} \frac{n}{\left(n^{1/q}\epsilon\right)^2} . \]
This says that the above number of samples is sufficient. However, in the large $n$ regime, we can do better: By the $\ell_p$ norm inequality of Lemma \ref{lemma:pnorm-leq-2}, using that $\|\hat{A}-A\|_1 \leq 2$,
\begin{align*}
 \|\hat{A}-A\|_p^q
  &\leq 2^{q-2} \|\hat{A} - A\|_2^2  \\
  &\leq \frac{2^q}{4} \alpha^2  \\
  &\leq \epsilon^q
\end{align*}
if we set $\alpha^2 = 4\frac{\epsilon^q}{2^q}$; but then we are guaranteed correctness with probability $1-\delta$ if we draw
 \[ m = \frac{1}{\delta}\frac{1}{\alpha^2} = \frac{1}{\delta} \frac{1}{4} \left(\frac{2}{\epsilon}\right)^q  \]
samples. This number of samples is also unconditionally sufficient; we find that the first is better (smaller) bound when $n \leq \left(\frac{2}{\epsilon}\right)^q$.
\end{proof}

A logarithmic dependence on the failure probability $\delta$ is possible, in two steps.\footnote{This idea is also folklore and not original to this paper.} First, if we draw enough samples, then the expected $\ell_p$ distance between $A$ and $\hat{A}$ (the empirical distribution) is less than $\epsilon/2$. Second, if we draw enough samples, then this $\ell_p$ distance is concentrated within $\epsilon/2$ of its expectation.
These two steps are formalized in the next two lemmas.
\begin{lemma} \label{lemma:learn-upper-lp-mean}
For $1 \leq p \leq 2$, if we draw $m$ samples, then
 \[ \E \|\hat{A}-A\|_p \leq \min\left\{ \sqrt{\frac{n}{n^{2/q} m}} ~,~ \frac{2}{2^{2/q} m^{1/q}} \right\} . \]
\end{lemma}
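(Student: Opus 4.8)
The plan is to reduce both bounds in the minimum to the already-established $\ell_2$ estimate of Lemma~\ref{lemma:learn-upper-l2-mean}, namely $\E\|\hat A-A\|_2^2 \le 1/m$, by invoking the two $p$-norm inequalities from the preliminaries and then applying Jensen's inequality to pull the expectation inside the appropriate root.

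For the first term, I would start from Lemma~\ref{lemma:pnorm-ineq} with $s=2$, which gives $\|\hat A-A\|_p \le n^{1/p-1/2}\|\hat A-A\|_2$ since $p\le 2$; here $n^{1/p-1/2} = n^{1/2-1/q} = \sqrt{n}/n^{1/q}$. Taking expectations and using concavity of $x\mapsto\sqrt x$ (Jensen, $\E\sqrt Y \le \sqrt{\E Y}$), $\E\|\hat A-A\|_p \le \frac{\sqrt n}{n^{1/q}}\,\E\|\hat A-A\|_2 \le \frac{\sqrt n}{n^{1/q}}\sqrt{\E\|\hat A-A\|_2^2} \le \frac{\sqrt n}{n^{1/q}}\sqrt{1/m} = \sqrt{\frac{n}{n^{2/q}m}}$, which is exactly the first term.

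For the second term, I would instead use the inequality $\|V\|_p^q \le \|V\|_1^{q-2}\|V\|_2^2$ of Lemma~\ref{lemma:pnorm-leq-2} (valid for $1<p\le 2$) applied to $V=\hat A-A$; since $\|\hat A-A\|_1 \le \|\hat A\|_1+\|A\|_1 = 2$ and $q-2\ge 0$, this yields $\|\hat A-A\|_p^q \le 2^{q-2}\|\hat A-A\|_2^2$, hence $\E\|\hat A-A\|_p^q \le 2^{q-2}/m$ by Lemma~\ref{lemma:learn-upper-l2-mean}. Because $q\ge 2$ for $p\le 2$, the map $x\mapsto x^{1/q}$ is concave, so Jensen gives $\E\|\hat A-A\|_p \le \big(\E\|\hat A-A\|_p^q\big)^{1/q} \le \big(2^{q-2}/m\big)^{1/q} = \frac{2^{1-2/q}}{m^{1/q}} = \frac{2}{2^{2/q}m^{1/q}}$, the second term; the minimum of the two then holds. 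For the boundary case $p=1$ (so $q=\infty$), the second term of the minimum is $2$, which holds trivially as $\|\hat A-A\|_1\le 2$, and the first term is the classical $\sqrt{n/m}$ bound produced by the argument above.

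The computation is entirely routine, and essentially all the content has been front-loaded into Lemmas~\ref{lemma:pnorm-ineq}, \ref{lemma:pnorm-leq-2}, and \ref{lemma:learn-upper-l2-mean}; the only points needing care are checking that each Jensen step uses a concave function on the relevant range (so the inequality points the right way) and the bookkeeping that $n^{1/p-1/2}=\sqrt n/n^{1/q}$ and $2^{1-2/q}=2/2^{2/q}$, so the final expressions match the statement verbatim.
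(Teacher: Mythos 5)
Your proof is correct and follows essentially the same route as the paper: both bounds are obtained from $\E\|\hat{A}-A\|_2^2 \leq 1/m$ (Lemma \ref{lemma:learn-upper-l2-mean}) via Lemma \ref{lemma:pnorm-ineq} and Lemma \ref{lemma:pnorm-leq-2} respectively, with Jensen's inequality applied at the appropriate points, exactly as in the paper's argument. Your explicit treatment of the $p=1$ (so $q=\infty$) boundary case is a minor addition the paper leaves implicit, but it changes nothing substantive.
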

\begin{proof}
Lemma \ref{lemma:learn-upper-l2-mean} stated that $\E \|\hat{A}-A\|_2^2 \leq \frac{1}{m}$. By Jensen's inequality, $\left(\E \|\hat{A}-A\|_2\right)^2 \leq \E \|\hat{A}-A\|_2^2$, so $\E\|\hat{A}-A\|_2 \leq \sqrt{\frac{1}{m}}$. By Lemma \ref{lemma:pnorm-ineq} (the $\ell_p$-norm inequality), this implies
 \[ \E \|\hat{A}-A\|_p \leq \sqrt{\frac{1}{m}} n^{\frac{1}{p}-\frac{1}{2}} , \]
which by some rearranging (using $\frac{1}{p} = 1 - \frac{1}{q}$) gives half of the lemma.
Now by Lemma \ref{lemma:pnorm-leq-2},
\begin{align*}
 \E \|\hat{A}-A\|_p^q
  &\leq 2^{q-2} \E \|\hat{A}-A\|_2^2  \\
  &\leq 2^{q-2} \frac{1}{m} ,
\end{align*}
implying by Jensen's inequality that $\E \|\hat{A}-A\|_p \leq 2^{\frac{q-2}{q}} / m^{1/q}$.
Rearranging gives the lemma.
\end{proof}
\begin{corollary} \label{corollary:learn-upper-lp-mean}
We have $\E \|\hat{A}-A\|_p \leq \frac{\epsilon}{2}$ if
 \[ m \geq \min\left\{\frac{4n}{\left(n^{1/q}\epsilon\right)^2} ~,~ \frac{1}{4}\left(\frac{4}{\epsilon}\right)^q \right\} . \]
\end{corollary}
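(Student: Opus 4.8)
The plan is to deduce the corollary directly from Lemma~\ref{lemma:learn-upper-lp-mean}, which already gives
\[
 \E\|\hat A - A\|_p \;\leq\; \min\left\{ \sqrt{\tfrac{n}{n^{2/q} m}} \;,\; \tfrac{2}{2^{2/q} m^{1/q}} \right\}.
\]
Since the hypothesis of the corollary only requires $m$ to be at least the \emph{minimum} of the two stated sample bounds, it suffices to show that each individual sample bound forces the corresponding term inside this minimum to fall below $\epsilon/2$. Then, whichever of the two sample bounds $m$ actually exceeds, $\E\|\hat A - A\|_p$ is bounded by that term and hence by $\epsilon/2$.

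Concretely, first I would invert the inequality $\sqrt{n/(n^{2/q}m)} \leq \epsilon/2$ in $m$: squaring and rearranging yields $m \geq 4n/(n^{1/q}\epsilon)^2$, which is exactly the first branch of the claimed bound. Next I would invert $2/(2^{2/q}m^{1/q}) \leq \epsilon/2$: this rearranges to $m^{1/q} \geq 4/(2^{2/q}\epsilon)$, i.e.
\[
 m \;\geq\; \left(\tfrac{4}{2^{2/q}\epsilon}\right)^{q} \;=\; \frac{4^{q}}{4\,\epsilon^{q}} \;=\; \tfrac14\left(\tfrac{4}{\epsilon}\right)^{q},
\]
which is the second branch. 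Combining the two: if $m \geq \min\{4n/(n^{1/q}\epsilon)^2,\ \tfrac14(4/\epsilon)^q\}$ then $m$ satisfies at least one of these two conditions, so Lemma~\ref{lemma:learn-upper-lp-mean} gives $\E\|\hat A - A\|_p \leq \epsilon/2$, as required.

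There is essentially no obstacle here: all the probabilistic content is contained in Lemma~\ref{lemma:learn-upper-lp-mean}, and the corollary is just the short algebra of solving the two bounding expressions for $m$. The only points that need a little care are keeping the exponents straight when raising to the $q$th power, and noting that in the degenerate case $q=\infty$ (so $n^{1/q}=1$) the second branch is never the active one, consistent with the paper's conventions on arithmetic with infinity.
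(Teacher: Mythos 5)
Your proposal is correct and matches the paper's (implicit) argument exactly: the corollary is stated without proof precisely because it is the straightforward inversion of the two branches of Lemma~\ref{lemma:learn-upper-lp-mean}, and your algebra for both branches, as well as the observation that $m$ exceeding the minimum of the two sample bounds means one branch of the lemma's bound is already at most $\epsilon/2$, is exactly what is intended.
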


\begin{lemma} \label{lemma:learn-upper-lp-concentration}
If we draw $m$ samples, then
 \[ \Pr\left[ \|\hat{A}-A\|_p \geq \E \|\hat{A}-A\|_p + \frac{\epsilon}{2} \right] \leq e^{-m\epsilon^2 / 2^{\frac{2}{p} + 1}} . \]
\end{lemma}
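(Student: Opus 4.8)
The plan is to apply McDiarmid's bounded-differences inequality to the function $f(s_1,\dots,s_m) = \|\hat{A}-A\|_p$, where $s_1,\dots,s_m$ are the i.i.d. samples and $\hat{A} = \hat{A}(s_1,\dots,s_m)$ is the empirical distribution they induce (so $\hat{A}_i = X_i/m$). Since the $s_j$ are independent, it suffices to control how much $f$ can change when a single coordinate $s_j$ is replaced.

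The key step is the bounded-differences estimate. Fix all samples but the $j$-th, and replace $s_j = a$ by $s_j' = b$. If $a = b$ nothing changes; otherwise the empirical distribution changes only in coordinates $a$ and $b$: $\hat{A}_a$ decreases by $\frac1m$ and $\hat{A}_b$ increases by $\frac1m$. Hence the difference vector $\hat{A} - \hat{A}'$ has at most two nonzero entries, each of absolute value $\frac1m$, so $\|\hat{A} - \hat{A}'\|_p \le \left(2\cdot\frac{1}{m^p}\right)^{1/p} = \frac{2^{1/p}}{m}$. By the triangle inequality for $\ell_p$,
\[
 \bigl| \|\hat{A}-A\|_p - \|\hat{A}'-A\|_p \bigr| \le \|\hat{A}-\hat{A}'\|_p \le \frac{2^{1/p}}{m} ,
\]
so $f$ satisfies the bounded-differences property with constants $c_j = 2^{1/p}/m$ for every $j$, giving $\sum_{j=1}^m c_j^2 = m\cdot \frac{2^{2/p}}{m^2} = \frac{2^{2/p}}{m}$.

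Then I would invoke McDiarmid's inequality in the form $\Pr[f - \E f \ge t] \le \exp\!\left(-2t^2 / \sum_j c_j^2\right)$ with $t = \epsilon/2$:
\[
 \Pr\!\left[\|\hat{A}-A\|_p \ge \E\|\hat{A}-A\|_p + \tfrac{\epsilon}{2}\right]
 \le \exp\!\left(-\frac{2(\epsilon/2)^2 m}{2^{2/p}}\right)
 = \exp\!\left(-\frac{m\epsilon^2}{2^{\frac{2}{p}+1}}\right),
\]
which is exactly the claimed bound. The only part requiring any care is the bounded-differences estimate above — specifically noticing that a single-sample change perturbs only two coordinates of $\hat{A}$, each by $1/m$, and then passing to $f$ via the reverse triangle inequality; once that is in hand, the conclusion is an immediate application of McDiarmid. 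No obstacle of real substance is anticipated; one should just be careful that the constant $2^{1/p}$ (not $2$) appears in $c_j$, since it is this that produces the $2^{2/p}$ in the denominator of the exponent.
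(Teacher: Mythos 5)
Your proposal is correct and follows essentially the same route as the paper: McDiarmid's inequality applied to $f = \|\hat{A}-A\|_p$, with the bounded-differences constant $2^{1/p}/m$ obtained by noting that changing one sample perturbs the empirical distribution in two coordinates by $1/m$ each and then using the (reverse) triangle inequality. The constants work out identically, yielding the stated bound $e^{-m\epsilon^2/2^{\frac{2}{p}+1}}$ at $t=\epsilon/2$.
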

\begin{proof}
We will simply apply McDiarmid's inequality.
Letting $Y_i$ denote the $i$th sample, we can let $f(Y_1,\dots,Y_m) = \|A - \hat{A}\|_p$.
McDiarmid's\footnote{An application of the Azuma-Hoeffding martingale inequality method, \emph{e.g.} Mitzenmacher and Upfal~\cite{mitzenmacher2005probability}, Section 12.5.} states that, if changing any $Y_i$ changes the value of $f$ by at most $c$, then
 \[ \Pr[ f(Y_1,\dots,Y_m) \geq \E f(Y_1,\dots,Y_m) + t] \leq \exp\left[\frac{-2t^2}{m c^2} \right] . \]

In our case, changing any $Y_i$ changes the value of $f$ by at most $\frac{2^{1/p}}{m}$, argued as follows. Let $D \in \mathbb{R}^n$ be a vector with two nonzero entries, one of them $\frac{1}{m}$ and the other $\frac{-1}{m}$. Changing one sample $Y_i$ changes the empirical distribution to $\hat{A} + D$ for some such $D$, so the new value of $f$ is $\|A - (\hat{A} + D)\|_p \in \|A - \hat{A}\|_p \pm \|D\|_p$ by the triangle inequality, and $\|D\|_p = \frac{2^{1/p}}{m}$.

McDiarmid's inequality then states that
\begin{align*}
 &\Pr[ f(Y_1,\dots,Y_m) \geq \E f(Y_1,\dots,Y_m) + t]  \\
 &\leq \exp\left[\frac{-2t^2 }{ m \left(2^{1/p}/m\right)^2} \right]  \\
 &= \exp\left[\frac{-mt^2 }{ 2^{\frac{2}{p}-1}} \right] ,
\end{align*}
and we plug in $t = \frac{\epsilon}{2}$.
\end{proof}
\begin{corollary} \label{corollary:learn-upper-lp-concentration}
We have $\Pr\left[ \|\hat{A}-A\|_p \geq \E \|\hat{A}-A\|_p + \frac{\epsilon}{2} \right] \leq \delta$ if
 \[ m \geq \frac{2^{\frac{2}{p}+1}\ln(1/\delta)}{\epsilon^2} . \]
\end{corollary}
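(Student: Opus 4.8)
The plan is to read this off directly from Lemma~\ref{lemma:learn-upper-lp-concentration}, which already supplies a concentration tail of the form
\[ \Pr\!\left[\,\|\hat A - A\|_p \ge \E\|\hat A - A\|_p + \tfrac{\epsilon}{2}\,\right] \le \exp\!\left(-\,\frac{m\epsilon^2}{2^{\frac{2}{p}+1}}\right). \]
So the only thing left to do is to choose $m$ large enough that the right-hand side is driven down to $\delta$.

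Concretely, I would impose $\exp\!\left(-m\epsilon^2/2^{\frac{2}{p}+1}\right) \le \delta$. Since $0 < \delta < 1$ we have $\ln(1/\delta) > 0$, so taking logarithms of both sides (which reverses the inequality) gives $m\epsilon^2/2^{\frac{2}{p}+1} \ge \ln(1/\delta)$, that is,
\[ m \ge \frac{2^{\frac{2}{p}+1}\ln(1/\delta)}{\epsilon^2}. \]
Because $x \mapsto e^{-x}$ is decreasing, for any $m$ at least this large we indeed get $\exp\!\left(-m\epsilon^2/2^{\frac{2}{p}+1}\right) \le e^{-\ln(1/\delta)} = \delta$, which combined with Lemma~\ref{lemma:learn-upper-lp-concentration} yields the corollary. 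This is the same elementary manipulation used to pass from Lemma~\ref{lemma:learn-upper-lp-mean} to Corollary~\ref{corollary:learn-upper-lp-mean}, merely applied to an exponential tail rather than a polynomial one.

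There is essentially no obstacle here: all of the real work is in Lemma~\ref{lemma:learn-upper-lp-concentration}, whose proof via McDiarmid's inequality handles the genuine content by bounding the bounded-differences constant of $\|A-\hat A\|_p$ by $2^{1/p}/m$. The only points requiring any care are getting the direction of the inequality right when exponentiating/taking logs (legitimate because $\delta \in (0,1)$) and the trivial monotonicity of $e^{-x}$; both are immediate.
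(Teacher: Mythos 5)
Your proposal is correct and takes essentially the same route as the paper, which treats the corollary as an immediate consequence of Lemma~\ref{lemma:learn-upper-lp-concentration}: choose $m$ so that the exponential tail $e^{-m\epsilon^2/2^{\frac{2}{p}+1}}$ is at most $\delta$. (One cosmetic quibble: taking logarithms does not itself reverse the inequality---the sign flip comes from multiplying by $-1$---but your final chain of inequalities is right.)
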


\begin{theorem} \label{theorem:learn-upper-l2-logdelta}
For learning in $\ell_p$ distance for $p \geq 2$ with failure probability $\delta \leq \frac{1}{e}$, it suffices to run Algorithm \ref{alg:learner} while drawing the following number of samples:
 \[ m = \frac{4 \ln(1/\delta)}{\epsilon^2} . \]
\end{theorem}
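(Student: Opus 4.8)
The plan is to reduce to the $\ell_2$ case and then combine an expectation bound with a concentration bound — essentially the argument behind Theorem~\ref{theorem:learn-upper-logdelta}, specialized to $p=2$. By Lemma~\ref{lemma:pnorm-ineq}, for every vector $V$ and every $p\geq 2$ we have $\|V\|_p \leq \|V\|_2$, so it suffices to guarantee $\|\hat A-A\|_2 \leq \epsilon$ with probability at least $1-\delta$.

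I would split the failure event $\|\hat A - A\|_2 > \epsilon$ into two parts. First, by Lemma~\ref{lemma:learn-upper-l2-mean} and Jensen's inequality, $\E\|\hat A - A\|_2 \leq \sqrt{1/m}$, which is at most $\epsilon/2$ once $m \geq 4/\epsilon^2$. Second, for the deviation above the mean I would apply McDiarmid's inequality to $f(Y_1,\dots,Y_m) = \|\hat A - A\|_2$: changing one sample moves $\hat A$ by a vector $D$ with two nonzero entries $\pm 1/m$, so $\|D\|_2 = \sqrt 2/m$ and $f$ changes by at most $\sqrt 2 /m$ by the triangle inequality; hence
\[ \Pr\!\left[\, \|\hat A - A\|_2 \geq \E\|\hat A - A\|_2 + \tfrac{\epsilon}{2} \,\right] \leq \exp\!\left( \frac{-2(\epsilon/2)^2}{m(\sqrt 2/m)^2} \right) = \exp\!\left( \frac{-m\epsilon^2}{4} \right), \]
which is at most $\delta$ whenever $m \geq 4\ln(1/\delta)/\epsilon^2$. (Alternatively, one can invoke Lemma~\ref{lemma:learn-upper-lp-concentration} and Corollary~\ref{corollary:learn-upper-lp-concentration} directly at $p=2$, where the constant $2^{2/p+1}$ becomes $4$.)

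Taking $m$ to be the larger of the two requirements and union-bounding over the two failure events then gives $\|\hat A - A\|_p \leq \|\hat A - A\|_2 \leq \epsilon$ with probability $1-\delta$. The last step is to observe that the hypothesis $\delta \leq 1/e$ forces $\ln(1/\delta)\geq 1$, so $4\ln(1/\delta)/\epsilon^2 \geq 4/\epsilon^2$ and the maximum collapses to the single stated bound $m = 4\ln(1/\delta)/\epsilon^2$. I do not expect a genuine obstacle here; the only points needing care are the bounded-difference constant in the McDiarmid step (getting $\|D\|_2 = \sqrt 2/m$ and tracking the resulting factor $4$) and checking that the $\delta \leq 1/e$ assumption is exactly what lets the two sample requirements be written as a single clean expression.
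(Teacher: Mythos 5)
Your proposal is correct and follows essentially the same route as the paper: reduce to $\ell_2$ via $\|\cdot\|_p \leq \|\cdot\|_2$ for $p \geq 2$, bound $\E\|\hat A - A\|_2 \leq \epsilon/2$ using Lemma \ref{lemma:learn-upper-l2-mean}, and control the deviation with the McDiarmid bound of Lemma \ref{lemma:learn-upper-lp-concentration} (constant $2^{2/p+1}=4$ at $p=2$), yielding $m \geq \max\{4/\epsilon^2,\, 4\ln(1/\delta)/\epsilon^2\}$. Your explicit observation that $\delta \leq 1/e$ collapses the maximum to the single stated bound is a detail the paper leaves implicit, but the argument is the same.
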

\begin{proof}
First, note that it suffices to prove the theorem for $\ell_2$ distance, because for $p \geq 2$, $\|\hat{A}-A\|_p \leq \|\hat{A}-A\|_2$.
Now, for $\ell_2$ distance, it suffices that $\E\|\hat{A}-A\|_2 \leq \frac{\epsilon}{2}$ and that, with probability $1-\delta$, $\|\hat{A}-A\|_2$ exceeds its expectation by at most $\frac{\epsilon}{2}$. Therefore, Corollaries \ref{corollary:learn-upper-lp-mean} and \ref{corollary:learn-upper-lp-concentration} state that it suffices to have
 \[ m \geq \max\left\{ \frac{4}{\epsilon^2} ~,~ \frac{4\ln(1/\delta)}{\epsilon^2} \right\} . \]
\end{proof}

\theoremlearnupperlogdelta*
\begin{proof}
It suffices that $\E\|\hat{A}-A\|_p \leq \frac{\epsilon}{2}$ and $\|\hat{A}-A\|_p$ exceeds its expectation by at most $\frac{\epsilon}{2}$.
Thus, the bounds follow directly from Corollaries \ref{corollary:learn-upper-lp-mean} and \ref{corollary:learn-upper-lp-concentration}.
\end{proof}

\subsection{Lower Bounds (necessary)} \label{section:learn-lower}

The lower bounds as stated below are proven in this section, but can also be deduced from folklore as follows.

\theoremlearnlower*

\begin{proof}
For $p \geq 2$, we can deduce this bound from the fact that distinguishing a $2\epsilon$-biased coin from uniform requires $\Omega{\frac{1}{\epsilon^2}}$ samples. This reduction is proven formally in Theorem \ref{theorem:learning-lower-from-coin}.

In Theorem \ref{theorem:learning-lower-our-bounds}, we prove the remaining bounds in this theorem. However, bounds at least this good can apparently be deduced from folklore as follows. It is ``known'' that learning in $\ell_1$ distance requires $\bigOmega{\frac{n}{\epsilon^2}}$ samples. If we interpret this statement to hold for every fixed $\delta$ (the author is unsure if this is the correct interpretation), then we get bounds that match the upper bounds up to constant factors for every fixed $p,\delta$: By Lemma \ref{lemma:pnorm-ineq} an $\ell_p$ learner to within distance $\epsilon$ is an $\ell_1$ learner to within distance $\epsilon n^{1/q}$. $\ell_p$ learning therefore requires $\bigOmega{\frac{n}{\left(\epsilon n^{1/q}\right)^2}}$ samples. Now, for $1 < p < \infty$, if $n \geq \frac{1}{\epsilon^q}$, note that learning on support size $n$ is at least as hard as learning on support size $\hat{n} < n$, so by setting $\hat{n}$ to be the optimal $\frac{1}{\epsilon^q}$ in the previous bound, we get the lower bound $\bigOmega{\hat{n}} = \bigOmega{\frac{1}{\epsilon^q}}$.

Regardless of the folklore fact, we prove the stated lower bounds for these cases ($1 \leq p \leq 2$) in Theorem \ref{theorem:learning-lower-our-bounds}.
\end{proof}
In the small $n$ regime, we will only show that the upper bound is tight as $\delta \to 0$. It is a problem in progress to improve the following approach to give a tighter matching bound. \\

Recall that the general approach is to first construct a ``large'' set of distributions $S$, each of pairwise distance at least $2\epsilon$. Then we show a lower bound on the probability of identifying a member of $S$ when it is chosen uniformly and samples are drawn from it.

\begin{lemma} \label{lemma:learning-size-S} For any $p \in [1,\infty]$, for all $\hat{n} \in \mathbb{N}$ and $\epsilon > 0$, there is a set $S$ of probability distributions on $\{1,\dots,\hat{n}\}$ of size at least
 \[ |S| \geq \frac{\Gamma\left(1+\frac{\hat{n}-1}{p}\right)}{(\hat{n}-1)! \left(4\epsilon \Gamma\left(1+\frac{1}{p}\right)\right)^{\hat{n}-1}}  \]
with pairwise $\ell_p$ distance greater than $2\epsilon$, \emph{i.e.} $\|A-B\|_p > 2\epsilon$ for all pairs $A\neq B$ in $S$.
\end{lemma}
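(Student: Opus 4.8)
The plan is a volumetric sphere‑packing argument (the Gilbert–Varshamov style bound alluded to in the text), carried out in the $(\hat n-1)$‑dimensional parametrization of the simplex. First I would reduce to a packing problem in $\mathbb{R}^{\hat n-1}$: identify a probability distribution $A$ on $\{1,\dots,\hat n\}$ with its vector of first $\hat n-1$ coordinates $(A_1,\dots,A_{\hat n-1})$, which ranges over the solid corner simplex $T=\{x\in\mathbb{R}^{\hat n-1}: x_i\ge 0,\ \sum_i x_i\le 1\}$, the last coordinate being recovered as $A_{\hat n}=1-\sum_{i<\hat n}x_i\ge 0$. Dropping a coordinate only decreases an $\ell_p$ norm, so if two points of $T$ are at $\ell_p$ distance greater than $2\epsilon$ in $\mathbb{R}^{\hat n-1}$, the corresponding full distributions are at $\ell_p$ distance greater than $2\epsilon$ in $\mathbb{R}^{\hat n}$. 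Hence it suffices to pack $T$ with points that are pairwise more than $2\epsilon$ apart in the $\mathbb{R}^{\hat n-1}$ $\ell_p$ metric.

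Next I would take $S\subseteq T$ to be a set that is \emph{maximal} with respect to the property that all pairwise $\ell_p$ distances exceed $2\epsilon$. By maximality, every point $x\in T$ lies within (closed) $\ell_p$ distance $2\epsilon$ of some point of $S$ — otherwise $S\cup\{x\}$ would still satisfy the property. Thus the closed $\ell_p$‑balls of radius $2\epsilon$ centered at the points of $S$ cover $T$, which gives the volume inequality
\[ \mathrm{Vol}_{\hat n-1}(T) \;\le\; |S|\cdot V_p(2\epsilon), \]
where $V_p(r)$ denotes the $(\hat n-1)$‑dimensional volume of an $\ell_p$‑ball of radius $r$. Balls poking outside $T$ do no harm here, since we only need $T$ to be \emph{covered}, not tiled.

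Finally I would substitute the two standard volume formulas. The corner simplex has $\mathrm{Vol}_{\hat n-1}(T)=\frac{1}{(\hat n-1)!}$, and the unit $\ell_p$‑ball in $\mathbb{R}^d$ has volume $\bigl(2\Gamma(1+\tfrac1p)\bigr)^d/\Gamma(1+\tfrac dp)$, so with $d=\hat n-1$ and radius $2\epsilon$,
\[ V_p(2\epsilon) = (2\epsilon)^{\hat n-1}\,\frac{(2\Gamma(1+1/p))^{\hat n-1}}{\Gamma(1 + (\hat n-1)/p)} = \frac{(4\epsilon\,\Gamma(1+1/p))^{\hat n-1}}{\Gamma(1 + (\hat n-1)/p)}. \]
Rearranging the covering inequality yields exactly the claimed lower bound on $|S|$. (For $p=\infty$ the same argument applies with $V_\infty(2\epsilon)=(4\epsilon)^{\hat n-1}$, which matches the formula under the convention $\Gamma(1)=1$.)

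I expect the only genuinely delicate points to be: (i) the reduction step — checking that a set spread out in the dropped‑coordinate parametrization corresponds to honest probability distributions on $\{1,\dots,\hat n\}$ that are spread out in the full $\ell_p$ metric — and (ii) invoking the $\ell_p$‑ball volume formula correctly and confirming it degenerates properly at $p=1$ and $p=\infty$. Everything else is routine bookkeeping with Gamma functions.
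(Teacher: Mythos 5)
Your proposal is correct and follows essentially the same sphere-packing/covering argument as the paper: a maximal $2\epsilon$-separated set, the volume comparison $\mathrm{Vol}(\text{simplex}) \le |S|\cdot\mathrm{Vol}(\ell_p\text{ ball of radius }2\epsilon)$, and the standard $\ell_p$-ball and simplex volume formulas in dimension $\hat n-1$. Your explicit drop-the-last-coordinate parametrization (using that projection does not increase $\ell_p$ distance) is in fact a slightly cleaner justification of the step the paper phrases informally as ``viewing the $\hat n$-dimensional simplex as a set in $(\hat n-1)$-dimensional space.''
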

\begin{proof} By a sphere packing argument as with, \emph{e.g.}, the Gilbert-Varshamov bound in the field of error-correcting codes.

Each probability distribution is a point in the $\hat{n}$-dimensional simplex, which is the set $\{A \in \mathbb{R}^{\hat{n}} : \sum_i A_i = 1, A_i \geq 0 \forall i\}$.
Now, suppose we have a ``maximal packing'' of distributions that are at least $2\epsilon$ apart; that is, we have a set $S$ of points in this simplex such that:
\begin{enumerate}
 \item For all pairs $A,B \in S$, $\|A - B\|_p > 2\epsilon$, and
 \item Adding any point in the simplex to $S$ violates this condition.
\end{enumerate}
Then for any point $x$ in the simplex, there exists at least one $A \in S$ with $\|A-x\|_p \leq 2\epsilon$. (Otherwise, we could add $x$ to $S$ without violating the condition.) In other words, every point in the simplex is contained in an $\ell_p$ ball of radius $2\epsilon$ around some member of $S$, or
 \[ \text{$n$-dimensional simplex} \subseteq \bigcup_{A \in S} \{y : \|A-y\|_p \leq 2\epsilon\} \]
which implies that
 \[ \text{Vol($n$-dimensional simplex)} \leq |S| \text{Vol($\ell_p$ ball of radius $2\epsilon$)} . \]
The volume of an $\ell_p$ ball of radius $r$ in $k$-dimensional space is $(2r)^k \Gamma\left(1+\frac{1}{p}\right)^k / \Gamma\left(1+\frac{k}{p}\right)$, where the Gamma function $\Gamma$ is the generalization of the factorial function, with $\Gamma(x) = (x-1)!$ for positive integers $x$.

Viewing the $\hat{n}$-dimensional simplex as a set in $\hat{n}-1$-dimensional space, it has volume $\frac{1}{(\hat{n}-1)!}$. Meanwhile, the $\ell_p$ balls in the simplex also lie in $\hat{n}-1$-dimensional space. So we obtain the inequality
\begin{align*}
 |S| &\geq \frac{\text{Vol($\hat{n}$-dimensional simplex)}}{\text{Vol($\ell_p$ ball of radius $2\epsilon$)}}  \\
     &= \frac{1/(\hat{n}-1)!}{(4\epsilon)^{\hat{n}-1} \Gamma\left(1+\frac{1}{p}\right)^{\hat{n}-1}/\Gamma\left(1+\frac{\hat{n}-1}{p}\right)}  \\
     &= \frac{\Gamma\left(1+\frac{\hat{n}-1}{p}\right)}{(\hat{n}-1)! \left(4\epsilon \Gamma\left(1+\frac{1}{p}\right)\right)^{\hat{n}-1}} .
\end{align*}
\end{proof}

\begin{corollary} \label{corollary:learning-size-S} There exists a set $S$ of distributions with pairwise distance greater than $2\epsilon$ of size
 \[ |S| \geq \begin{cases} \frac{1}{5\epsilon}  & \text{any $p$, $\hat{n}=2$}  \\
                           e^{\frac{p}{12}} \frac{1}{\sqrt{p}}\left(\frac{1}{4(\hat{n}-1)^{1/q}\epsilon}\right)^{\hat{n}-1}  & \text{$p < \infty$, any $\hat{n}$}  .  \end{cases}  \]
\end{corollary}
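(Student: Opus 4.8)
The plan is to read both bounds directly off Lemma~\ref{lemma:learning-size-S} by specialization and Stirling estimation. Throughout write $k := \hat n - 1$, so the lemma states $|S| \ge \dfrac{\Gamma(1+k/p)}{k!\,\bigl(4\epsilon\,\Gamma(1+1/p)\bigr)^{k}}$. For the first case, $\hat n = 2$ means $k = 1$, and the two occurrences of $\Gamma(1+1/p)$ cancel outright, giving $|S| \ge \tfrac{1}{4\epsilon}$; since $|S|$ is an integer this really is $\lceil 1/(4\epsilon)\rceil$, which is at least $1/(5\epsilon)$ over the range of $\epsilon$ in which the bound is nontrivial, so the first case is immediate.

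For the second case I would aim to prove $\dfrac{\Gamma(1+k/p)}{k!\,\Gamma(1+1/p)^{k}} \ge c_p \cdot \dfrac{1}{\sqrt p}\, k^{-k/q}$ with $c_p$ the claimed prefactor; multiplying by $(4\epsilon)^{-k}$ and rewriting $k^{-k/q}(4\epsilon)^{-k} = \bigl(4k^{1/q}\epsilon\bigr)^{-k}$ then yields exactly the asserted inequality. The steps, in order: (i) bound the denominator factor $\Gamma(1+1/p) \le 1$, valid because $1+1/p \in (1,2]$ and $\Gamma \le 1$ on $[1,2]$ — this is the correct move because Stirling is inaccurate at the small argument $1/p$ and expanding it there would inject a spurious $e^{\Theta(kp)}$ into the denominator, whereas $\Gamma(1+1/p)\le 1$ is tight at $p=1$; (ii) apply the explicit Stirling form $\Gamma(1+x) = \sqrt{2\pi x}\,(x/e)^{x} e^{\mu(x)}$ with $0 < \mu(x) < \tfrac{1}{12x}$ for $x>0$, using it as a lower bound on $\Gamma(1+k/p)$ (discard $e^{\mu(k/p)} \ge 1$) and as an upper bound on $k! = \Gamma(1+k)$ (bound $e^{\mu(k)} \le e^{1/12}$); (iii) simplify, using $1 - \tfrac1p = \tfrac1q$, so that the $\sqrt{2\pi}$ factors leave a $1/\sqrt p$, the powers of $k$ collapse to $k^{-k/q}$, and the residue is a factor $\bigl(e^{1/q}p^{-1/p}\bigr)^{k}$ times $e^{-1/12}$; (iv) dispose of $\bigl(e^{1/q}p^{-1/p}\bigr)^{k} = \exp\!\bigl(k(1 - \tfrac{1+\ln p}{p})\bigr)$, whose exponent is nonnegative for $p \ge 1$ (it vanishes at $p=1$ and has derivative $\tfrac{\ln p}{p^{2}} \ge 0$, i.e. $p - 1 \ge \ln p$), so this factor is at least $1$; (v) collect the surviving constant.

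The main obstacle is the constant bookkeeping in steps (iii)--(v): the combinatorial skeleton of the estimate is essentially forced — the Gamma terms must collapse to $k^{-k/q}$ and the Gaussian normalizations to $1/\sqrt p$ — so the real work is checking the elementary inequalities cleanly ($\Gamma \le 1$ on $[1,2]$; $0 < \mu(x) < \tfrac1{12x}$; $p-1 \ge \ln p$) and verifying that the multiplicative constant that actually survives is the one claimed, and, in the $\hat n = 2$ case, that rounding $|S|$ down to an integer costs no more than the stated $\tfrac45$ slack. A secondary point worth stating is that one could instead keep $\Gamma(1+1/p)$ symbolic and Stirling-expand all three Gamma factors, but that route makes the error-term accounting strictly worse, so the bound $\Gamma(1+1/p)\le 1$ in step (i) is the one genuinely useful simplification.
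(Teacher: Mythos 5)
Your route is the same as the paper's: specialize Lemma~\ref{lemma:learning-size-S}, handle $\hat{n}=2$ directly, and Stirling-estimate the Gamma ratio for general $\hat{n}$. The individual steps are sound, and in two places you are actually cleaner than the paper: for $\hat{n}=2$ the two $\Gamma(1+1/p)$ factors cancel and give $1/(4\epsilon)$ exactly (the paper instead lower-bounds the numerator by $\Gamma$'s minimum $0.8856\ldots$ and upper-bounds the denominator's $\Gamma(1+1/p)$ by $1$, which is why it states $1/(5\epsilon)$); note your integer-rounding remark is unnecessary, since $1/(4\epsilon)\geq 1/(5\epsilon)$ holds outright. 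Your step (i) plus step (iv) is also equivalent to the paper's closing claim that $C_p=4\Gamma(1+1/p)p^{1/p}e^{-1/q}\leq 4$: you split it as $\Gamma(1+1/p)\leq 1$ and $p^{1/p}e^{-1/q}\leq 1$ (the latter being $p-1\geq\ln p$), which is exactly how that maximization is justified.

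The one genuine discrepancy is the prefactor, and it cannot be fixed by more careful bookkeeping within your steps (ii)--(v): discarding $e^{\mu(k/p)}\geq 1$ in the numerator and paying $e^{\mu(k)}\leq e^{1/12}$ in the denominator leaves you with $e^{-1/12}\frac{1}{\sqrt{p}}\bigl(4(\hat{n}-1)^{1/q}\epsilon\bigr)^{-(\hat{n}-1)}$, not the stated $e^{p/12}\frac{1}{\sqrt{p}}\bigl(\cdot\bigr)^{-(\hat{n}-1)}$; the best legitimate credit from the numerator's Binet term is $e^{\mu(k/p)}\geq e^{p/(12k+p)}$, which is strictly below $e^{p/12}$ for every $k\geq 1$. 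In fact the stated $e^{p/12}$ cannot follow from the lemma at all: at $\hat{n}=2$ the lemma's bound is exactly $1/(4\epsilon)$, which is smaller than the claimed $\frac{e^{p/12}}{\sqrt{p}}\cdot\frac{1}{4\epsilon}$ once $e^{p/12}>\sqrt{p}$, and for large $p$ the claim even exceeds the true packing number of the $1$-simplex. So the paper's $e^{p/12}$ is an artifact of misapplied Stirling corrections, and your weaker constant $e^{-1/12}$ is the honest output of this argument. This costs nothing downstream, since Corollary~\ref{corollary:learning-size-S} is only invoked in Theorem~\ref{theorem:learning-lower-our-bounds} with all dependence on $p$ absorbed into the $\Omega(\cdot)$; but as a proof of the corollary \emph{as stated}, your step (v) should either record $e^{-1/12}$ in place of $e^{p/12}$ or explicitly flag that the stated prefactor is not attainable.
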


\begin{proof}
Picking $\hat{n}=2$, we have $\Gamma\left(1+\frac{\hat{n}-1}{p}\right) \geq 0.8856\dots$, which is the minimum of the Gamma function; and $\Gamma\left(1+\frac{1}{p}\right) \leq 1$ for $p \in [1,\infty]$, so (since $0.8856\dots/4 \geq 1/5$)
 \[ |S| \geq \frac{1}{5\epsilon} . \]
Otherwise, and assuming $p < \infty$, we apply Stirling's approximation, $\left(\frac{k}{e}\right)^k\sqrt{2\pi k} \leq \Gamma\left(1+k\right) \leq e^{\frac{1}{12k}}\left(\frac{k}{e}\right)^k\sqrt{2\pi k}$, to both the numerator and denominator. We get
\begin{align*}
 |S| &\geq e^{\frac{p}{12}} \frac{\sqrt{2\pi\frac{\hat{n}-1}{p}}\left(\frac{\hat{n}-1}{pe}\right)^{\frac{\hat{n}-1}{p}}}{\sqrt{2\pi(\hat{n}-1)}\left(\frac{\hat{n}-1}{e}\right)^{\hat{n}-1}\left(4\Gamma\left(1+\frac{1}{p}\right) \epsilon\right)^{\hat{n}-1}}  \\
     &= e^{\frac{p}{12}} \frac{1}{\sqrt{p}} \left(\left(\frac{\hat{n}-1}{e}\right)^{\frac{1}{p}-1}\frac{1}{p^{\frac{1}{p}}}\frac{1}{4\Gamma\left(1+\frac{1}{p}\right) \epsilon}\right)^{\hat{n}-1}  \\
     &= e^{\frac{p}{12}} \frac{1}{\sqrt{p}} \left(\frac{1}{(\hat{n}-1)^{1/q} C_p \epsilon}\right)^{\hat{n}-1}
\end{align*}
where $C_p = 4\Gamma\left(1+\frac{1}{p}\right)p^{\frac{1}{p}}/e^{1/q}$, which (by maximizing over $p$) is at most $4$.
\end{proof}

The next step is to bound the entropy of the input samples.
\begin{lemma} \label{lemma:learning-entropy}
For any distribution $A$ on support size $\hat{n}$, the entropy of $\vec{X}$, the result of $m$ i.i.d. samples from $A$, is
  \[ H(\vec{X}) \leq \frac{\hat{n}-1}{2}\log\left(2\pi e \frac{m}{\hat{n}}\right) + \bigO{\frac{\hat{n}}{m}} .  \]
\end{lemma}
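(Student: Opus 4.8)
The plan is to view $\vec{X}=(X_1,\dots,X_{\hat n})$ as the vector of coordinate counts, so it is multinomially distributed with parameters $m$ and $A$. Because the counts satisfy the linear constraint $\sum_i X_i = m$, the vector $\vec X$ is a deterministic function of its first $\hat n-1$ entries, hence $H(\vec X)=H(X_1,\dots,X_{\hat n-1})$, and by subadditivity of Shannon entropy $H(\vec X)\le \sum_{i=1}^{\hat n-1}H(X_i)$. Each marginal $X_i$ is a Binomial$(m,A_i)$ random variable with variance $\mathrm{Var}(X_i)=mA_i(1-A_i)$, so it remains to bound the entropy of an integer-valued random variable by a function of its variance and then aggregate.

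The per-coordinate bound I would use is: for any $\mathbb{Z}$-valued random variable $X$ with variance $\sigma^2$, $H(X)\le \tfrac12\log\!\big(2\pi e(\sigma^2+\tfrac1{12})\big)$. This follows from the standard ``dithering'' identity $H(X)=h(X+U)$, where $U\sim\mathrm{Unif}[0,1]$ is independent of $X$ and $h(\cdot)$ is differential entropy (the density of $X+U$ at $y$ equals $\Pr[X=\lfloor y\rfloor]$, so its differential entropy is exactly the discrete entropy of $X$), combined with the Gaussian maximum-entropy inequality $h(X+U)\le \tfrac12\log\!\big(2\pi e\,\mathrm{Var}(X+U)\big)$ and $\mathrm{Var}(X+U)=\sigma^2+\tfrac1{12}$. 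Applying this to each $X_i$ and invoking concavity of $\log$ (Jensen) over the $\hat n-1$ summands yields
\[ H(\vec X)\;\le\;\frac{\hat n-1}{2}\,\log\!\left(\frac{2\pi e}{\hat n-1}\sum_{i=1}^{\hat n-1}\Big(mA_i(1-A_i)+\tfrac1{12}\Big)\right). \]

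To finish, I would bound the inner average using $\sum_{i=1}^{\hat n-1}A_i(1-A_i)\le \sum_{i=1}^{\hat n}A_i(1-A_i)=1-\|A\|_2^2\le 1-\tfrac1{\hat n}=\tfrac{\hat n-1}{\hat n}$ (the inequality $\|A\|_2^2\ge 1/\hat n$ being Cauchy–Schwarz), so the bracketed quantity is at most $2\pi e\big(\tfrac m{\hat n}+\tfrac1{12}\big)$, giving $H(\vec X)\le \tfrac{\hat n-1}{2}\log\!\big(2\pi e(\tfrac m{\hat n}+\tfrac1{12})\big)$. Splitting off the remainder via $\log(\tfrac m{\hat n}+\tfrac1{12})=\log\tfrac m{\hat n}+\log\!\big(1+\tfrac{\hat n}{12m}\big)\le \log\tfrac m{\hat n}+\tfrac{\hat n}{12m}$ then produces the claimed form $\tfrac{\hat n-1}{2}\log\!\big(2\pi e\,\tfrac m{\hat n}\big)$ plus a lower-order term.

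The only non-routine ingredient is the one-dimensional entropy–variance inequality $H(X_i)\le\tfrac12\log\!\big(2\pi e(\mathrm{Var}(X_i)+\tfrac1{12})\big)$: one must make the dithering argument precise and check it degrades gracefully when $A_i$ is tiny — the additive $\tfrac1{12}$ is exactly what keeps the bound valid (and positive, since $\tfrac12\log(2\pi e/12)=\tfrac12\log(\pi e/6)>0$) as $\mathrm{Var}(X_i)\to 0$, whereas a naive ``$\tfrac12\log(2\pi e\,\mathrm{Var}(X_i))$'' would fail. I would also take care with the bookkeeping of the error term: the expansion above literally gives a remainder $\tfrac{\hat n-1}{2}\log(1+\tfrac{\hat n}{12m})$ of order $\bigO{\hat n^2/m}$, which is consistent with (though slightly weaker than) a nominal $\bigO{\hat n/m}$ and is in any case negligible against the leading $\Theta(\hat n\log(m/\hat n))$ term in the regime where the lemma is applied; this point is worth stating explicitly rather than glossing over.
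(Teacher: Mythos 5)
Your proposal is correct and follows the same skeleton as the paper's proof: both reduce $H(\vec X)$ to $\sum_{i=1}^{\hat n-1} H(X_i)$ by noting $X_{\hat n}$ is determined and applying the chain rule/subadditivity, then bound each Binomial marginal's entropy by a Gaussian-type $\tfrac12\log(2\pi e\,\cdot\,\mathrm{variance})$ expression, then use concavity to show the uniform distribution is the worst case. The difference is in the one-dimensional ingredient. The paper plugs in the asymptotic Binomial entropy formula $H(X_i)\le\tfrac12\log\left(2\pi e\, m A_i(1-A_i)\right)+\bigO{\frac1m}$, whose $\bigO{\frac1m}$ error is not uniform in $A_i$ (for $A_i$ as small as $1/m^2$ the leading term is negative while $H(X_i)\ge 0$), and then handles the "which coordinate is dropped" subtlety by an ad hoc argument. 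Your dithering bound $H(X_i)\le\tfrac12\log\left(2\pi e\left(mA_i(1-A_i)+\tfrac1{12}\right)\right)$ is rigorous for every $A_i$, and your Jensen step together with $\sum_i A_i(1-A_i)\le 1-\tfrac1{\hat n}$ cleanly dispenses with that subtlety — so on these points your argument is actually tighter than the paper's. The price, as you correctly flag, is that your remainder is $\frac{\hat n-1}{2}\log\left(1+\frac{\hat n}{12m}\right)=\bigO{\frac{\hat n^2}{m}}$, which is genuinely weaker than the stated $\bigO{\frac{\hat n}{m}}$ (not merely "consistent with" it) unless $\hat n=\bigO{1}$; so strictly you prove a slightly weaker lemma. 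This is harmless downstream: in Theorem \ref{theorem:learning-lower-with-S} the error term is divided by $\frac{\hat n-1}{2}$, turning your remainder into $\bigO{\frac{\hat n}{m}}$ inside the rearranged inequality, and in the regime where the final bounds $m=\bigOmega{\hat n\,|S|^{2(1-\delta)/(\hat n-1)}}$ bite (so $m=\bigOmega{\hat n}$) this is absorbed into the constant, leaving the theorem's conclusion unchanged.
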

\begin{proof}
The samples consist of $\vec{X} = X_1,\dots,X_{\hat{n}}$ where $X_i$ is the number of samples drawn of coordinate $i$. Thus
\begin{align*}
 H(\vec{X})
  &= \sum_{i=1}^{\hat{n}} H(X_i \mid X_1,\dots,X_{i-1})  \\
  &= \sum_{i=1}^{\hat{n}-1} H(X_i \mid X_1,\dots,X_{i-1})  \\
  &\leq \sum_{i=1}^{\hat{n}-1} H(X_i)  \\
  &\leq \sum_{i=1}^{\hat{n}-1} \frac{1}{2}\log\left(2\pi e m A_i (1-A_i)\right) + \bigO{\frac{1}{m}}  \\
  &\leq \frac{\hat{n}-1}{2} \log\left(2\pi e \frac{m}{\hat{n}} \right) + \bigO{\frac{\hat{n}}{m}}.
\end{align*}
We used in the second line that the entropy of $X_{\hat{n}}$, given $X_1,\dots,X_{\hat{n}-1}$, is zero because it is completely determined (always equal to $m$ minus the sum of the previous $X_i$). Then, we plugged in the entropy of the Binomial distribution, as each $X_i \sim Binom(m,A_i)$. Then, we dropped the $(1-A_i)$ from each term, and used concavity to conclude that the uniform distribution $A_i = \frac{1}{\hat{n}}$ maximizes the bound. (We have glossed over a slight subtlety, that as stated the optimizer is uniform on coordinates $1,\dots,\hat{n}-1$. The full proof is to first note that \emph{any} one of the coordinates may be designated $X_n$ and dropped from the entropy sum, since it is determined by the others; in particular the largest may be. Maximizing the bound then results in the uniform distribution over all $\hat{n}$ coordinates, since any one with higher-than-average probability would be the one dropped.)
\end{proof}

To relate the entropy to the probability of success, we simply use Fano's Lemma, which is a basic inequality relating the probability of a correct guess of a parameter given data to the conditional entropy between the parameter and the data. It is proved in \emph{e.g.} Cover's text~\cite{cover2006elements}, and gives us the following lemma.
\begin{lemma} \label{lemma:fano}
The probability of $\delta$ of losing the distribution identification game is at least
 \[ \delta \geq 1 - \frac{H(\vec{X}) + 1}{\log|S|} . \]
where $\vec{X}$ is the set of input samples.
\end{lemma}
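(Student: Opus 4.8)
The plan is to invoke Fano's inequality in the standard way, with the distribution identification game playing the role of a channel-decoding problem. Write $A$ for the distribution chosen uniformly at random from $S$, $\vec{X}$ for the $m$ i.i.d. samples drawn from $A$, and $\hat{A}$ for the algorithm's guess, which is a (possibly randomized) function of $\vec{X}$. Then $A \to \vec{X} \to \hat{A}$ forms a Markov chain, and $\delta = \Pr[\hat{A} \neq A]$ is exactly the losing probability we wish to bound from below.

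First I would recall Fano's inequality in the form (see Cover~\cite{cover2006elements}) $H(A \mid \hat{A}) \leq H_b(\delta) + \delta \log(|S| - 1)$, where $H_b$ is the binary entropy function. Since $H_b(\delta) \leq 1$ and $|S| - 1 \leq |S|$, this gives $H(A \mid \hat{A}) \leq 1 + \delta \log|S|$. Next, by the data-processing inequality applied to the Markov chain $A \to \vec{X} \to \hat{A}$, we have $H(A \mid \vec{X}) \leq H(A \mid \hat{A})$, so $H(A \mid \vec{X}) \leq 1 + \delta \log|S|$.

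On the other hand, because $A$ is chosen \emph{uniformly} from $S$, we have $H(A) = \log|S|$, and therefore
\begin{align*}
 H(A \mid \vec{X}) &= H(A) - I(A;\vec{X}) \\
                   &= \log|S| - \left(H(\vec{X}) - H(\vec{X} \mid A)\right) \\
                   &\geq \log|S| - H(\vec{X}),
\end{align*}
using that $H(\vec{X} \mid A) \geq 0$. Combining the two displayed bounds yields $\log|S| - H(\vec{X}) \leq 1 + \delta \log|S|$, and rearranging gives $\delta \geq 1 - \frac{H(\vec{X}) + 1}{\log|S|}$, as claimed.

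I do not expect a genuine obstacle here: the statement is a direct packaging of Fano's inequality together with the uniform prior on $S$ and the data-processing inequality. The only points requiring a little care are that the guess $\hat{A}$ must be treated as a function of $\vec{X}$ alone (any internal randomness of the algorithm can be conditioned on, or folded into the channel, without changing the chain structure), and that $H(\vec{X})$ appearing on the right is the entropy of the sample histogram, which is precisely the quantity bounded in Lemma~\ref{lemma:learning-entropy}; the subsequent lower bounds will plug that estimate, together with the size bound on $S$ from Corollary~\ref{corollary:learning-size-S}, into this inequality.
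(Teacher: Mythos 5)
Your proposal is correct and takes essentially the same route as the paper: Fano's inequality giving $H(A \mid \vec{X}) \leq 1 + \delta \log|S|$, combined with the uniform-prior bound $H(A \mid \vec{X}) \geq \log|S| - H(\vec{X})$. The only cosmetic differences are that you derive the paper's ``recast'' form of Fano from the standard conditional statement plus data processing, and you phrase the second step via $I(A;\vec{X})$ rather than the joint-entropy chain rule; these are equivalent one-line manipulations.
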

\begin{proof}
By Fano's Lemma recast into our terminology~\cite{cover2006elements},
 \[ \delta \geq \frac{H(A \mid \vec{X}) - 1}{\log|S|} . \]
If the distribution $A$ is selected uniformly from $S$, then
\begin{align*}
 H(A \mid \vec{X}) &= H(A,\vec{X}) - H(\vec{X})  \\
                   &\geq H(A) - H(\vec{X})  \\
                   &= \log|S| - H(\vec{X}) ,
\end{align*}
which proves the lemma.
\end{proof}

Now we can start combining our lemmas.
\begin{theorem} \label{theorem:learning-lower-with-S}
To win the distribution game with probability $1-\delta$ against a set $S$ with choice parameter $\hat{n}$ requires the following number of samples:
  \[  m = \bigOmega{ \hat{n} |S|^{\frac{2(1-\delta)}{\hat{n}-1}} } .  \]
\end{theorem}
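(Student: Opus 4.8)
The plan is to chain Fano's inequality (Lemma~\ref{lemma:fano}) with the entropy bound on the samples (Lemma~\ref{lemma:learning-entropy}) and then solve the resulting inequality for $m$.

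First I would rewrite Lemma~\ref{lemma:fano} as a lower bound on the entropy of the samples: any algorithm that loses with probability at most $\delta$ must satisfy $H(\vec{X}) \ge (1-\delta)\log|S| - 1$, since distinguishing $|S|$ equally likely hypotheses forces the samples to carry nearly $\log|S|$ bits. Next I would substitute the upper bound of Lemma~\ref{lemma:learning-entropy}, $H(\vec{X}) \le \frac{\hat{n}-1}{2}\log\!\big(2\pi e\, m/\hat{n}\big) + O(\hat{n}/m)$, to obtain
\[
  \frac{\hat{n}-1}{2}\log\!\Big(\frac{2\pi e\, m}{\hat{n}}\Big) + O\!\Big(\frac{\hat{n}}{m}\Big) \;\ge\; (1-\delta)\log|S| - 1 .
\]
Then I would isolate $m$: divide by $(\hat{n}-1)/2$, exponentiate, and multiply by $\hat{n}/(2\pi e)$, giving
\[
  m \;\ge\; \frac{\hat{n}}{2\pi e}\, |S|^{\frac{2(1-\delta)}{\hat{n}-1}}\cdot 2^{-\frac{2}{\hat{n}-1}}\cdot 2^{-O\!\left(\frac{\hat{n}}{m(\hat{n}-1)}\right)} ,
\]
using that $2^{\frac{2(1-\delta)\log|S|}{\hat{n}-1}} = |S|^{\frac{2(1-\delta)}{\hat{n}-1}}$. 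For $\hat{n}\ge 2$ the factor $2^{-2/(\hat{n}-1)}$ lies in the constant interval $[\tfrac{1}{4},1)$, so it folds into $\bigOmega{\cdot}$.

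The main obstacle is the additive $O(\hat{n}/m)$ correction inherited from the Binomial entropy in Lemma~\ref{lemma:learning-entropy}: if $m$ were tiny this term could dominate the main term and break the exponentiation step. I would dispose of it by observing that we may assume $m \ge \hat{n}$: in the only case that matters the target bound is $\bigOmega{\hat n\,|S|^{2(1-\delta)/(\hat n-1)}}$ with $|S|^{2(1-\delta)/(\hat n-1)} \ge 1$, and when the theorem is applied via Corollary~\ref{corollary:learning-size-S} the set $S$ is large enough that fewer than $\hat{n}$ samples plainly cannot identify a member (so the $m<\hat n$ case either does not arise or makes the statement vacuous). Under $m\ge\hat{n}$ we have $\hat{n}/(m(\hat{n}-1))\le 1/(\hat{n}-1)\le 1$, hence $2^{-O(\hat{n}/(m(\hat{n}-1)))}$ is bounded away from $0$ and $O(\hat{n}/m)=O(1)$; absorbing these constants into the hidden constant yields $m = \bigOmega{\hat{n}\,|S|^{2(1-\delta)/(\hat{n}-1)}}$, as claimed.

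Everything besides this bookkeeping is routine arithmetic. When the theorem is subsequently combined with the size bounds of Corollary~\ref{corollary:learning-size-S}, the quantity $|S|^{2(1-\delta)/(\hat{n}-1)}$ becomes an explicit function of $\epsilon$ and $\delta$, and choosing $\hat{n}=n$ (small-$n$ regime) or $\hat{n}=\lceil 1/\epsilon^q\rceil$ (large-$n$ regime) produces the lower bounds of Theorem~\ref{theorem:learn-lower}.
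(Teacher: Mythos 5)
Your proposal is correct and follows essentially the same route as the paper: apply Fano's inequality (Lemma~\ref{lemma:fano}) together with the entropy bound of Lemma~\ref{lemma:learning-entropy}, then rearrange and exponentiate to isolate $m$, absorbing the additive $-1$ and $O(\hat{n}/m)$ terms into the hidden constant. Your extra bookkeeping on the $O(\hat{n}/m)$ correction is a slightly more careful treatment of a step the paper handles implicitly, but it is the same argument.
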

\begin{proof}
Combining Lemmas \ref{lemma:fano} and \ref{lemma:learning-entropy},
\begin{align*}
 1 - \delta
  &<    \frac{H(\vec{X}) + 1}{\log|S|}  \\
  &\leq \frac{\frac{\hat{n}-1}{2}\log\left(2\pi e \frac{m}{\hat{n}}\right) + \bigO{\frac{\hat{n}}{m}}}{\log|S|} .
\end{align*}
Rearranging,
\begin{align*}
 \log\left(2\pi e \frac{m}{\hat{n}}\right) &\geq (1-\delta)\frac{2}{\hat{n}-1}\log|S| - \bigO{\frac{1}{m}}  \\
 \implies m &\geq \bigOmega{ \hat{n} |S|^{\frac{2(1-\delta)}{\hat{n}-1}} } .
\end{align*}
\end{proof}

We are now ready to prove the actual bounds.

\begin{theorem} \label{theorem:learning-lower-our-bounds}
To win the distribution identification game (and thus, by Lemma \ref{lemma:learning-id-game}, to learn in $\ell_p$ distance) with probability at least $1-\delta$, the number of samples required is at least
 \[ m = \begin{cases} \bigOmega{\frac{1}{\epsilon^{2(1-\delta)}}}  &  \text{unconditionally}  \\
                      \bigOmega{\frac{n}{(n^{1/q}\epsilon)^{2(1-\delta)}}}  &  \text{if $p < \infty$} \\
                      \bigOmega{\frac{1}{\epsilon^q}}  &  \text{if $p < \infty$, $n \geq \bigOmega{\frac{1}{\epsilon^q}}$}  . \end{cases} \]
\end{theorem}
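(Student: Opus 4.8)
The plan is to reduce everything to Theorem~\ref{theorem:learning-lower-with-S}, which has already packaged Lemmas~\ref{lemma:fano} and~\ref{lemma:learning-entropy} into the statement that winning the distribution identification game against a set $S$ with choice parameter $\hat{n}$ requires $m = \bigOmega{\hat{n}\,|S|^{\frac{2(1-\delta)}{\hat{n}-1}}}$. By Lemma~\ref{lemma:learning-id-game} a learner to within $\epsilon$ with failure probability $\delta$ yields such a game-player, so for each of the three bounds it suffices to pick $\hat{n}\le n$ and substitute the corresponding lower bound on $|S|$ coming from the sphere-packing Corollary~\ref{corollary:learning-size-S}. Throughout, constants hidden in $\bigOmega{\cdot}$ may depend on $p$ and $\delta$, as in the theorem statement.

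For the unconditional bound I would take $\hat{n}=2$ (legal since $n\ge 2$ always). Corollary~\ref{corollary:learning-size-S} gives $|S|\ge \frac{1}{5\epsilon}$ for every $p$, and Theorem~\ref{theorem:learning-lower-with-S} with $\hat{n}-1=1$ then yields $m = \bigOmega{|S|^{2(1-\delta)}} = \bigOmega{\frac{1}{\epsilon^{2(1-\delta)}}}$. For the second bound (any $p<\infty$) I would take $\hat{n}=n$ and use the second branch of the corollary, $|S|\ge e^{p/12}\tfrac{1}{\sqrt p}\big(\tfrac{1}{4(n-1)^{1/q}\epsilon}\big)^{n-1}$. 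Raising this to the power $\tfrac{2(1-\delta)}{n-1}$, the prefactor contributes $\big(e^{p/12}/\sqrt p\big)^{\frac{2(1-\delta)}{n-1}}$, which for fixed $p$ lies between two positive constants for all $n\ge 2$ (the exponent ranges over $(0,2]$ and the base is a fixed positive number), and is absorbed into $\bigOmega{\cdot}$; what remains is $\big(\tfrac{1}{4(n-1)^{1/q}\epsilon}\big)^{2(1-\delta)} = \bigTheta{\frac{1}{(n^{1/q}\epsilon)^{2(1-\delta)}}}$ using $(n-1)^{1/q}=\bigTheta{n^{1/q}}$. Multiplying by the leading factor $\hat{n}=n$ gives $m = \bigOmega{\frac{n}{(n^{1/q}\epsilon)^{2(1-\delta)}}}$.

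For the third bound, when $p<\infty$ and $n$ is at least a large enough constant times $\frac{1}{\epsilon^q}$, I would instead choose $\hat{n}$ to be the largest integer with $\hat{n}\le n$ and $4(\hat{n}-1)^{1/q}\epsilon \le \tfrac{1}{c}$, where $c>1$ is a constant large enough (depending on $\delta$) that $c^{2(1-\delta)}\ge 2$; this forces $\hat{n}=\bigTheta{\frac{1}{\epsilon^q}}$ while keeping $|S|$ bounded away from $1$, so the identification game is nondegenerate. Then Corollary~\ref{corollary:learning-size-S} gives $|S|\ge e^{p/12}\tfrac{1}{\sqrt p}\,c^{\hat{n}-1}$, hence $|S|^{\frac{2(1-\delta)}{\hat{n}-1}}\ge \big(e^{p/12}/\sqrt p\big)^{\frac{2(1-\delta)}{\hat{n}-1}} c^{2(1-\delta)} = \bigOmega{1}$, and Theorem~\ref{theorem:learning-lower-with-S} yields $m = \bigOmega{\hat{n}} = \bigOmega{\frac{1}{\epsilon^q}}$. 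This is precisely where it pays to allow $\hat{n}\neq n$: pushing $\hat{n}$ beyond $\bigTheta{1/\epsilon^q}$ would drive $|S|$ toward $1$ and destroy the bound, whereas $\hat{n}$ can never exceed $n$.

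The step needing the most care is bookkeeping of what may legitimately be absorbed into $\bigOmega{\cdot}$: the prefactor $e^{p/12}/\sqrt p$ raised to a power shrinking with $\hat{n}$, and the $\bigO{\hat{n}/m}$ correction from the Binomial-entropy estimate inside Theorem~\ref{theorem:learning-lower-with-S}. Both are harmless for the above choices of $\hat{n}$, but one must state explicitly that the hidden constants depend on $p$ and $\delta$, and, in the third case, interpret ``$n\ge\bigOmega{1/\epsilon^q}$'' with a constant large enough to accommodate the chosen $\hat{n}$ (and, if one wants to be fully rigorous about the $\bigO{\hat n/m}$ term, dispose of the trivial sub-regime $m<\hat n$ separately).
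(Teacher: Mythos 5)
Your proposal is correct and follows essentially the same route as the paper: reduce via Lemma~\ref{lemma:learning-id-game} to Theorem~\ref{theorem:learning-lower-with-S}, substitute the packing bound of Corollary~\ref{corollary:learning-size-S}, and choose $\hat{n}=2$, $\hat{n}=n$, and $\hat{n}=\bigTheta{1/\epsilon^q}$ for the three cases respectively. Your only deviation is the more careful choice of $\hat{n}$ in the third case (shrinking it by a constant factor so that the guaranteed $|S|$ stays bounded away from $1$ and the identification game is nondegenerate), which is a sensible tightening of the paper's direct choice $\hat{n}=1/\epsilon^q$ rather than a different argument.
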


\begin{proof}
By Lemma \ref{theorem:learning-lower-with-S}, we must have
  \[  m = \bigOmega{ \hat{n} |S|^{\frac{2(1-\delta)}{\hat{n}-1}} } .  \]
Now we make three possible choices of $\hat{n}$ and, for each, plug in the lower bound for $|S|$ from Corollary \ref{corollary:learning-size-S}. First, unconditionally, we may choose $\hat{n}=2$ and the bound $|S| \geq \frac{1}{5\epsilon}$, so
\begin{align*}
  m \geq \bigOmega{ \frac{1}{\epsilon^{2(1-\delta)}} } .
\end{align*}
Now suppose $p < \infty$. For both the second and third choices, we use the bound
 \[ |S| \geq \frac{e^{\frac{p}{12}}}{\sqrt{p}} \left(\frac{1}{4(\hat{n}-1)^{1/q}\epsilon}\right)^{\hat{n}-1} . \]
We get (hiding dependence on $p$ in the Omega):
\begin{align*}
 m &\geq \bigOmega{ \hat{n} \left(\frac{1}{\hat{n}^{1/q} \epsilon}\right)^{2(1-\delta)} } .
\end{align*}
To get the second case, we may always take $\hat{n} = n$. To get the third, if $n-1 \geq \frac{1}{\epsilon^q}$, then we may always take $\hat{n} = \frac{1}{\epsilon^q}$.
\end{proof}

We can improve the lower bound for the case $p \geq 2$ using the problem of distinguishing a biased coin from uniform.
\begin{theorem} \label{theorem:learning-lower-from-coin}
To learn in $\ell_p$ distance for for any $p$ (in particular $p \geq 2$) requires at least the following number of samples:
  \[ m = \frac{1}{16}\frac{\ln\left(1 + 2(1-2\delta)^2\right)}{\epsilon^2} . \]
\end{theorem}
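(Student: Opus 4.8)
The plan is to show that learning to $\ell_p$-accuracy $\epsilon$ is at least as hard as testing uniformity to tolerance $2\epsilon$, and then to invoke the uniformity-testing lower bound already established for every support size (Theorem~\ref{theorem:test-unif-lower-useinf}), specialized to support size $n=2$ — i.e., to the biased-coin instance. Concretely, suppose we have an algorithm that learns any distribution on support size $2$ to within $\ell_p$ distance $\epsilon$ with failure probability $\delta$, using $m$ samples. Turn it into a uniformity tester: run the learner to get $\hat A$ and output ``uniform'' iff $\|\hat A - U_2\|_p \le \epsilon$. If the oracle is $U_2$ then, with probability $1-\delta$, $\|\hat A - U_2\|_p = \|\hat A - A\|_p \le \epsilon$ and we are correct; if instead $\|A - U_2\|_p \ge 2\epsilon$ then, with probability $1-\delta$, $\|\hat A - A\|_p \le \epsilon$, whence by the triangle inequality $\|\hat A - U_2\|_p \ge \|A - U_2\|_p - \|\hat A - A\|_p \ge \epsilon$ and we are again correct (the measure-zero boundary case $\|\hat A - U_2\|_p = \epsilon$ is absorbed by an arbitrarily small slack in the accuracy and does not affect the asymptotics). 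This is just the learning-to-identification reduction of Lemma~\ref{lemma:learning-id-game} applied to the two-element set $\{U_2,\ \text{a }2\epsilon\text{-biased coin}\}$; the upshot is an $\ell_p$ uniformity tester on support size $2$ with tolerance $2\epsilon$ and failure probability $\delta$, drawing the same $m$ samples.

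Then I would plug $n=2$ and tolerance $2\epsilon$ into Theorem~\ref{theorem:test-unif-lower-useinf} (which, via the $p$-norm inequality of Lemma~\ref{lemma:pnorm-ineq}, descends from the $\ell_\infty$ computation of Theorem~\ref{theorem:test-unif-lower-infty-n}): testing uniformity on support $n$ with tolerance $\eta$ requires at least $\tfrac12 \ln(1+n(1-2\delta)^2)/(n\eta^2)$ samples, so
\[
  m \;\ge\; \frac12\cdot\frac{\ln\!\big(1+2(1-2\delta)^2\big)}{2\,(2\epsilon)^2} \;=\; \frac{1}{16}\cdot\frac{\ln\!\big(1+2(1-2\delta)^2\big)}{\epsilon^2},
\]
which is exactly the claimed bound. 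It holds for every support size $n\ge 2$ (a distribution supported on two coordinates is a legal instance of the support-$n$ learning problem) and for every $p$ (the reduction stays entirely within $\ell_p$), and it dominates the $\Omega(1/\epsilon^q)$ lower bound precisely in the regime $p\ge 2$.

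I do not expect a real obstacle here: the analytic heart — bounding the total variation between $m$ samples from a randomly oriented $2\epsilon$-biased coin and $m$ samples from $U_2$, via an $\|\cdot\|_1 \le \sqrt{2^m}\,\|\cdot\|_2$ estimate and an exponential-to-logarithm step — is precisely the computation in the proof of Theorem~\ref{theorem:test-unif-lower-infty-n}. The one point that needs care is that the factor-of-two inflation of the tolerance is what produces both the constant $\tfrac1{16}$ and the coefficient $2$ inside the logarithm; in particular the bias must be $2\epsilon$, not $\epsilon$, since a smaller perturbation of $U_2$ is not $2\epsilon$-far and the reduction collapses. If one prefers a self-contained argument, one simply redoes that computation at $n=2$: letting $\vec Z$ be the length-$2^m$ distribution of $m$ samples from a uniformly random one of the two $2\epsilon$-biased coins and $\vec U$ that from $U_2$, one computes $2^m\sum_s \vec Z_s^2 = \tfrac12(1+16\epsilon^2)^m + \tfrac12(1-16\epsilon^2)^m$, so that $\|\vec Z - \vec U\|_1 \le \sqrt{2^m\sum_s\vec Z_s^2 - 1} \le \sqrt{\tfrac12(e^{16m\epsilon^2}-1)}$ using $1+x\le e^x$ and $(1-16\epsilon^2)^m\le 1$, and combining with the standard estimate $\delta \ge \tfrac12\big(1 - \|\vec Z - \vec U\|_1\big)$ yields $e^{16m\epsilon^2} \ge 1 + 2(1-2\delta)^2$, i.e. the stated bound.
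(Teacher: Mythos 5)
Your proposal is correct and follows essentially the same route as the paper: reduce learning to uniformity testing at tolerance $2\epsilon$ via the triangle inequality, then invoke the biased-coin ($n=2$) instance of the $\ell_\infty$-based testing lower bound (Theorem~\ref{theorem:test-unif-lower-infty-n}, equivalently Theorem~\ref{theorem:test-unif-lower-useinf}), which yields exactly the constant $\frac{1}{16}$ and the $2$ inside the logarithm. Your self-contained recomputation at $n=2$ matches the paper's calculation after substituting $2\epsilon$ for $\epsilon$, so there is no gap.
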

\begin{proof}
If one can learn to within $\ell_p$ distance $\epsilon$, then one can test whether a distribution is $2\epsilon$-far from uniform in $\ell_{\infty}$ distance: Simply learn the distribution and output ``uniform'' if your estimate is within $\ell_{\infty}$ distance $\epsilon$ of $U_n$ (note that if we have learned to $\ell_p$ distance $\epsilon$, then we have also learned to $\ell_{\infty}$ distance $\epsilon$). This is correct by the triangle inequality. Therefore the lower bound for $\ell_{\infty}$ learning, Theorem \ref{theorem:test-unif-lower-infty-n}, applies with $n=2$ and $2\epsilon$ substituted for $\epsilon$.
\end{proof}

\Omit{ 
\subsection{New learning bounds??}

We construct a set $S$ of distributions as follows.....??

\break

We construct the following set $S$ of distributions. We assume $n$ is even (if not, apply the construction to the first $n-1$ coordinates). Each $A \in S$ will have $n/2$ coordinates equal to $\frac{1}{n} + \frac{\epsilon}{n}$, and $n/2$ coordinates equal to $\frac{1}{n} - \frac{\epsilon}{n}$. Thus, we can identify each distribution with the $n/2$-sized subset of $\{1,\dots,n\}$ specifying the larger coordinates. $S$ will satisfy the property that every pair $A,A'$ differ on at least $n/2$ coordinates. Therefore, each $\|A-A'\|_1 \geq \frac{n}{2} \cdot \frac{2\epsilon}{n} = \epsilon$.

However, it will also be the case that for each subset of size $n/4$ of ``plus'' coordinates in $A$, there is some other $A'$ sharing this subset of coordinates.

Then, it will take $\bigOmega{\frac{n}{\epsilon^2}}$ samples because to distinguish $A$ from all other $A'$, we must identify more than $n/4$ coordinates in $A$, which requires, for each of these coordinates, $1/\epsilon^2$ samples, which is $\bigOmega{n/\epsilon^2}$ in total.

To construct this set: First, take the subsets $\{1,\dots,n/2\}$, $\{n/2+1,\dots,n\}$. Then, for each subset of $\{1,\dots,n/2\}$ of size $n/4$, take the subset consisting of those $n/4$ points, and the $n/4$ points ``not corresponding'' to them. By corresponding, we mean that $i \leftrightarrow n/2+i$. Any choice of $n/4$ points in the first half selects exactly $n/4$ points in the second half via this bijection, and those are precisely the $n/4$ points from the second half we do not include.

Claim 1: Any two distinct pairs of distributions have $\ell_1$ distance at least $\epsilon$. Proof: Consider the subsets corresponding to the two pairs. Suppose the coordinates lying in the first half differ by one coordinate. Then those in the second half? They only differ by one as well...right? Right. Shoot.j

Construct inductively?

Need to differ on n/2 elems. So in first n/2, need to differ on n/4. In first n/4, on n/8. Etc. In first two, must differ on at least one.

{1,       ..., n/2}
{n/2 + 1, ..., n}

1234
||  
  ||
| |
 | |

:

1 2
3 4
1 3
2 4

Need, for each subset of size n/4 of the first n/2, at least one other subset containing it.

1 2 3 4
5 6 7 8
1 2 7 8
3 4 5 6
1 3 6 8
2 4 5 7
1 4 6 7
2 3 5 8

Have first n/2 and last n/2.
Then for each set of size n/4 in the first half:

If we include 1, do not include n/2+1.
If we include 2, do not include n/2+2.
....

1  2  3  4  5  6  7  8
9 10 11 12 13 14 15 16
1  2  3  4  opp
5  6  7  8  opp
1  2  7  8  opp
3  4  5  6  opp
1  3  6  8  opp
2  4  5  7  opp
1  4  6  7  opp
2  3  5  8  opp

} 

\fi  

\end{document}